\newcommand{\E}{\mathbb{E}}
\newcommand{\T}{\mathbb{T}}
\renewcommand{\R}{\mathbb{R}}
\newcommand{\norm}[1]{\left\lVert#1\right\rVert}
\newcommand{\Var}{\mathrm{Var}}
\newcommand{\lmax}{\ell_{max}}
\newcommand{\Over}{\mathtt{Over}}
\newcommand{\Z}{\mathbb{Z}}
\newcommand{\eps}{\varepsilon}
\renewcommand{\epsilon}{\eps}
\newcommand{\opt}{\mathtt{opt}}
\newcommand{\cauchy}{\mathtt{cauchy}}
\title{Intermittent Inverse-Square L\'evy Walks are Optimal for Finding Targets of All Sizes}
\author{
  Brieuc Guinard 
  and
  Amos Korman 
\\
IRIF, CNRS and University~of Paris, France}
\begin{document}
\date{}
\maketitle
\thispagestyle{empty}

\begin{abstract}

{\bf L\'evy walks are random walk processes whose step-lengths follow a long-tailed power-law distribution. Due to their abundance as movement patterns of biological organisms, significant theoretical efforts have been devoted to identifying the foraging circumstances that would make such patterns advantageous. However, despite extensive research, there is currently no  mathematical proof indicating that L\'evy walks are, in any manner, preferable strategies in higher dimensions than one. Here we prove that in finite two-dimensional terrains, the inverse-square L\'evy walk strategy is extremely efficient at finding sparse targets of arbitrary size and shape. Moreover, this holds even under the weak model of intermittent detection. Conversely, any other intermittent L\'evy walk fails to efficiently find either large targets or small ones. Our results shed new light on the {\em L\'evy foraging hypothesis}, and are thus expected to impact future experiments on animals performing L\'evy walks.}
\end{abstract}

\section*{Introduction}
 L\'evy walks~\cite{Levy-review,viswanathan2008levy,ThePhysicsOfForaging} are super-diffusive random walk processes, characterised by frequent short move-steps and rarer long re-location steps. Their hallmark is a step-length distribution with a heavy power-law tail:
$p(\ell)\sim 1/\ell^\mu$, for some fixed $1<\mu\leq 3$. The efficiency of L\'evy walks as a foraging strategy was first suggested by Shlesinger and Klafter in 1986 \cite{Shlesinger}. An influential breakthrough was later established in 1999 by Viswanathan et al.~\cite{Viswanathan2}, arguing that when food patches are scarce and non-destructive, the L\'evy walk with exponent $\mu=2$,  hereafter termed {\em Cauchy walk}, consumes more food than other  L\'evy walks. This optimality claim  initiated a burst of experimental studies identifying L\'evy-like movement patterns in a myriad of biological systems~\cite{Tcell,ariel2015swarming,WeierstrassianSnails,bees,reynolds2007displaced,deer,MarinePredatorEnvironment,MarinePredator,Viswanathan2,humphries2012foraging,Boyer1743,humans2,HunterGatherer,reynolds2007free,bartumeus2003helical,Jager}, 
including multiple scenarios identifying Cauchy patterns~\cite{reynolds2007displaced,Tcell,MarinePredatorEnvironment,Jager,reynolds2007free,bartumeus2003helical}. 

The aforementioned quest for L\'evy patterns in biology was largely driven by the {\em L\'evy foraging hypothesis}~\cite{viswanathan2008levy},  stating that since Cauchy walks can optimize search efficiencies, then natural selection should have led to the adaptation of Cauchy walks foraging. Despite concerns about susceptibility to model assumptions~\cite{james2011assessing,plank2008optimal}, the optimality claim of Viswanathan et al.~\cite{Viswanathan2} has been the primary theoretical argument for the optimality of Cauchy walks, and has thus served as the basis on which the L\'evy foraging hypothesis was built. However, while this optimality claim is well-founded in one-dimensional topologies \cite{buldyrev2001average}, its validity in higher dimensions has been under debate \cite{3D}.
In particular, according to the recent result by Levernier et al.~\cite{new-paper}, Cauchy walks are not better than other L\'evy walks in the setting of \cite{Viswanathan2}. This controversy  suggests that the justification of the L\'evy foraging hypothesis may rely on different foraging assumptions than the ones in the work of Viswanathan et al.~\cite{Viswanathan2}.

In this context, it is natural to ask the following question:
 which natural conditions would make L\'evy walks, and particularly Cauchy walks, a favorable foraging strategy? Conclusive answers to this question already exist with respect to one-dimensional topologies  \cite{Lomholt,Viswanathan2}.  
For example, Lomholt et al.~\cite{Lomholt} restricted attention to  {\em intermittent} strategies \cite{Review-inter,reynolds2006intermittent}, in which detection is possible only at the short pauses between random steps and not while moving ballistically. By comparing to other intermittent strategies, the authors argued that the intermittent Cauchy walk is an optimal search strategy in finite one-dimensional terrains. Regarding two-dimensional terrains, extensive simulations by Humphries and Sims~\cite{humphries2014optimal} suggested that Cauchy walks are somewhat favorable when foraging under heterogeneous prey distributions. However, until now there has not been any rigorous argument identifying any type of circumstances in two dimensional terrains that make L\'evy walks, of any kind, advantageous. 

In this paper, we prove that in finite two-dimensional domains,  the (truncated) intermittent Cauchy walk is an optimal search strategy when the goal is to quickly find targets of arbitrary sizes. Other L\'evy walks may perform as well as the Cauchy walk, however, to do so they must be tuned to the size of the target. In fact, we prove that every intermittent L\'evy walk other than Cauchy is extremely inefficient with respect to a large range of target sizes. In contrast, and remarkably, the intermittent Cauchy walk stands out as the only intermittent process that is efficient across all target scales without the need for any adaptation.

Robustness to target scales is expected to yield fitness advantages as searching for targets that significantly vary in size is prevalent in biology, including in scenarios where L\'evy patterns have been reported. To name a few examples, this occurs when marine predators search for fish patches~\cite{MarinePredator,MarinePredatorEnvironment}, albatrosses forage on patches of squid and fish~\cite{weimerskirch2007seabirds}, bees search for assemblages of flowers~\cite{bees}, fruit flies explore their landscape \cite{reynolds2007free}, 
marine  dinoflagellate search for patches of  phytoplankton 
\cite{bartumeus2003helical}, swarming bacteria search for food concentrations  \cite{ariel2015swarming}, T-cells search for an invasion of pathogens \cite{Tcell}, and even when the eye scans the visual field \cite{eyes}. 

\section*{Model}
We consider an idealized model in which a searcher aims to quickly find a single target in a finite two-dimensional terrain with periodic boundary conditions, modelled as a square torus $\T_n=[-\sqrt{n}/2,\sqrt{n}/2]^2\subset \R^2$, whose area is $n$. 
Note that this geometry mimics both relevant situations of a single target in a finite domain and of infinitely many regularly spaced targets in an infinite domain, as considered in \cite{Viswanathan2}. Indeed, given a certain density of targets, one can find $n$ and tile the space into squares of area $n$, such that in each square there is approximately one target. Now, moving ballistically from one square to an adjacent square can be viewed as moving on the torus with periodic boundaries. Of course, the target in one square is not necessarily located in the same position as the target in the adjacent square, but this view nevertheless seems as a good approximation. This perspective is also discussed in \cite{Review-inter}.

The searcher starts at a random point of the torus,  and then moves according to some random walk strategy $X$. 
 In this strategy, the length of a step $\ell$ is chosen according to a specified distribution $p$, while its direction is chosen uniformly at random. 
In particular, for a given $\mu\in (1,3]$, a (truncated) {\em L\'evy walk} process $X^\mu$ on the torus $\T_n$ is a random walk whose step-lengths are distributed according to $p(\ell)\sim 1/\ell^\mu$, for $\ell\leq \sqrt{n}/2$. We discuss the influence of the choice of the cut-off later in the paper. For all processes, speed is assumed to be constant, hence the time duration of a step is proportional to its length. See more details in Methods.

  A {\em target} $S$ is a connected subset of the torus.  A  searcher can detect a target $S$ only when it is located within distance 1  --- the sensing range --- from the target. We consider several levels of detection that correspond to different abilities to detect targets while moving. The weakest is the {\em intermittent model}~\cite{Review-inter,reynolds2006intermittent}, which is especially relevant to the study of {\em saltatory}, or stop-and-go, foragers \cite{saltatory1,saltatory2,saltatory3}. In the intermittent setting, two modes of search alternate, and detection can only occur in one mode. In our intermittent model, one of these modes is static, corresponding to a short pause between ballistics steps where detected is enabled. Formally, the searcher detects a target, if and only if, at the end of a ballistic step, it is located at distance at most~$1$ from the target (see Fig~\ref{fig:illustration-model}). On the other extreme, we also consider the {\em continuous detection} model, in which the agent can detect a target also while moving, with a radius of detection~1. (Note that in the current paper, we focus on the time needed to find a single target, hence there is no need to specify whether the step is halted or not upon detection of a target, as in \cite{Viswanathan2}.)

 The  {\em detection  time} of a process $X$ with respect to $S$, denoted $t_{detect}^X(n,S)$, is the expected time until $X$ detects $S$ for the first time.  Expectation is taken with respect to the randomness of $X$ and the random initial location. We assume that the pause between ballistic steps takes a constant time.

 As we show, it turns out that the important parameter governing the detection time is not the area of $S$, but rather its diameter, namely, the maximal distance between any two points of $S$. Since the detection radius is 1, finding targets of smaller diameter  takes roughly the same time, hence, in what follows we assume that $D\geq 1$.

 To evaluate the search efficiency of  $X$ with respect to a  target $S$, we compare $t_{detect}^X(n,S)$ to $\opt(n,S)$, namely, the best achievable detection time of $S$.
 Importantly, when computing this optimal value, we impose no restriction on the search strategy, assuming the permissive continuous detection setting,  allowing the strategy to use infinite memory, and, furthermore, be tuned to the shape and the diameter of the target. 
  The following tight bound holds for every connected target $S$ whose diameter is $D\in[1,\sqrt{n}/2]$: 
\begin{linenomath*}
\begin{equation}
    \label{obs:lower}
  \opt(n,S)=\Theta\left({n}/{D}\right). 
\end{equation}\end{linenomath*}
The proof of Eq.~\eqref{obs:lower} appears in the Supplementary Materials, see Corollary 8. 
A sketch of the lower bound is given in Fig.~\ref{fig:illustration-lb}. For details regarding the asymptotic notation ``$\Theta$'', ``$O$'' and ``$\Omega$'', see Methods.

 We define the {\em overrun} of $X$ with respect to $S$, as an indicator of how well $X$ performs in comparison to the optimal algorithm:
\begin{linenomath*}\[ \Over^X(n,S)=\frac{t_{detect}^X(n,S)}{\opt(n,S)}
=\Theta\left ({t_{detect}^X(S)}\cdot \frac{D}{n}\right ).
\]\end{linenomath*}
 The {\em overrun} of $X$ with respect to a given diameter $D\geq 1$ is then defined  as the worst overrun, taken over all connected targets of diameter $D$, that is,  \begin{linenomath*}\begin{equation}    
 \Over^X(n,D)=\sup \{\Over^X(n,S) \mid S \mbox{~is of diameter~}D\}.
 \end{equation}\end{linenomath*} 
\begin{figure}[t]
  \centering
  \begin{subfigure}{0.49\textwidth}
  \centering
  \includegraphics[scale=0.2
  ,valign=t]{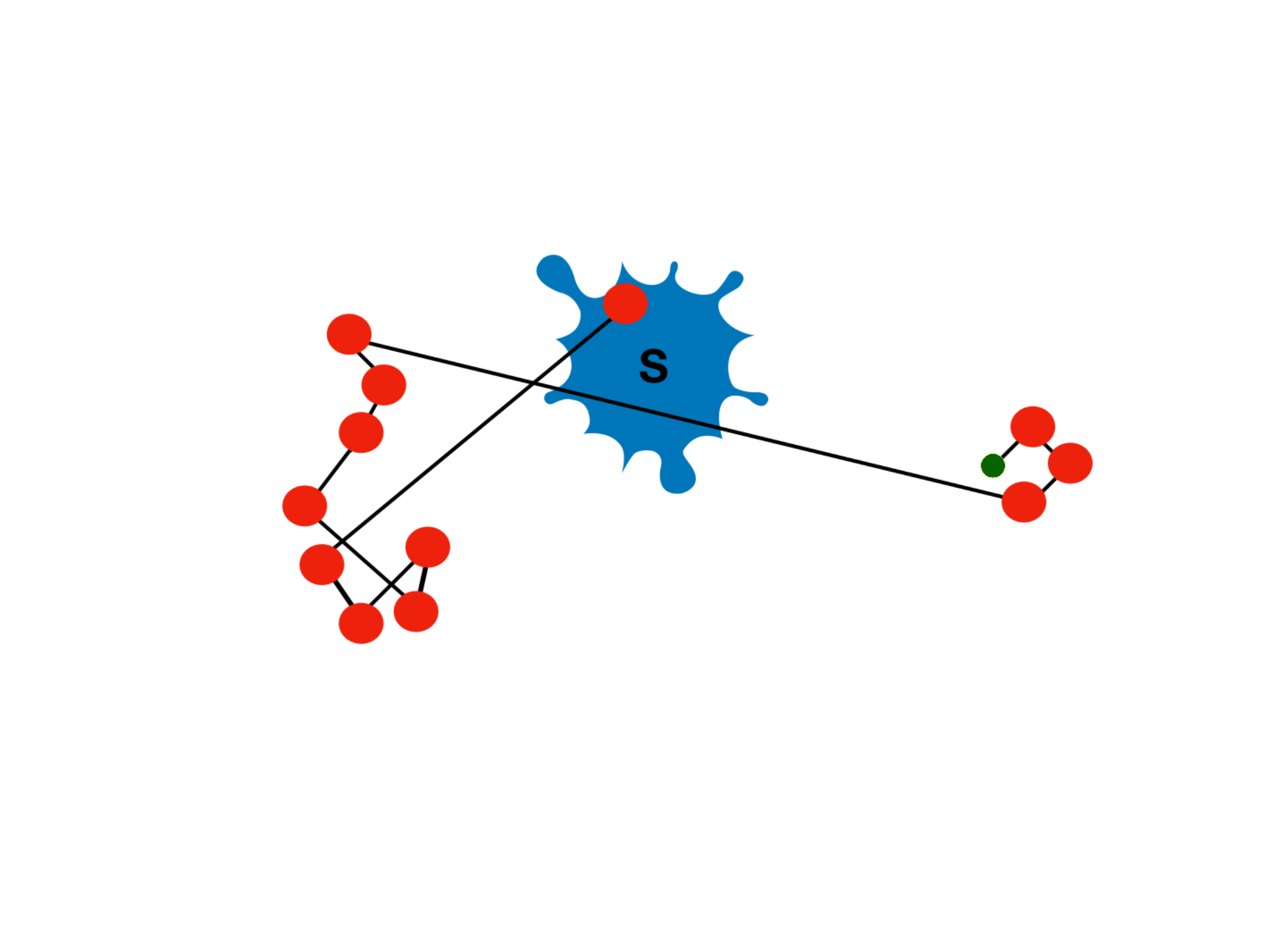}
\caption{}
\label{fig:illustration-model}
\end{subfigure}
 \begin{subfigure}{0.49\textwidth}
  \centering
 \includegraphics[scale=0.2
 ,valign=t]{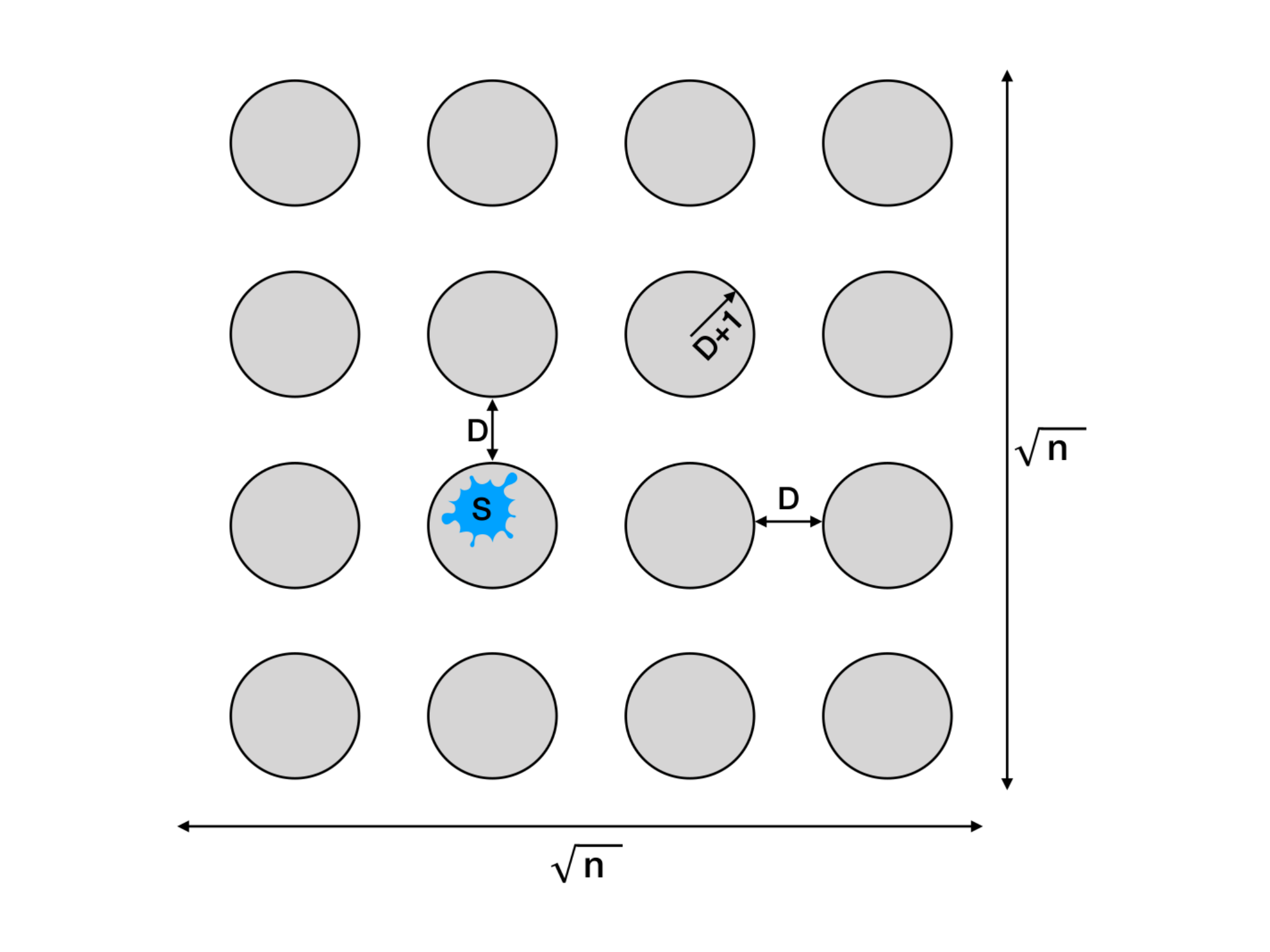}
 \caption{}
 \label{fig:illustration-lb}
 \end{subfigure}
\caption{(a) Intermittent L\'evy walk. The target $S$ is marked in dark blue. The L\'evy searcher starts at the smaller, green, point, and moves in discrete steps. A red circle signifies the area inspected at the end of a step - the disc of radius 1 around its location.  Here, the target $S$ is detected on the 12th step of the process. (b) Illustration of the lower bound proof of Eq.~\ref{obs:lower}. Consider a target $S$ (colored blue) of diameter $D$ (of any given shape). Consider roughly $n/(3D+2)^2$ discs (colored gray), so that each has radius $D+1$ and is located at distance $D$ from its neighboring discs. Furthermore, align this symmetric structure so that the $S$ touches the center of one of the discs.  Since the initial location of the searcher is uniform in the torus, with probability $\frac{1}{2}$, at least half of the discs need to be visited before detecting $S$. The time required to visit a new disc is at least the smallest distance between two discs, i.e., $D$. The detection time is therefore at least roughly $D\cdot n/(3D+2)^2\approx n/D$. }\label{fig:illustration}
\end{figure}
 In Supplementary Materials, Section B.1, 
 we demonstrate the definition of overrun, by providing a simple computation of the overrun of the intermittent process in which all step-lengths are fixed to some predetermined value. As seen there, such a strategy can be tuned to efficiently find targets of a particular size, however,  such an optimization causes inefficiency with respect to finding targets of other sizes. Hence, when targets appear in unpredictable sizes, it is unclear which intermittent strategy is best to employ.
\section*{Results}

\paragraph{The overrun of the Cauchy walk is poly-logarithmic for every target scale.}

We mathematically analyzed  the search efficiency of the intermittent Cauchy process $X^{\cauchy}$. We proved (Supplementary Materials, Section C, 
Theorem 18) 
that on the two-dimensional torus $\T_n$, the detection time of $X^{\cauchy}$ with respect to any target $S$ of diameter $D\geq 1$ is:
\begin{linenomath*}\begin{equation}\label{eq:theorem-upper}
    t_{detect}^{X^{\cauchy}}(n,S)=O\left( \frac{n\log^3 n}{D} \right).
\end{equation}\end{linenomath*}
The following result, which is an immediate corollary of Eq.~\eqref{eq:theorem-upper}, states that the overrun of the intermittent Cauchy walk with respect to {\em any} target diameter is poly-logarithmic in the size of the torus:
\begin{linenomath*}\begin{equation}\label{eq:cor-robust}
\mbox{For every $1\leq D\leq \frac{\sqrt{n}}{2}$,~~}\Over^{X^{\cauchy}}(n,D)=O(\log^3 n).
\end{equation}\end{linenomath*}
Eq.~\eqref{eq:cor-robust} is proved mathematically, and by its asymptotic nature, it holds for sufficiently large values of $n$. Using simulations (see Methods), we demonstrated that the overrun of the intermittent Cauchy walk is very small also for a relatively small domain (Fig.~\ref{fig:Cauchy_BallVsLine_n30x30}) and for a medium scale domain (Fig.~\ref{fig:Cauchy_BallVsLine_n300x300}). The overrun we see appears to be much smaller even from the poly-logarithmic upper bound of $O(\log^3 n)$. Indeed, detection time in $\T_{300^2}$ (Fig.~\ref{fig:Cauchy_BallVsLine_n300x300}) is very close to $2n/(D+1)$ for disc targets, and $4n/(D+1)$ for line targets.

As implied by Eq.~\eqref{obs:lower}, all connected targets of a given diameter $D$ share a common unconditional lower bound of $\Omega(n/D)$ for their detection time, regardless of their specific shape. Conversely, Eq.~\eqref{eq:theorem-upper} implies that such targets are found by roughly this time by the intermittent Cauchy process. These results suggest that, at least asymptotically, the right parameter to consider is indeed the diameter of the target and not, e.g., its area. We find this insight rather surprising, as, in contrast to a searcher in the continuous detection model, crossing the target's boundary by an intermittent searcher does not suffice for detection. Hence, for example, a disc-shaped target appears to be, at least at a first glance, significantly more susceptible for detection than its one-dimensional perimeter. Consistent with our claim, in Figs.~\ref{fig:Cauchy_BallVsLine_n30x30} and~\ref{fig:Cauchy_BallVsLine_n300x300} we see that the detection time of the intermittent Cauchy walk with respect to lines of diameter $D$ (orange curve) is only about twice larger than the detection time of a disc (blue curve) with the same diameter. This remains true even when the diameter is relatively large, e.g., $D=16$ in Fig.~\ref{fig:Cauchy_BallVsLine_n300x300},  despite the fact that  the area of the corresponding disc is more than $25$ times larger than the area of the domain from which a line of length $16$ can be detected, i.e., a strip of width $1$ and length $16$. A consequence of this insight suggests that a large prey aiming to hide from an efficient searcher would benefit by organizing itself in a bulging shape that minimizes its diameter.

\paragraph{Lower bounds.}
Eq.~\eqref{eq:cor-robust} establishes the small overrun of the Cauchy process across all target diameters. We next turn to study the
overrun of  L\'evy walk other than Cauchy (i.e., the cases $\mu\neq 2$). We proved (Supplementary Materials, Section B.3)  
that for $1<\mu<2$, the overrun of the corresponding intermittent L\'evy walk is large with respect to small diameter targets, and that for $2<\mu\leq 3$, the overrun is large with respect to large diameter targets.
The latter result holds also in the continuous detection model. 

In more details, we first considered the intermittent L\'evy walks with $1<\mu<2$, writing $\mu=2-\epsilon$, with $0<\epsilon<1$.
For these cases, it turns out that the expected step length is already polynomial in $n$, which means that the process is slow at finding small targets. 
Specifically, we proved (Supplementary Materials, Theorem 11) 
that the detection time of $X^\mu$ with respect to $S$ is:
\begin{linenomath*}\[t_{detect}^{X^\mu}(S)=\Omega(n^{1+\epsilon/2}/D^2). \]\end{linenomath*}
Dividing this lower bound by the unconditional optimal detection time of targets of diameter $D$, which is $\Theta(n/D)$, we obtain the following lower bound on the overrun of $X^\mu$:
\begin{linenomath*}\begin{equation}\label{eq:lower-levy-small}
    \Over^{X^\mu}(n,D)= \Omega(n^{\epsilon/2}/D).
    \end{equation}\end{linenomath*}
In particular, for targets with constant diameter, the overrun is  polynomial in $n$.

The lower bound established in Eq.~\eqref{eq:lower-levy-small} indicates that within the range $\mu\in(1,2)$, intermittent L\'evy walks with smaller values of $\mu$ (i.e., higher $\epsilon$) would lead to larger overrun, especially with respect to small diameter targets. Simulations reveal that this tendency is already apparent in small terrains  (Fig.~\ref{fig:LevyComp_n30x30_p0}, with  $n=30^2$). The tendency clearly sharpens for larger values of $n$, where the intermittent Cauchy walk can be seen to outperform  intermittent L\'evy walks with $\mu\in(1,2)$, for a large range of small target sizes (Fig.~\ref{fig:LevyComp_n300x300_p0}, with $n=300^2$).  

\begin{figure}
    \centering
    \begin{subfigure}{0.49\textwidth}
    \centering
    \includegraphics[scale=0.55]{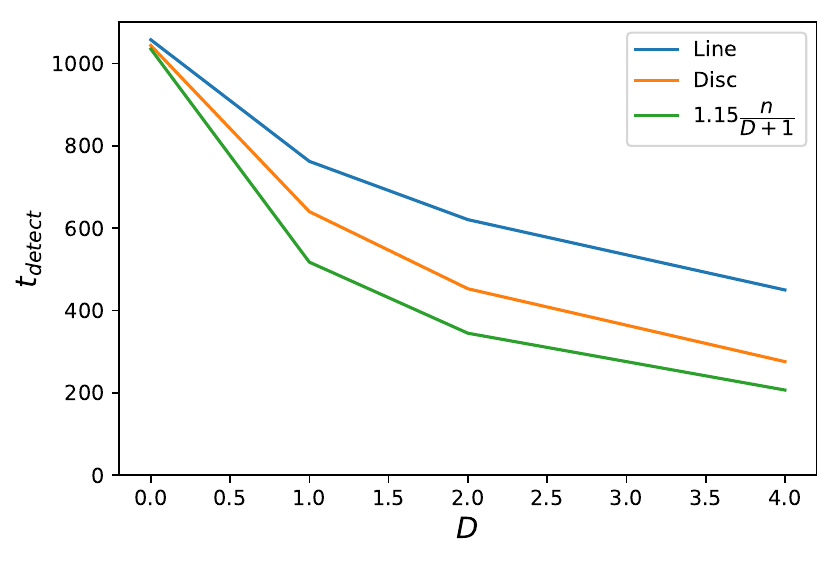}
    \caption{$n=30^2$}\label{fig:Cauchy_BallVsLine_n30x30}
    \end{subfigure}\begin{subfigure}{0.49\textwidth}
    \centering
    \includegraphics[scale=0.55]{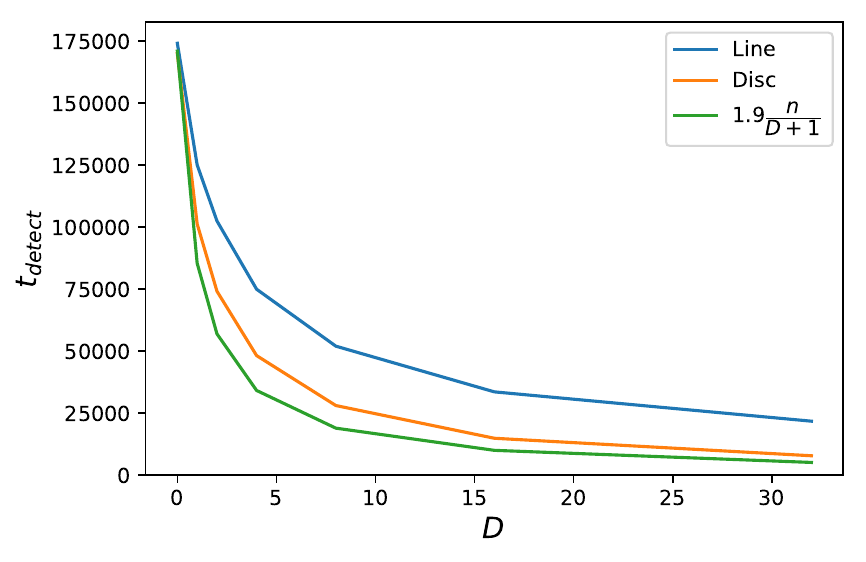}
    \caption{$n=300^2$}\label{fig:Cauchy_BallVsLine_n300x300}
    \end{subfigure}
    \caption{Detection time of the truncated Cauchy Walk on $\T_n$, searching for a disc (orange color) or line (blue color) target of diameter $D$. Green line is used for comparison.} 
    \label{fig:Cauchy_BallVsLine}
\end{figure}
Next, we consider the  L\'evy walks with $2<\mu\leq 3$, writing $\mu=2+\epsilon$ where $0<\epsilon\leq 1$. For this regime of $\mu$ we remove the intermittent assumption, allowing the strategy to perfectly detect at all times, i.e, we consider the continuous detection model. 
Intuitively, the lower bounds for these cases stem from the fact that such processes take long time to reach faraway locations. Hence, in comparison to the optimal strategy, these strategies are slow at finding large faraway targets. Specifically, we proved (Supplementary Materials, Theorem 12) 
that
\begin{linenomath*}\[t_{detect}^{X^\mu}(S)=\begin{cases}
 \Omega({n}{D^{\eps-1}}) \text{ if } \mu=2+\epsilon, \mbox{~where~} 0<\epsilon<1, \\
 \Omega(\frac{n}{\log D})  \text{ if } \mu=3. 
\end{cases}\]\end{linenomath*}
Again, dividing these lower bounds by  $n/D$, gives the following lower bounds:
\begin{linenomath*} \begin{equation}\label{eq:lower-comp-large-mu}
\Over^{X^\mu}(n,D)=\begin{cases}
 \Omega(D^{\eps}) \text{ if } \mu=2+\epsilon, \mbox{~where~} 0<\epsilon<1, \\
 \Omega(\frac{D}{\log D})  \text{ if } \mu=3. 
\end{cases}
  \end{equation}\end{linenomath*}
Comparing with intermittent L\'evy walks with $\mu\in(2,3]$, simulations demonstrate  that the intermittent Cauchy walk outperforms such walks with respect to almost all the range of target sizes, except for the very small ones (Figs.~\ref{fig:LevyComp_n30x30_p0} and~\ref{fig:LevyComp_n300x300_p0}). Moreover, the gap between the performances becomes larger when the target's diameter $D$ grows. This is consistent with the asymptotic bound in  Eq.~\eqref{eq:lower-comp-large-mu}.

\begin{figure}
    \centering
    \begin{subfigure}{0.49\textwidth}
    \centering
        \includegraphics[scale=0.55]{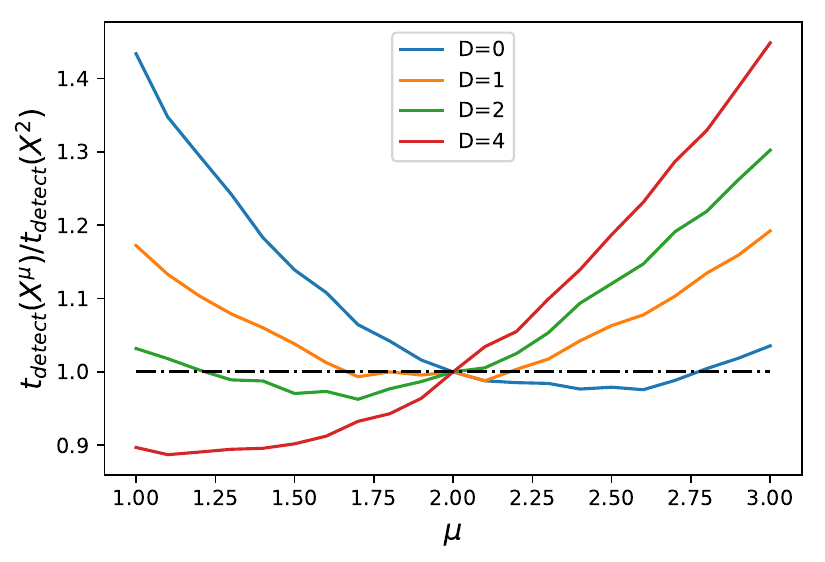}
        \caption{$n=30^2$, intermittent ($p=0$)}
        \label{fig:LevyComp_n30x30_p0}
    \end{subfigure}
    \begin{subfigure}{0.49\textwidth}
        \centering
        \includegraphics[scale=0.55]{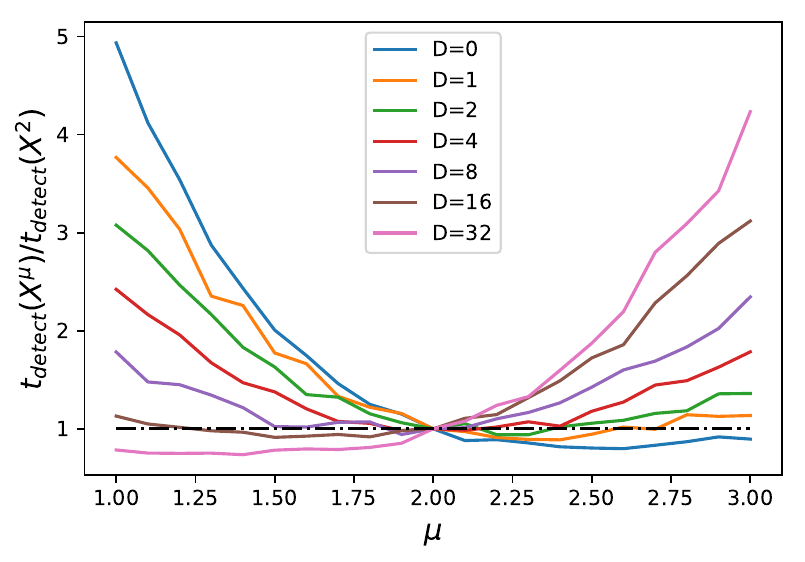}
        \caption{$n=300^2$, intermittent ($p=0$)}
        \label{fig:LevyComp_n300x300_p0}
    \end{subfigure}
    \begin{subfigure}{0.49\textwidth}    \centering
        \includegraphics[scale=0.55]{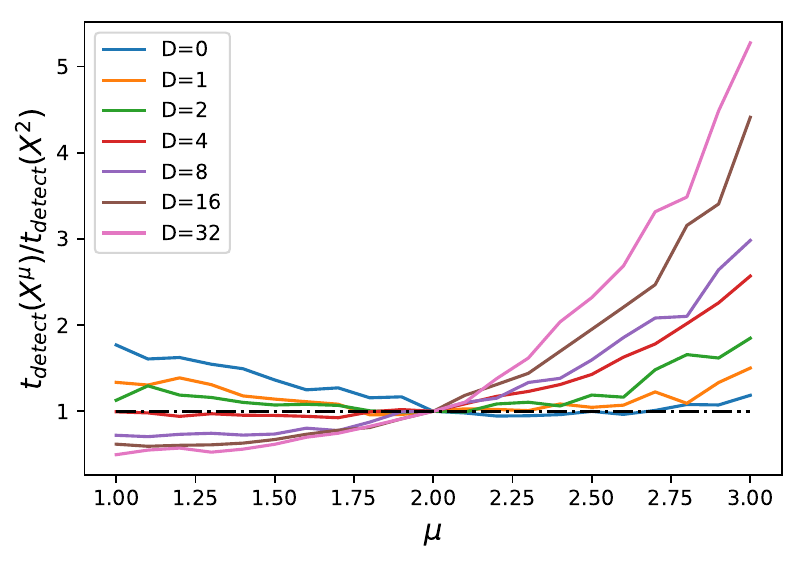}
        \caption{$n=300^2, p=0.1$}
        \label{fig:LevyComp_n300x300_p0dot1}
    \end{subfigure}
    \begin{subfigure}{0.49\textwidth}    \centering
        \includegraphics[scale=0.55]{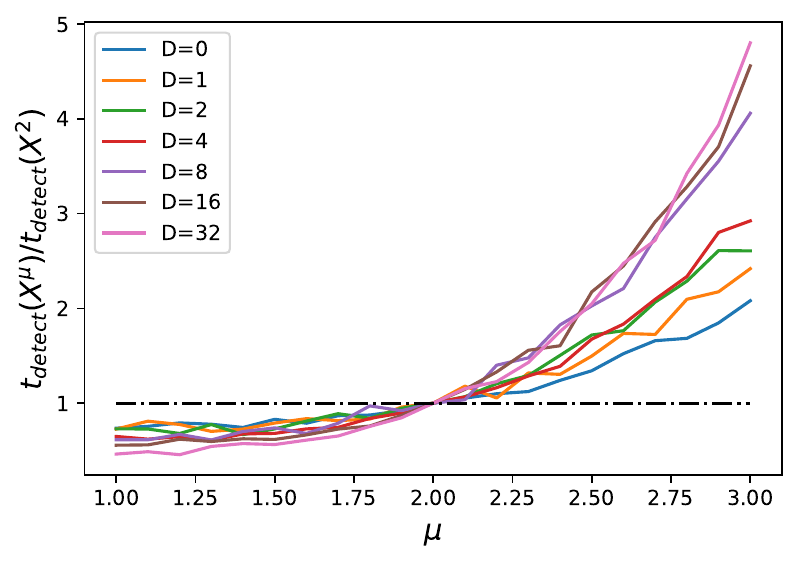}
        \caption{$n=300^2, p=1$}
        \label{fig:LevyComp_n300x300_p1}
    \end{subfigure}
    \caption{Comparing the detection times of L\'evy walks $X^{\mu}$ on $\T_n$, for different $\mu\in [1,3]$, with the detection time of the Cauchy walk ($\mu=2$). Search times are evaluated 
    with respect to disc targets of diameter $D$. For each diameter $D$, the data is normalized so that the detection time of the Cauchy Walk $X^2$ is represented by $1$. (a) and (b) consider the intermittent setting, on a relatively small torus of size $n=30^2$ (a), and a larger one of size $n=300^2$ (b). (d) considers the continuous, perfect, detection model, where the target is also  detected (with probability $p=1$) while moving ballistically, if the searcher is at distance at most $1$ from the target. 
    (c) considers the continuous, imperfect, detection model, where the target is detected, while moving, with probability $p=0.1$ for each unit of time that the searcher spends at distance at most $1$ of it, and, if the searcher is in-between steps (and located at distance at most $1$ from the target), then the target is detected with probability $1$.}
    \label{fig:LevyComp}
\end{figure}
\paragraph{On the impact of weak detection: intermittent vs. continuous.} 
 The intermittent detection model~\cite{Review-inter,reynolds2006intermittent,saltatory1,saltatory2,saltatory3} is motivated by the premise that scanning for targets is hard to effectively maintain continuously, and especially while moving fast \cite{bell2012searching,kramer2001behavioral,o1990search}. Many biological processes are considered to be intermittent, or at least partially so \cite{Review-inter}, however, the extent at which the detection is worsened by movement is often unclear.

The $O(\frac{n\log^3 n}{D})$ upper bound on the detection time of the Cauchy walk (Eq.~\eqref{eq:theorem-upper}) was established with respect to the intermittent setting. Clearly, it also holds when detection is strengthened. Since the bound on the optimal detection time, i.e., $\opt(n,D)=\Theta(n/D)$, holds also under the continuous detection model, it follows that the $O(\log^3 n)$ upper bound on the overrun of the Cauchy walk  (Eq.\eqref{eq:cor-robust}) is valid for all models of detection in-between intermittent and continuous detection. Furthermore, the established lower bounds  for $2< \mu\leq 3$ (Eq.~\ref{eq:lower-comp-large-mu}) hold also when detection is continuous. For $1<\mu<2$, however, the overrun lower bounds in Eq.~\eqref{eq:lower-levy-small} do not hold in the continuous detection model. Indeed, if detection occurs while moving, then previous simulations seem to indicate that a straight line movement, i.e., taking $\mu\approx 1$, is somewhat preferable~\cite{humphries2014optimal}.

To study the influence of the detection abilities while moving on the detection time, we also simulated the detection times of  L\'evy walks in continuous settings, in which detection while moving is weak, or imperfect, ($p=0.1$, Fig.~\ref{fig:LevyComp_n300x300_p0dot1}), and perfect ($p=1$, Fig.~\ref{fig:LevyComp_n300x300_p1}). 
Consistent with the theoretical results, the simulations reveal that 
the Cauchy walk outperforms L\'evy walks with $2<\mu\leq 3$ with respect to almost all the range of target sizes, and especially with respect to the larger targets, regardless of the detection ability while moving. 

On the other hand, for $1<\mu<2$, the overrun with respect to small targets is significantly improved when detection while moving is strengthened. Indeed, in the continuous, perfect, detection model ($p=1$, Fig.~\ref{fig:LevyComp_n300x300_p1}), we find that regardless of target size, detection is faster when $\mu$ tends to 1, as expected. In the continuous, imperfect, detection model ($p=0.1$, Fig.~\ref{fig:LevyComp_n300x300_p0dot1}), the situation is intermediate between the perfect and the intermittent settings.

\paragraph{On the influence of the cut-off.}
We first note that having a cut-off is reasonable for biological applications, which live in finite domains. Moreover, from a theoretical perspective, in contrast to the continuous detection model \cite{Viswanathan2}, the intermittent setting forces L\'evy walks with $\mu\leq 2$ to come with a cut-off, as otherwise the expected length of a step would be infinite, implying infinite expected time to find any target. As a result of the truncation, the variances of the processes we consider are also finite. However, as we proved in the Supplementary Materials (Lemma 20),
the super-diffusive property of the Cauchy walk, which was used to derive the upper bound on its detection time, still holds at least up to time $\Theta(\sqrt{n})$.

Note that by the nature of our asymptotic results, the upper bound on the overrun of the Cauchy walk (Eq.~\eqref{eq:cor-robust}) is expected to hold when taking the cut-off $\lmax=\Theta(\sqrt{n})$. Therefore, for sufficiently large values of $n$, an efficient Cauchy strategy needs only to be loosely tuned to the size of the domain.

To quantify the influence of the cut-off on moderate size domains, we simulated the Cauchy walk with different cut-offs on the torus $\T_n$ where $n=300^2$. For different diameters, Fig.~\ref{fig:Cauchy_cutoff_influence} depicts a comparison between the performances of the Cauchy walk with cut-off $\lmax=\sqrt{n}/2=150$ and those with cut-offs $\lmax\in [112,1200]$. 
 Observe that over-estimating the area $n$ of the domain by a factor $64$ (or, equivalently, its diameter by a factor $8$) does not lead to a drastic change in performance. Indeed, these Cauchy walks perform at most $1.4$  times worse than  the Cauchy walk with cut-off $\lmax=\sqrt{n}/2$. This is significantly  less than the relative values observed for other L\'evy walks in Fig.~\ref{fig:LevyComp_n300x300_p0}. We conclude therefore that the Cauchy walk performances are not very sensitive to the value of the cut-off $\lmax$. Indeed, intuitively, for $\lmax=\Omega(\sqrt{n})$, the dependency of the time performances of the Cauchy walk on $\lmax$ is logarithmic, as the average length of a step is $\Theta(\log \lmax)$.

\begin{figure}
    \centering
    \includegraphics[scale=0.55]{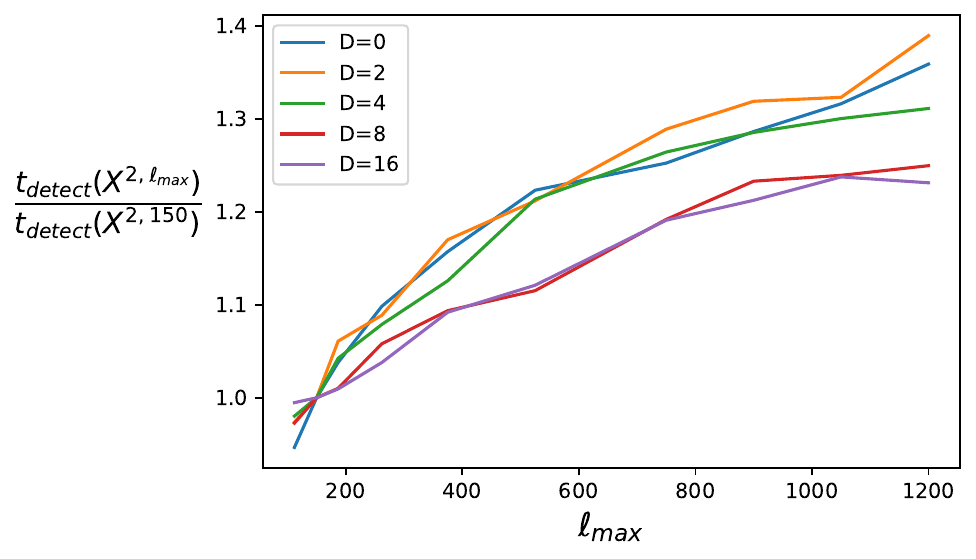}
    \caption{Effect of the cut-off $\lmax$ on the detection time of the Cauchy Walk on $\T_n$ with $n=300^2$. The target is a disc of diameter $D$. The plot is normalized with a value of $1$ for the cut-off $\lmax=\sqrt{n}/2=150$.}
    \label{fig:Cauchy_cutoff_influence}
\end{figure}

\section*{Discussion}
This paper evaluates search strategies according to their efficiency in finding targets of varying sizes \cite{peres2015mixing}. This measure is motivated by the fact that in  multiple foraging contexts, including ones for which L\'evy patterns have been reported, targets appear in varying sizes. Importantly, quickly finding targets of all sizes means that areas of all scales are visited quickly and regularly. This has significance also in other tasks than foraging, including, e.g., during eye scanpaths \cite{eyes}, viral spreading \cite{Janssen}, and movement of metastatic cancer cells \cite{metastatic}. For all these examples, intermittent patterns are of interest and L\'evy walk movement has been suggested.

We further stress that target size in the sense we consider here concerns not the physical size of the target, but rather its {\em effective size}, corresponding to the area from which it can be detected. The effective size of a target is impacted not only by its physical size, but also by the detection abilities of the searcher with respect to the environmental conditions at the vicinity of the target. For example, a rabbit in flat open space can be located from a farther distance than if it were located in a bushy area. Similarly, an eye searching for a red spot in the visual field could detect it from a larger distance if the background were, e.g.,  blue instead of pink. Thus, even when the physical size of the target is fixed, its effective size can vary.
 In our mathematical analysis, we normalized detection radius to 1, and allow for varying target sizes.  We note, however, that this modelling can also capture varying detection radii. Indeed, if the actual detection radius is $R>1$, and the physical diameter of the target is $D$, then the situation is equivalent to searching for a target of diameter roughly $D+2R$ using detection radius of 1. Hence, the established robustness of the Cauchy walk with respect to all target scales also implies robustness to both varying target scales and varying detection radii.

As proven here, intermittent Cauchy walks are almost optimal when the goal is to quickly find sparse targets of unpredictable sizes (or when the detection radius varies). Compared to L\'evy walks with $2<\mu\leq 3$, the performances of the Cauchy walk are particularly advantageous with respect to larger targets. This superiority remains true regardless of whether the detection is intermittent or not. On the other hand, compared to L\'evy walks with $1<\mu< 2$, the striking superiority of the Cauchy walk holds only when the search is intermittent.
 These results shed a new light on the L\'evy foraging hypothesis \cite{viswanathan2008levy}, and can thus initiate new directions for experimental work on animals suspected to perform L\'evy walks. One suggestion is to experimentally study the correlation between (1)  the distribution of target sizes \cite{MarinePredatorEnvironment,weimerskirch2007seabirds}, (2) the exponent $\mu$ of the corresponding L\'evy walk, and (3) the animal's detection abilities. In contexts where the L\'evy searcher aims to quickly find targets of varying sizes, we predict that the exponent $\mu$ will not be much higher than~$2$. This, for example, is consistent with the albatrosses  foraging on heterogeneous patches of squid and fish \cite{weimerskirch2007seabirds}, whose L\'evy movement patterns were estimated to have an exponent of $\mu\approx 1.25$ \cite{humphries2012foraging}.  Moreover, if, in addition, the L\'evy searcher relies on deficient detection while moving, then we predict that $\mu$ will tend to be closer to 2, giving rise to a Cauchy walk. This is consistent with fruit flies whose exploration trajectories   were reported to be both intermittent  and Cauchy \cite{reynolds2007free}. 
 Accordingly, it is worth inspecting whether other biological searchers that have been identified as executing Cauchy movement patterns, including multiple species of marine predators \cite{MarinePredator,MarinePredatorEnvironment,bartumeus2003helical},  T-cells \cite{Tcell}, and honey bees \cite{reynolds2007displaced}, have poor detection abilities while moving.
 
 To conclude, until now there was no rigorous explanation for the superiority of L\'evy walks in dimensions higher than one. This paper is the first to provide such an explanation. First, we prove  that in finite two-dimensional domains, (truncated) Cauchy walks find sparse targets of any size in almost optimal time. Moreover, under intermittent detection, any other L\'evy walk fails to efficiently find both small and large targets. This highlights the impact of weak detection on the incentive to perform Cauchy walks.

\section*{Methods}\label{sec:methods}

\subsection*{Model.} Detailed analytical proofs of the results mentioned in the main text are presented in the Supplementary Materials. We next provide further details on the model, complementing the ones mentioned in the main text.

We consider a mobile agent that searches a target over 
 the finite torus $\T_n$ identified as the set $[-\sqrt{n}/2,\sqrt{n}/2]^2$ in $\R^2$. Note that the area of the torus is  $n$.  For $x=(x_1,x_2)\in \Omega$, we consider the standard norm $\norm{x}=\sqrt{x_1^2+x_2^2}$.

 We consider a general family of random walk processes, composed of discrete randomly oriented ballistic steps. In these strategies, the length of a step $\ell$ is chosen according to a specified distribution $p$,  while its direction is chosen uniformly at random. More precisely, a {\em random walk process} on  $\T_n$ 
 is a process $X$ such that the initial position $X(0)$ is given by a uniform distribution 
 and for every integer $m\geq 0$,
\begin{linenomath*}\[\quad X(m+1)=X(m)+V(m+1),\]\end{linenomath*}
 where $(V(m))_{m\geq 1}$ are the independent and identically distributed (i.i.d) steps. The sum $X(m)+V(m+1)$ is taken modulo the torus $\T_n$. 
 The lengths of  $\ell=\norm{V(m)} $ of the steps are chosen according to some distribution $p(\ell)$, and the angle of each step is chosen uniformly at random.

 A {\em L\'evy walk}  $X^\mu$ on $\T_n$,  for a given $\mu\in (1,3]$ and maximal step $\ell_{max}=\sqrt{n}/2$, is the random walk process whose step-lengths are distributed according to
\begin{linenomath*}\begin{equation}\label{eq:cauchy-precise-distribution}p(\ell)=\begin{cases}a \text{ if } \ell\leq 1\\ a \ell^{-\mu}\text{ if } \ell \in (1,\lmax) \\
0 \text{ if } \ell \geq \lmax \end{cases},
\end{equation}\end{linenomath*}
where $a=(1+\int_{1}^{\lmax} \ell^{-\mu}d\ell)^{-1}$ is the normalization factor. 
Note that as $\mu$ grows from 1 to 3, the behaviour changes from being almost ballistic to being diffusive-like \cite{Viswanathan2}. 
When $\mu=2$, we refer to the process as a {\em Cauchy walk}. The Cauchy walk on the torus is denoted $X^{\cauchy}$.
For all processes, speed is assumed to be constant. Specifically, doing a step of length $\ell$ necessitates $\Theta(\ell)$ time units.
The scanning time is some constant $b$. Hence, the time used to take a step of length $\ell$ (including the scanning time before the step starts) is  $\Theta(\ell)+b$.

For an integer $m$, the random time $T(m)$ taken by the walk up to step $m$ is defined as \begin{linenomath*}\[ T(m)=\sum_{s=1}^m (\norm{ V(s) } +b).\]
\end{linenomath*}
As we see in the Supplementary Materials (Section A.1),
the average  length $\tau$ of a ballistic step is at least some constant. This implies that the average time spent during $m$ steps (including the scanning time), is  proportional to $m\tau$.  

\subsection*{Asymptotic notation.}

We adopt the Bachmann-Landau classical mathematical asymptotic notation (see Chapter 3 in \cite{notation}). These notations describe the limiting behaviour of functions as their 
argument, which is in our case the size of the torus $n$,  tends towards infinity. 
Specifically, consider two non-negative function $f$ and $g$ defined on the integers. The ``$O$'' notation represents an upper bound in the following sense. We say that $f(n) \in O(g(n))$ if there exists $c > 0$ and an integer $n_0$ such that $f(n) \le c \cdot g(n)$ for all $n \ge n_0$. Conversely, the asymptotic lower bound notation ``$\Omega$'' is interpreted as follows. We say that $f(n) \in \Omega(g(n))$ if there exists a constant $c > 0$ and an integer $n_0$ such that $c \cdot g(n) \le f(n)$ for all $n \ge n_0$. 
Finally, the ``$\Theta$'' notation represents a tight asymptotic bound (up to constant factors). Specifically, $f(n) \in \Theta(g(n))$ if both $f(n) \in O(g(n))$ and $f(n) \in \Omega(g(n))$. 
\subsection*{Simulations.}
Using Python, we simulated an agent performing a L\'evy walk starting at a point uniformly at random in the torus $\T_n$, searching for a target of diameter $D$ located at the center of the torus. The L\'evy distribution was approximated by its discrete equivalent $p(\ell)=a_{\mu,\lmax}\ell^{-\mu}$ for $\ell \in \{1,\dots,\lceil \lmax \rceil \}$. Aside from Figure \ref{fig:Cauchy_cutoff_influence}, we took $\lmax=\sqrt{n}/2$. 1000 runs were performed for each couple $(\mu,D)$.

\bibliography{largetargets}

\subsection*{Acknowledgments.} The authors are thankful to Ofer Feinerman for helpful discussions, and to Robin Vacus for commenting on the presentation. This work has received funding from the European Research Council (ERC) under the European Union's Horizon 2020 research and innovation program (grant  agreement No 648032).

\subsection*{Data availability statement.} Complete proofs for the theorems can be found in the Supplementary Materials. The code for reproducing the simulations can be found at \url{ https://github.com/BrieucZambrano/levy-walks}.

\subsection*{Author contribution.} Both authors contributed equally on the analysis and conceptualization. B.G. conducted the simulations, and A.K. wrote the main text.

\subsection*{Competing interests.} Authors declare having no competing interests.

\subsection*{Correspondence.}  A.K is the corresponding author, email: amos.korman@irif.fr.

\clearpage

\appendix
\centerline{\huge Supplementary Materials}

\section{Preliminary theoretical results}
For general definitions regarding the model,  see Methods in the main text.
Let us, however, recall here few definitions that will be used  extensively.

The torus  $\T_n$ is identified with the set $[-\sqrt{n}/2,\sqrt{n}/2]^2$ in the infinite plain $\R^2$.
Consider $\mu\in (1,3]$ and maximal step $\ell_{max}>1$ (possibly $\ell_{max}=\infty$). A {\em L\'evy walk} $Z^\mu$ on $\R^2$ (or $X^\mu$ on $\T_n$),  with maximal step $\ell_{max}>1$,  is the random walk process whose step-lengths are distributed according to
\begin{linenomath*}\begin{equation}\label{eq:cauchy-precise-distribution-appendix}p(\ell)=\begin{cases}a \text{ if } \ell\leq 1\\ a \ell^{-\mu}\text{ if } \ell \in (1,\lmax) \\
0 \text{ if } \ell \geq \lmax \end{cases},
\end{equation}\end{linenomath*}
where $a=(1+\int_{1}^{\lmax} \ell^{-\mu}d\ell)^{-1}$ is the normalization factor. When considering a L\'evy process on the torus, we shall take $\ell_{max}=\sqrt{n}/2$. 
Recall also, that when $\mu=2$, we refer to the process as a {\em Cauchy walk}. The Cauchy walk on the torus is denoted $X^{\cauchy}$.

In addition, we shall extensively use the following definition.
\begin{definition}
Given a target $S$, the {\em extended set} $B(S)$ is the set of nodes at distance at most 1 from $S$. Note that since the radius of detection is 1, the searcher detects $S$ if and only if it is located in $B(S)$. 
\end{definition}
\subsection{Expectations and variances of step-lengths}\label{sec:average}
\begin{claim}\label{claim:exp+var}
Consider the L\'evy walk $Z^{\mu}$ (or $X^{\mu}$) with maximal step length $\ell_{max}$. The average length  of a step (and hence the average time to take a step) is 
\begin{linenomath*}\begin{equation}\label{obs:tau}
\tau=\begin{cases} \Theta(\lmax^{2-\mu}) \text{ if } \mu\in (1,2) \\
\Theta(\log \lmax) \text{ if } \mu=2 \\
\Theta(1) \text{ if } \mu\in (2,3]\end{cases},
\end{equation}\end{linenomath*}
and the variance $\sigma^2$ and second moment $M$ of a step-length are
\begin{linenomath*}\begin{equation}\label{obs:var}
\sigma^2=\Theta(M)=\begin{cases}\Theta(\lmax^{3-\mu}) \text { if } \mu\in(1,3) \\
\Theta(\log \lmax ) \text{ if } \mu=3\end{cases}.
\end{equation}\end{linenomath*}
\end{claim}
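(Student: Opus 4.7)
The plan is to prove the claim by direct integration against the piecewise density $p(\ell)$ from Eq.~\eqref{eq:cauchy-precise-distribution}. Since $\tau = \int_0^\infty \ell \, p(\ell)\, d\ell$ and $M = \int_0^\infty \ell^2 p(\ell)\, d\ell$, and since all the integrands are power laws, everything reduces to standard power-law integrals of the form $\int_1^{\ell_{\max}} \ell^{-\beta}\, d\ell$, whose asymptotics depend on whether $\beta$ is greater than, less than, or equal to $1$. The whole proof is essentially a case analysis on $\mu$.

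The first step is to establish that the normalization constant $a$ is $\Theta(1)$ uniformly in $\ell_{\max}$. Indeed, since $\mu>1$, one has $\int_1^{\ell_{\max}} \ell^{-\mu}\, d\ell = \tfrac{1-\ell_{\max}^{1-\mu}}{\mu-1}\in\bigl[0,\tfrac{1}{\mu-1}\bigr]$, so $1/a=1+\int_1^{\ell_{\max}}\ell^{-\mu}\,d\ell$ is bounded above and below by positive constants. This lets me replace $a$ by a constant factor throughout the remaining computations.

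Next, I would compute $\tau$ by splitting the integral at $\ell=1$:
\[
\tau = a\int_0^1 \ell\, d\ell + a\int_1^{\ell_{\max}} \ell^{1-\mu}\, d\ell = \tfrac{a}{2} + a\cdot \tfrac{\ell_{\max}^{2-\mu}-1}{2-\mu},
\]
(with the usual convention that this becomes $a\log \ell_{\max}$ when $\mu=2$). The three cases of Eq.~\eqref{obs:tau} then follow directly: for $\mu\in(1,2)$ the tail integral dominates and gives $\Theta(\ell_{\max}^{2-\mu})$; for $\mu=2$ it gives $\Theta(\log \ell_{\max})$; and for $\mu\in(2,3]$ the tail integral is bounded by $\tfrac{1}{\mu-2}$, so $\tau = \Theta(1)$. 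The computation of $M$ is entirely analogous, replacing $\ell^{1-\mu}$ by $\ell^{2-\mu}$, and the critical exponent shifts from $\mu=2$ to $\mu=3$: one gets $M = \tfrac{a}{3} + a\cdot \tfrac{\ell_{\max}^{3-\mu}-1}{3-\mu}$, which yields $\Theta(\ell_{\max}^{3-\mu})$ for $\mu\in(1,3)$ and $\Theta(\log \ell_{\max})$ for $\mu=3$.

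Finally, to conclude $\sigma^2 = \Theta(M)$, I would use $\sigma^2 = M - \tau^2$ and verify in each case that $\tau^2 = o(M)$, so that the subtraction only affects lower order terms. Explicitly: for $\mu\in(1,2)$ this amounts to $4-2\mu < 3-\mu$, i.e.\ $\mu>1$, which is part of our assumption; for $\mu=2$, $\tau^2=\Theta(\log^2 \ell_{\max})$ versus $M=\Theta(\ell_{\max})$; for $\mu\in(2,3)$, $\tau^2=\Theta(1)$ versus $M=\Theta(\ell_{\max}^{3-\mu})\to\infty$; and for $\mu=3$, $\tau^2=\Theta(1)$ versus $M=\Theta(\log\ell_{\max})$. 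In every case $\tau^2/M\to 0$, so $\sigma^2=M(1-o(1))=\Theta(M)$. There is no real obstacle here; the only thing to be careful about is keeping the three regimes (sub-linear tail, logarithmic boundary, and convergent tail) separated in each of the two integrals, and remembering that $\mu=2$ is the critical exponent for $\tau$ while $\mu=3$ is the critical exponent for $M$.
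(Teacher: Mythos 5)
Your proof is correct and follows essentially the same route as the paper: direct integration of the piecewise power-law density split at $\ell=1$, case analysis on $\mu$ for the tail integral, and then $\sigma^2=\Theta(M)$ via $\tau^2=o(M)$. You spell out a few details the paper leaves implicit (that $a=\Theta(1)$, the case-by-case verification of $\tau^2=o(M)$, and the correct handling of $\mu\in(2,3]$ where the tail integral is $\Theta(1)$ rather than $\Theta(\ell_{\max}^{2-\mu})$), but the argument is the same.
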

\begin{proof}
Given the definition of $p$, the expected step-length  is
\begin{linenomath*}\[ \tau=\int_0^1 a\ell d\ell + \int_1^{\lmax} a\ell^{1-\mu}d\ell .\] \end{linenomath*}
The first term is $\frac{a}{2}$, a constant, the second term is $\Theta(\lmax^{2-\mu})$ if $\mu\neq 2$, and $\Theta(\log \lmax)$ if $\mu=2$. 
The second moment $M$ is computed likewise:
\begin{linenomath*}\begin{align*} M &=\int_0^{\lmax} \ell^2p^{\mu}(\ell)d\ell=\int_0^1 a\ell^2d\ell+\int_1^{\lmax} a\ell^{2-\mu} d\ell.
\end{align*}\end{linenomath*}
We have $\int_0^1 a\ell^2d\ell=\frac{a}{3}$ 
 for the first term, and for the second term
\begin{linenomath*}\[\int_1^{\ell_{max}} \ell^{2-\mu} d\ell = \begin{cases} \Theta(\ell_{max}^{3-\mu}) \text{ if } \mu <3 \\
\Theta(\log(\ell_{max})) \text{ if } \mu =3
\end{cases}.\] \end{linenomath*}
Now remark that $\tau^2=o(M)$, so that $\sigma^2=\Theta(M)$. 
\end{proof}

\subsection{On the connection between time and number of steps}\label{sec:wald}

To ease the notation, we drop the dependency on $n$ in several notations when it is clear from the context. Recall that we assume that the scan phase in-between ballistic step takes $\tau_{scan}=O(1)$ time. We next observe, that we may assume without loss of generality that this phase takes zero time, rather that a constant. Indeed, Claim \ref{claim:timeVSmoves} connects the detection time with the expected number of moves times the expected length of a step. If we take into consideration that the duration of the scan phase is $\tau_{scan}=O(1)$, then we would need to multiply the expected number of moves by the average time to take a step (including the pause before it) which is $\tau+\tau_{scan}$ instead of by $\tau$. As shown in Claim \ref{claim:exp+var}, we have  $\tau=\Omega(1)$ and thus $\tau+\tau_{scan}=\Theta(\tau)$. This implies that the asymptotic detection time is not affected by assuming that $\tau_{scan}=0$.

Let us denote by $T(m)$ the random time taken by the walk up to step $m$, i.e. \begin{linenomath*}\[ T(m)=\sum_{s=1}^m \norm{ V(s) } , \] \end{linenomath*}
where $V(s)=(V_1(s),V_2(s))$ is the vector chosen at step $s$, 
and $\norm{ V(s) } = \lvert V_1(s)\rvert + \lvert V_2(s) \rvert $.
Let us denote by $m^X_{detect}(S)$ the random number of steps before $X$ detects $S$ for the first time (i.e., since the searcher has a perception radius $1$, $m^X_{detect}(S)$ is the first $m$ such that $X(m)\in B(S)$). By definition, the expected time before detecting $S$ is  $t_{detect}^X(S)=\E(T(m_{detect}^X(S))$. We next argue that this time equals the average number of steps needed to hit $S$, multiplied by the average time $\tau$ needed for one step.

\begin{claim}\label{claim:timeVSmoves}
For any intermittent random walk $X$ on $\T_n$, and any set $S\subseteq \T_n$, \begin{linenomath*}\[t_{detect}^X(S) = \E(m_{detect}^X(S)) \cdot\tau,\] \end{linenomath*}
where $\tau=\E(\norm{V(1)})$ is the expected step-length.
\end{claim}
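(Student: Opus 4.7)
The plan is to deduce the identity from the Martingale Stopping Theorem (optional stopping), exactly as suggested by the authors' remark. The key point is that although $m_{detect}^X(S)$ is not a stopping time for the length sequence $(\norm{V(s)})_{s\geq 1}$ alone (as the authors note, the detection event at step $m$ depends on the angles of the previous steps, not only on their lengths), it is a stopping time with respect to the richer filtration $\mathcal F_m:=\sigma(V(1),\ldots,V(m))$ generated by the full step vectors: the event $\{m_{detect}^X(S)\leq m\}=\{\exists s\leq m:\, X(s)\in B(S)\}$ is $\mathcal F_m$-measurable since $X(s)$ is a deterministic function of $X(0)$ and $V(1),\ldots,V(s)$. (One also conditions on $X(0)$ if needed.)

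The martingale I would use is the Wald martingale $M_m:=\sum_{s=1}^{m}\bigl(\norm{V(s)}-\tau\bigr)$, with $M_0:=0$. Since $(V(s))_{s\geq 1}$ are i.i.d.\ with $\E\norm{V(s)}=\tau$, and since $V(m+1)$ is independent of $\mathcal F_m$, we have $\E[M_{m+1}\mid\mathcal F_m]=M_m$, so $(M_m)_{m\geq 0}$ is an $(\mathcal F_m)$-martingale. Applying the Optional Stopping Theorem to the stopping time $N:=m_{detect}^X(S)$ yields $\E[M_N]=\E[M_0]=0$, i.e.\ $\E\bigl[\sum_{s=1}^{N}\norm{V(s)}\bigr]=\tau\cdot\E[N]$. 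The left-hand side is exactly $\E[T(N)]=t_{detect}^X(S)$ by the definitions of $T(\cdot)$ and $t_{detect}^X(S)$, so this is the desired identity.

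The one point that needs care, and which I expect to be the main obstacle, is the verification of the hypotheses of the Optional Stopping Theorem. Two standard sufficient conditions are available. First, the martingale increments $M_{m+1}-M_m=\norm{V(m+1)}-\tau$ are bounded in absolute value by $\lmax+\tau$, which is finite since on $\T_n$ we set $\lmax=\sqrt{n}/2<\infty$ (and likewise for the $2$-scales search). Second, one needs $\E[N]<\infty$. This is intuitively clear because the process mixes on the torus: after one step the position has a density bounded below on a set of positive area (the distribution of $V(1)$ has a density that is bounded below on a neighbourhood of the origin, so the one-step transition density is bounded below on a uniform neighbourhood of the current position), and iterating this gives that from any state there is a uniform positive probability of hitting $B(S)$ within some fixed number of steps, forcing $\E[N]<\infty$ by a standard geometric tail bound. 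I would record this as a short lemma (or an explicit inline argument), noting that it applies to all the processes considered in the paper since each has a bounded step-length distribution with an atomic/uniform piece near length~$1$.

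With these two conditions verified, the Optional Stopping Theorem applies and the calculation above gives $t_{detect}^X(S)=\tau\cdot\E\bigl[m_{detect}^X(S)\bigr]$, completing the proof. The argument is completely independent of the particular choice of step-length distribution, so the same statement holds for all the intermittent processes (L\'evy, Cauchy, $2$-scales) considered in the paper, as needed in the subsequent applications.
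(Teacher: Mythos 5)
Your approach is essentially the same as the paper's: both use the Wald martingale $M_m=\sum_{s\le m}(\norm{V(s)}-\tau)$ and the Optional Stopping Theorem, and both correctly note that $m_{detect}^X(S)$ fails to be a stopping time for the length sequence alone but is one for the filtration generated by the full step vectors (the paper phrases this via the position sequence $(X(m))_m$; your filtration $\sigma(X(0),V(1),\ldots,V(m))$ is the cleaner formulation, since on the torus the step length is not recoverable from consecutive positions). The one genuine difference is in how the hypothesis $\E[N]<\infty$ is handled. You propose to \emph{prove} it via a mixing/geometric-tail argument, but the sketch you give (one-step transition density bounded below near the current position) does not apply to the $2$-scales process, whose one-step distribution is supported on two circles and has no planar density; you would need a multi-step version and some care about which processes you allow. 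The paper sidesteps this entirely: it first observes that if $\tau=\infty$ or $\E[N]=\infty$ then \emph{both} sides of the claimed identity are $+\infty$ (the latter using the technical assumption $\Pr(\text{move})>0$), so the identity holds vacuously, and then applies optional stopping under the assumption $\E[N]<\infty$. This is more economical and avoids proving a lemma that the claim does not actually require. Your use of the bounded-increment form of optional stopping ($|M_{m+1}-M_m|\le\lmax+\tau$, valid since $\lmax=\sqrt n/2$) is a legitimate alternative to the paper's condition $\E(|M_{m+1}-M_m|\mid\mathcal F_m)\le 2\tau$; both are satisfied here.
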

Claim \ref{claim:timeVSmoves} reminds of Wald's identity with respect to the lengths $(\norm{  V(s) })_s$. However, Wald's identity cannot be applied directly because $m^X_{detect}(S)$ is not a stopping step\footnote{The usual terminology is {\em stopping time}, but we employ the term ''step'' here so as to emphasis that the variable counts steps.}  for the sequence $(\norm{  V(s) })_s$. 
Instead, we prove the claim by the Martingale Stopping Theorem (that can also be used to prove Wald's identity). 

\begin{proof}
To prove the claim, note that we can suppose that $\tau<\infty$ and $\E(m^X_{detect}(S))<\infty$. Indeed, if $\tau=\infty$, then even one step takes an infinite expected time. 
Moreover, since $p(0)<1$ by definition, there exist $\epsilon,\delta>0$ such that the probability that a length of a step is at least $\epsilon$ is at least $\delta$. 
If $\E(m^X_{detect}(S))=\infty$, then, after $m$ steps, where $m$ is large, there are roughly $\delta m$ steps of length at least $\epsilon$. Hence, if there is an infinite number of steps, then with probability $1$ there is an infinite number of steps, each of which taking time at least $\epsilon$. In both cases, we have $t_{detect}^{X}(S)=\infty$, and the equality is verified. In what follows we therefore assume that both $\tau<\infty$ and $\E(m^X_{detect}(S))<\infty$.

We start the proof by defining: \begin{linenomath*}\[ W(m):=\sum_{s\leq m} (\norm{ V(s) } -\tau ).\] \end{linenomath*} The claim is proven by showing first that $( W(m))_m$ is a martingale with respect to $(X(m))_m$. Then, as $m^X_{detect}(S)$ is a stopping step for $(X(m))_m$ (i.e., the event $\{m^X_{detect}(S)=m\}$ depends only on $X(s)$, for $s\leq m$), we can apply the Martingale Stopping Theorem which gives $\sum_{s\leq m^X_{detect}(S)} (\norm{ V_s} -\tau)= 0$.
In more details, recall 
that a sequence of random variables $(W(m))_m$ is a {\em martingale} with respect to the sequence $(X(m))_m$ if, for all $m\geq 0$, the following conditions hold:
\begin{itemize}
    \item $ W(m)$ is a function of $X(0), X(1), \dots, X(m)$,
\item $\E(\lvert  W(m)\rvert )<\infty$,
\item $\E( W(m+1) \mid X(0),\dots, X(m))  =  W(m)$. 
\end{itemize}
We first claim that $W(m)$ is a martingale with respect to $X(0),X(1),\ldots$. Indeed, since $V(s)= X(s)-X(s-1)$, the first condition holds. Since $\E(\lvert  W(m)\rvert )\leq \sum_{s\leq m} \E(\lvert V_s-\tau \rvert ) \leq 2\tau m <\infty$, the second condition holds. Finally, since $ W(m+1)= W(m)+\norm{ V(m+1)} -\tau $,  we have $\E( W(m+1) \mid X(0),\dots, X(m))= W(m)+\E(\norm{ V(m+1)})-\tau= W(m)$, and hence the third condition holds as well.

Next, recall the Martingale Stopping Theorem 
which implies that $\E( W(M))=\E( W(0))$, 
whenever the following three conditions hold:
\begin{itemize}
    \item $ W(0),  W(1), \dots$ is a martingale with respect to $X(0), X(1), \dots$,
    \item $M$ is a stopping step for $X(0), X(1), \dots$ such that $\E(M)<\infty$, and 
    \item there is a constant $c$ such that $E(\lvert  W(m+1) -  W(m) \rvert  \mid X(0),\dots, X(m) ) < c$.
    \end{itemize}
Let us prove that the conditions of the Martingale Stopping theorem hold. We have already seen that the first condition holds. Secondly, we have $\E(m^X_{detect}(S))<\infty$ by hypothesis. Finally, we need to prove that $\E(\lvert  W(m+1) -  W(m) \rvert \mid X(0),\dots, X(m) )< c$ for some $c$ independent of $m$. Since $ W(m+1) -  W(m)=\norm{ V(m+1)}-\tau $, we have $\E(\lvert  W(m+1) -  W(m) \rvert \mid X(0),\dots, X(m) ) = \E(\big \lvert \norm{ V(m+1)}-\tau \big \rvert ) \leq 2\tau$. Therefore, the conditions hold and the theorem gives:
\begin{linenomath*}\[\E( W(m_{detect}(S)))=\E( W(0))=0.\] \end{linenomath*}
Hence, 
\begin{linenomath*}\begin{align*} 0&=\E( W(m^X_{detect}(S))) =  \E\left( - m^X_{detect}(S)\tau+\sum_{s\leq m^X_{detect}(S)} \norm{ V(s)} \right) \\& =  - \E(m^X_{detect}(S)) \tau +\E\left(\sum_{s\leq m^X_{detect}(S)} \norm{ V_s} \right) \\ 
& = - \E(m^X_{detect}(S)) \tau + t^X_{detect}(S), \end{align*}\end{linenomath*}
which establishes Claim \ref{claim:timeVSmoves}.
\end{proof}

\subsection{Monotonicity}\label{app:mon}
A function $f$ on $\R^2$ is called {\em radial} if there is a function $\tilde{f}$ on $\R^{+}$ such that for any $x\in \R^2$, $f(x)=\tilde{f}(\norm{ x})$. In this case we say that $f$ is non-increasing if $\tilde{f}$ is. 
The goal of this section is to prove the following.

\begin{claim}\label{cor:monotonicity-variables}
Let $X$ and $Y$ be two independent random variables with values in $\R^2$, admitting probability density functions respectively $f$ and $g$. Let $h$ be the probability density functions of $X+Y$. If $f$ and $g$ are both 
radial and non-increasing functions then so is $h$.
\end{claim}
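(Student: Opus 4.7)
The plan is to handle the two properties separately, working throughout with the convolution representation $h(z) = (f * g)(z) = \int_{\R^2} f(z-y)\, g(y)\, dy$, which gives the density of the sum of two independent random vectors.

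Radial symmetry of $h$ is essentially a change of variables. For any rotation $R$ of $\R^2$ about the origin, I would substitute $y \mapsto Ry$ (whose Jacobian is $1$) to get $h(Rz) = \int f(Rz - Ry)\, g(Ry)\, dy$; using radiality of $f$ and $g$ together with $\norm{Rz - Ry} = \norm{z - y}$ converts this back into $\int f(z-y)\, g(y)\, dy = h(z)$. So $h(Rz) = h(z)$ for every rotation $R$, i.e., $h$ depends only on $\norm{z}$.

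The main work is monotonicity, and the strategy is to reduce to the special case where $f$ and $g$ are indicator functions of discs centered at the origin, via the layer-cake representation. Since $f$ is non-negative radial and non-increasing, each super-level set $\{f > t\}$ is a disc $D_{r_f(t)} := \{x : \norm{x} \le r_f(t)\}$ centered at the origin (up to a measure-zero boundary that does not affect any integral), and $f = \int_0^\infty \mathbf{1}_{D_{r_f(t)}}\, dt$ pointwise; similarly $g = \int_0^\infty \mathbf{1}_{D_{r_g(s)}}\, ds$. Plugging these into the convolution and applying Fubini,
\[
h(z) = \int_0^\infty \!\!\int_0^\infty K_{r_f(t),\, r_g(s)}(z)\, dt\, ds,
\qquad
K_{r_1, r_2}(z) := \bigl(\mathbf{1}_{D_{r_1}} * \mathbf{1}_{D_{r_2}}\bigr)(z).
\]
Because an integral of radial non-increasing non-negative functions against a non-negative measure is itself radial non-increasing, it suffices to show that each kernel $K_{r_1, r_2}$ has both properties.

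This last step is essentially geometric: $K_{r_1, r_2}(z)$ is exactly the area of the intersection of the disc of radius $r_2$ centered at the origin with the disc of radius $r_1$ centered at $z$. This area depends only on $\norm{z}$ and the two radii by rotational symmetry, and it is non-increasing in $\norm{z}$ because translating the centers of two fixed discs further apart can only shrink (as sets) their intersection, hence also its area. This elementary observation is the only real geometric content of the argument; the only technical care required is in the layer-cake bookkeeping (measurability of $r_f, r_g$ and the application of Fubini), which I do not expect to present any obstacle.
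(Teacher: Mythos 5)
Your radiality argument is the same as the paper's. For monotonicity you take a genuinely different route: the paper symmetrizes the convolution integral directly (by radiality it restricts to the $y$-axis, reflects about the midpoint $\gamma=(x_2'-x_2)/2$, and exhibits $h(x)-h(x')$ as an integral of a product of two manifestly non-negative factors), whereas you reduce via the layer-cake decomposition to the case where $f$ and $g$ are indicators of concentric discs. That reduction is clean and standard, and the Fubini bookkeeping is indeed unproblematic.

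The one piece of geometric content you are left to establish, namely that the area of the intersection of a disc of radius $r_1$ centered at $0$ with a disc of radius $r_2$ centered at $z$ is non-increasing in $\norm{z}$, you justify by claiming that moving the second disc farther away shrinks the intersection \emph{as a set}. That claim is false: take $r_1=2$, $r_2=1$, and compare centers $(1/2,0)$ and $(1,0)$. The point $p=(3/2,1/2)$ has $\norm{p}=\sqrt{5/2}<2$ and $\norm{p-(1,0)}=\sqrt{1/2}<1$, so it lies in the intersection at the larger separation, yet $\norm{p-(1/2,0)}=\sqrt{5/4}>1$, so it is not in the intersection at the smaller separation. The intersections are not nested. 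The \emph{area} monotonicity you need is nevertheless true, but requires its own argument --- e.g., slice both discs by horizontal lines: at each height $y$ each disc contributes an interval symmetric about its own center, and the overlap length of two intervals of fixed lengths is a non-increasing function of the distance between their midpoints; integrating over $y$ gives the claim. With that repair, your proof is correct and offers a nice alternative to the paper's reflection argument.
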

We shall soon prove the claim, but first, let us give a corollary, assuming the claim is true.
\begin{corollary}[Monotonicity]\label{cor:monotonicity}
Let $Z$ be a random walk process on $\R^2$, starting at $Z(0)=0$, with step-length distribution $p$. If $p$ is non-increasing, then for any $m\geq 1$ the distribution $p^{Z(m)}$ of $Z(m)$ is radial and non-increasing. In particular, for any $x,x'$ points in $\R^2$ with $\norm{x}\leq \norm{x'}$, we have $p^{Z(m)}(x')\leq p^{Z(m)}(x)$. Furthermore, for any $x\in \R^2$ and any $m\geq 1$, $p^{Z(m)}(x) \leq \frac{1}{\pi\norm{ x}^2}.$\end{corollary}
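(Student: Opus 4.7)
My plan is to proceed by induction on $m$, using Claim~\ref{cor:monotonicity-variables} at each inductive step, and then to extract the pointwise bound $p^{Z(m)}(x)\leq 1/(\pi\norm{x}^2)$ purely from the radial monotonicity of $p^{Z(m)}$ together with the fact that it integrates to $1$.

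For the base case $m=1$, I would begin by writing down the density of a single step. Since $V(1)$ has length distributed according to $p$ and direction uniform on the circle, the polar-to-Cartesian change of variable gives $p^{V(1)}(x)=p(\norm{x})/(2\pi\norm{x})$. This is manifestly radial, and it is non-increasing in $\norm{x}$ because both $p(r)$ and $1/r$ are non-increasing on $(0,\infty)$, so their product is non-increasing as well. There is a mild integrable singularity at $0$, but this does not disturb the non-increasing property on $(0,\infty)$, nor does it prevent the application of Claim~\ref{cor:monotonicity-variables}, which requires only the existence of densities together with their radial monotonicity.

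For the inductive step, I would suppose $p^{Z(m-1)}$ is radial and non-increasing. Writing $Z(m)=Z(m-1)+V(m)$ with $V(m)$ independent of $Z(m-1)$ and having the radial non-increasing density $p^{V(1)}$ identified in the base case, Claim~\ref{cor:monotonicity-variables} immediately yields that $p^{Z(m)}$ is radial and non-increasing, closing the induction.

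Finally, for the pointwise bound I would use monotonicity to observe that $p^{Z(m)}(y)\geq p^{Z(m)}(x)$ for every $y$ in the disk $B(0,\norm{x})$, so that
\[ 1=\int_{\R^2}p^{Z(m)}(y)\,dy \;\geq\; \int_{B(0,\norm{x})} p^{Z(m)}(y)\,dy \;\geq\; \pi\norm{x}^2\cdot p^{Z(m)}(x), \]
and the claimed inequality follows on dividing by $\pi\norm{x}^2$. I do not anticipate any genuine obstacle, since Claim~\ref{cor:monotonicity-variables} already carries all the analytic content; the only points requiring care are the passage from the one-dimensional step-length density $p$ to the two-dimensional density of $V(1)$, and noting that the integrable singularity of the latter at the origin is harmless for the monotonicity argument.
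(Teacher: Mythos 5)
Your proof is correct and follows essentially the same route as the paper: induction on $m$ via Claim~\ref{cor:monotonicity-variables} for the radial monotonicity, and integrating the density over the ball $B(0,\norm{x})$ for the pointwise bound. You are slightly more careful than the paper in the base case, explicitly computing $p^{V(1)}(x)=p(\norm{x})/(2\pi\norm{x})$ and noting it is a product of two non-increasing functions — a small gap the paper glosses over with ``by hypothesis.''
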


\begin{proof}The fact that $p^{Z(m)}$ is radial and non-increasing follows from Claim \ref{cor:monotonicity-variables} by induction. Indeed, the step-length vectors $V(1), V(2),\dots$ are independent and, by hypothesis, admit a radial, non-increasing p.d.f. Hence so does $Z(m)=V(1)+V(2)+\dots+V(m)$.
The upper bound on $p^{Z(m)}(x)$ follows easily. Indeed, for $x\in \R^2\setminus \{(0,0)\}$, consider the ball $B$ of radius $\norm{x}$ and centered at $0$. We have $\int_{B}p^Z_m(y)dy\leq 1$, and by the monotonicity, $\int_{B}p^Z_m(y)dy\geq p^Z_m(x) \lvert B\rvert = p^Z_m(x) \cdot \pi \norm{ x}^2$.
\end{proof}

\begin{proof}[Proof of Claim \ref{cor:monotonicity-variables}]
Let $\theta\in [0,2\pi)$. For $x\in \R^2$, denote by $rot_{\theta}(x)$ the point obtained by  rotating $x$ around the center $0$ with an angle of $\theta$. Then, by a change of variable, we have:
\begin{linenomath*}\begin{align*} h(rot_{\theta}(x))&=\int_{y\in \R^2} f(rot_{\theta}(x)-y)g(y)dy \\
&=\int_{y\in \R^2} f(rot_{\theta}(x)-rot_{\theta}(y))g(rot_{\theta}(y))dy\\
&= \int_{y\in \R^2} f(x-y)g(y)dy=h(x), \end{align*}\end{linenomath*}
where we used in the last equality the radiality of $f$ and $g$. 
This establishes the fact that $h$ is radial. Next, we prove, in a manner inspired by Adler et al.~\cite{hunter}, 
that $h(x)$ is non-increasing with $\norm{ x}$. Since $h$ is radial, we can restrict the study to points of the non-negative $y$-axis. Let us fix $x=(0,x_2)\in \R\times\R^{\geq 0}$, and $x'=(0,x_2')\in \R\times\R^{\geq 0}$ with $x_2'\geq x_2$. Our goal is to show that $h(x)\geq h(x')$. 

Let $\gamma=\frac{x'_2-x_2}{2}$. Note that $f(0,x_2+y)\geq f(0,x'_2-y)$ for every  $y\in (-\infty,  \gamma]$. 
Define, for $y=(y_1,y_2)\in \R^2$, the function $H_{x,y_1}(y_2)=f(x-y)g(y)$. When $y_1$ is clear from the context, we shall write $H_{x}(y_2)$ instead of $H_{x,y_1}(y_2)$ for simplicity of notation. Now write, beginning with the change of variable $y_2 \mapsto -y_2$,
\begin{linenomath*}\begin{align*}
    h(x)&=\int_{y_1\in \R}\int_{y_2\in \R}  H_x(-y_2)dy_1dy_2=\int_{y_1\in \R}\int_{y_2\in \R}  H_x(-y_2-\gamma)dy_1dy_2 \\
     &=\int_{y_1\in \R}\left(\int_{y_2\geq 0}  H_x(-y_2-\gamma)dy_2 + \int_{y_2\leq 0}  H_x(-y_2-\gamma)dy_2\right)dy_1\\
     &=\int_{y_1\in \R}\int_{y_2\geq 0}  H_x(-y_2-\gamma)+  H_x(y_2-\gamma)dy_2dy_1,
\end{align*}\end{linenomath*}
and
\begin{linenomath*}\begin{align*}
     h(x')&=\int_{y_1\in \R}\int_{y_2\in \R} H_{x'}(y_2)dy_1dy_2\\
    &=\int_{y_1\in \R}\int_{y_2\geq \gamma}  H_{x'}(y_2) + \int_{y_2\leq \gamma}  H_{x'}(y_2)dy_1dy_2\\
      &=\int_{y_1\in \R}\left(\int_{y_2\geq 0}  H_{x'}(y_2+\gamma)dy_2 + \int_{y_2\leq 0}  H_{x'}(y_2+\gamma)dy_2\right)dy_1\\
      &=\int_{y_1\in \R}\left(\int_{y_2\geq 0} H_{x'}(y_2+\gamma) +  H_{x'}(-y_2+\gamma)dy_2\right)dy_1
\end{align*}\end{linenomath*}
Hence, we have that $h(x)-h(x')$ is equal to
\begin{linenomath*}\begin{align*}
     \int_{y_1\in \R}\int_{y_2\geq 0} &f( -y_1,x_2+y_2+\gamma)g(y_1,-y_2-\gamma)+ f( -y_1,x_2-y_2+\gamma)g(y_1,y_2-\gamma)\\
     &- f( -y_1,x_2'-y_2-\gamma)g(y_1,y_2+\gamma)- f( -y_1,x_2'+y_2-\gamma)g(y_1,\gamma-y_2)dy_1dy_2
\end{align*}\end{linenomath*}
Since $g$ is radial, we have $g(y_1,-y_2-\gamma)=g(y_1,y_2+\gamma)$ and $g(y_1,\gamma-y_2)=g(y_1,y_2-\gamma)$. Furthermore, using that $x_2+\gamma=x'_2-\gamma$, we obtain that $h(x)-h(x')$ is equal to:
\begin{linenomath*}\begin{align*}
     \int_{y_1\in \R}\int_{y_2\geq 0}\left (f( -y_1,x_2+y_2+\gamma)-f( -y_1,x_2-y_2+\gamma)\right)\left(g(y_1,y_2+\gamma)-g(y_1,y_2-\gamma)\right)dy_1dy_2
\end{align*}\end{linenomath*}
In this summation, since $x_2\geq 0$, $\gamma\geq 0$ and $y_2\geq 0$, we have $\lvert x_2+y_2+\gamma  \rvert \geq \lvert x_2-y_2+\gamma\rvert$ and $\lvert y_2+\gamma \rvert \geq \lvert y_2-\gamma \rvert $. Since $f$ and $g$ are non-increasing functions of the distance to $0$, both factors of the integrand are non-negative, hence the integrand is non-negative and $h(x)-h(x')\geq 0$.
\end{proof}

\subsection{Projections of 2-dimensional L\'evy walks are also L\'evy} \label{appendix:projection-proof}

Consider a L\'evy walk $Z^\mu$ with parameter $\mu$ on $\R^2$, that has maximal step length $\ell_{max}$ (including the case $\ell_{max}=\infty$). 
It is well-known that the projection of a L\'evy walk with parameter $\mu$
     on each of the axes is also a L\'evy walk with parameter $\mu$. For example, 
     the conservation of the power-law distribution under projection was established by Sims et al.~[13].
     Nevertheless, in this section, we  provide another proof for this fact, for completeness purposes, and also because [13]
     did not examine the case $\lmax<\infty$.
     
     Without loss of generality, we may consider only the projection
    $Z_1^\mu$ on the $x$-axis. Hence, we aim to prove the following.
\begin{theorem}\label{thm:projection}
The projection
    $Z_1^{\mu}$ of $Z^{\mu}$ is a L\'evy walk on $\R$ with parameter $\mu$, in the sense that the p.d.f.~of the step-lengths of $X_1^{\mu}$ is $p(\ell)\sim 1/\ell^\mu$, for $\ell\in [1,\frac{\ell_{max}}{2}]$. Furthermore, the variance of $X_1^{\mu}$ is
\begin{linenomath*}\[
\sigma'^2= \begin{cases}
\Theta({\ell^{3-\mu}_{max}}) \text{ if } \mu\in(1,3) \\
  \Theta(\log \ell_{max}) \text{ if } \mu=3\end{cases}.
\] \end{linenomath*}
    \end{theorem}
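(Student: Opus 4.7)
The plan is to write each step of $Z^\mu$ in polar form, note that projection commutes with the i.i.d.\ structure, and then compute the density of a single projected step by integrating out the angle. Write the $s$-th step as $V_s = (L_s\cos\Theta_s, L_s\sin\Theta_s)$, where $L_s$ has p.d.f.\ $p$ from Eq.~\eqref{eq:cauchy-precise-distribution} and $\Theta_s$ is independent and uniform on $[0,2\pi)$. The projected walk $Z_1^\mu(m)=\sum_{s\leq m}Y_s$ then has i.i.d.\ steps $Y_s := L_s\cos\Theta_s$, so it is a one-dimensional random walk. The symmetry $\Theta\leftrightarrow 2\pi-\Theta$ shows that $Y$ is symmetric around $0$ and that its sign is independent of its magnitude, matching the ``uniform direction'' requirement in dimension one.

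To compute the density $p_Y$ of a single step, I would condition on $\Theta=\theta$ with $\cos\theta>0$ and apply the change of variables $L=y/\cos\theta$. Using symmetry to restrict to $\theta\in[0,\pi/2]$, this yields, for $y>0$,
\[ p_Y(y) \;=\; \frac{1}{\pi}\int_0^{\pi/2} \frac{p(y/\cos\theta)}{\cos\theta}\,d\theta. \]
For $y\in[1,\ell_{max}/2]$ the argument $y/\cos\theta$ is at least $1$ throughout, and lies inside $[1,\ell_{max}]$ exactly when $\theta\leq\theta^\star:=\arccos(y/\ell_{max})$. Substituting the power-law branch of $p$ then gives
\[ p_Y(y) \;=\; \frac{a}{\pi\,y^\mu}\int_0^{\theta^\star} \cos^{\mu-1}\theta\,d\theta. \]

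The next step is to bound the $\theta$-integral uniformly in $y\in[1,\ell_{max}/2]$ between two positive constants. The restriction $y\leq\ell_{max}/2$ forces $\cos\theta^\star\leq 1/2$, hence $\theta^\star\geq\pi/3$, so the integral is at least the positive constant $\int_0^{\pi/3}\cos^{\mu-1}\theta\,d\theta$. On the other side, since $\mu-1>0$ the integrand is integrable up to $\pi/2$, giving the finite upper bound $\int_0^{\pi/2}\cos^{\mu-1}\theta\,d\theta$. Combined with the fact that $a=\Theta(1)$ whenever $\mu\in(1,3]$ is fixed, this establishes $p_Y(y)=\Theta(1/y^\mu)$ on the required range, which passes directly to the step-length density $2p_Y$ on the positive axis.

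For the variance, independence of $L$ and $\Theta$ together with $\E[\cos\Theta]=0$ and $\E[\cos^2\Theta]=1/2$ give $\E[Y]=0$ and $\Var(Y)=\E[L^2]\cdot\E[\cos^2\Theta]=M/2$, where $M$ is the second moment of $L$ recorded in Claim~\ref{claim:exp+var}; plugging in the asymptotics of $M$ yields the stated form of $\sigma'^2$. The main obstacle is the uniform control of the $\theta$-integral: the lower bound relies precisely on the truncation $y\leq\ell_{max}/2$, since otherwise $\theta^\star$ could shrink to $0$ as $y\to \ell_{max}$ and the integral would lose its uniform positivity.
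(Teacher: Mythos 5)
Your proof is correct and its core computation is the same as the paper's in disguise: the paper integrates the joint step density $p^V(x_1,x_2)$ over $x_2$ and substitutes $y=x_2/x_1$ to reach $I(x_1)=\int_0^{\sqrt{\ell_{max}^2/x_1^2-1}}(1+y^2)^{-(1+\mu)/2}\,dy$, while you parametrize the step in polar coordinates and arrive at $\int_0^{\theta^\star}\cos^{\mu-1}\theta\,d\theta$ — under $y=\tan\theta$ these two integrals coincide term for term, so the density argument is equivalent up to a change of variables. Where you genuinely depart (and improve on) the paper is the variance: the paper re-integrates $\int_0^{\lmax}x_1^i p^{l_1}(x_1)\,dx_1$ using the density asymptotics it just derived, whereas you exploit the independence of the radius $L$ and angle $\Theta$ to get $\Var(Y)=\E[L^2]\,\E[\cos^2\Theta]=M/2$ directly from Claim~\ref{claim:exp+var}, which is cleaner and also yields the exact constant. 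One small slip in the prose: the symmetry that makes $Y=L\cos\Theta$ symmetric about $0$ and decouples its sign from its magnitude is $\Theta\leftrightarrow\pi-\Theta$ (or $\Theta\leftrightarrow\Theta+\pi$), not $\Theta\leftrightarrow 2\pi-\Theta$, which leaves $\cos\Theta$ unchanged; the latter is what you actually use, correctly, to fold the integral onto $[0,\pi/2]$. The conclusion is unaffected.
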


\begin{proof}It is clear that $Z_1^\mu$ is also a random walk that moves incrementally, with the increments between $Z_1^\mu(m)$ and $Z_1^\mu(m+1)$ being the projection $Z_1(m+1)$ of the chosen 2-dimensional vector $V(m+1)=Z^\mu(m+1)-Z^\mu(m)$. These projections are i.i.d.~variables as the vectors $(V(m))_m$ are i.i.d.~variables, and their signs are $\pm$ with equal probability. Hence, all that needs to be verified is that $l_1:=\lvert V_1(1) \rvert $ has a L\'evy distribution with parameter $\mu$.

Let $V$ be one step-length drawn according to a L\'evy distribution $p^{\mu}$. Recall that
\begin{linenomath*}\[ p^{\mu}(\ell)=\begin{cases} a_{\mu} \text{ if } \ell \leq 1 \\
a_{\mu}\ell^{-\mu} \text{ if } \ell\in [1,\lmax) \\
0 \text{ if } \ell \geq \lmax
\end{cases}, \] \end{linenomath*}
where $a_{\mu}$ is the normalization factor, with $a_{\mu}=\frac{1}{1+\int_{1}^{\ell_{max}} \ell^{-\mu} d\ell }=\frac{1}{1+\frac{1-\ell_{max}^{1-\mu}}{\mu-1}}\in [1-\frac{1}{\mu},$.
Hence the distribution of $V=(V_1,V_2)\in \R^2$ is
\begin{linenomath*}\begin{equation} p^V(x)= \frac{1}{2\pi}\frac{1}{\norm{ x} } p^{\mu}(\norm{ x})=\begin{cases}\frac{a_{\mu}}{2\pi}\norm{ x}^{-1} \text{ if } \norm{ x}  \leq 1\\ \frac{a_{\mu}}{2\pi}\norm{ x} ^{-\mu-1}\text{ if } \norm{ x}\in [1,\ell_{max}) \\
0 \text{ if } \norm{x} \geq \lmax\end{cases}. \label{eq:law-step-R2}
\end{equation}\end{linenomath*}
For $x_1\in (0,\ell_{max})$, we have
\begin{linenomath*}\begin{align*}
     p^{l_1}(x_1)&=2\int_{0}^{\sqrt{\ell_{max}^2-x_1^2}} p^V(x_1,x_2)dx_2 \\
     &=\frac{2a_{\mu}}{2\pi} \int_{0}^{\sqrt{\ell_{max}^2-x_1^2}} \mathbf{1}_{\norm{x}< 1}\frac{1}{\norm{x}} + \mathbf{1}_{\norm{x}\geq 1}\frac{1}{\norm{x}^{1+\mu}}dx_2,
\end{align*}\end{linenomath*}
where $x=(x_1,x_2)$. If $|x_1|\geq 1$, then $\norm{x}\geq 1$ for any $x_2\in \R$, so that

\begin{linenomath*}\begin{align*}
     p^{l_1}(x_1) &=\frac{a_{\mu}}{\pi} \int_{0}^{\sqrt{\ell_{max}^2-x_1^2}} \frac{1}{(x_1^2+x_2^2)^{\frac{1+\mu}{2}}}dx_2 \\
     &=\frac{a_{\mu}}{\pi} \frac{1}{x_1^\mu} I(x_1),
\end{align*}\end{linenomath*}
where
\begin{linenomath*}\[ I(x_1):= \int_{0}^{\sqrt{\frac{\ell_{max}^2}{x_1^2}-1}} \frac{1}{(1+y^2)^{\frac{1+\mu}{2}}} dy. \] \end{linenomath*}
For any $x_1\in (1,\ell_{max})$, we have $I(x_1)\leq \int_{0}^{\infty} \frac{1}{(1+y^2)^{\frac{1+\mu}{2}}} dy=O(1)$ since $\frac{1}{(1+y^2)^{\frac{1+\mu}{2}}}=\Theta(y^{-\mu})$, for large $y$, and this function of $y$ is integrable as $\mu>1$. Furthermore, if $\lvert x_1\rvert \leq \ell_{max}/2$, we have $I(x_1)\geq \int_{0}^{1} \frac{1}{(1+y^2)^{\frac{1+\mu}{2}}} dy$ which is a positive constant. Hence, if $\lvert x_1 \rvert \in (1,\ell_{max}/2)$, we have
\begin{linenomath*}\begin{equation}\label{eq:projectmedium}
  p^{l_1}(x_1) = \Theta\left ( \frac{1}{x_1^\mu}\right),
\end{equation}\end{linenomath*}
and for $\ell_{max}/2\leq x_1\leq \ell_{max}$, we have 
\begin{linenomath*}\begin{equation}\label{eq:projectlarge}
p^{l_1}(x_1) = O\left(\frac{1}{x_1^\mu}\right).
\end{equation}\end{linenomath*}
Hence, the projection of the L\'evy walk on the axes are L\'evy-like, in the sense that their step-lengths distributions generally follow a power-law of same exponent~$\mu$.
The expected length, second moment and variance of one projected step are computed as in Claim \ref{claim:exp+var}. Indeed write, for $i\in \{1,2\}$,
\begin{linenomath*}\[ \int_0^{\lmax} x_1^i p^{l_1}(x_1)dx_1=\Theta\left(\int_0^1 x_1^i p^{l_1}(x_1)dx_1+\int_1^{\lmax/2} x_1^{i-\mu} dx_1 +\int_{\lmax/2}^{\lmax} x_1^ip^{l_1}(x_1) dx_1 \right).\] \end{linenomath*}
We have $\int_0^1 x_1^i p^{l_1}(x_1)dx_1 \leq 1$. Also, it is easy to verify from Eq.~\eqref{eq:projectmedium} and \eqref{eq:projectlarge} that the third term is dominated by the second term, which in turn, is $\Theta(\int_1^{\lmax} x_1^{i-\mu} dx_1)$. Hence, the expected length, second moment and variance of one projected step  are of the same order as those of the non-projected steps given by Claim \ref{claim:exp+var}, which concludes the proof of Theorem \ref{thm:projection}.
\end{proof}

\section{Lower Bounds}\label{sec:lower}
\subsection{Random walk with a fixed step-length}\label{sec:illustration}

In order to illustrate the definition of the overrun, we provide here a simple computation of the overrun of the intermittent process $X$ in which all step lengths are some pre-determined fixed integer $\ell$. Note that the case $\ell=1$ corresponds to the simple random walk, and that taking $\ell=\Theta(\sqrt{n})$ may be viewed as a ballistic strategy.
Consider a disc target of diameter $D<\sqrt{n}/2$. Since the searcher starts at a random point, with constant probability, the target is located at a distance of at least $\sqrt{n}/4$ from the initial location of the searcher. In this case, merely traversing this distance by the random walk process requires  $\Omega((\sqrt{n}/\ell)^2)=\Omega(n/\ell^2)$ steps on expectation, and hence consumes $\Omega(n/\ell)$ time on expectation. This implies that $\Over^X(n,D)=\Omega(D/{\ell})$.  Furthermore, as illustrated in the main text (Fig.~1b), and as shown formally in the next section, there are $\Omega(n/D^2)$ possible locations of the target. Since the agent must, on average, visit at least half of those, it will overall need $\Omega(n\ell/D^2)$ time to find the target on expectation, since each step takes  $\ell$ time. Thus, we also have $\Over^X(n,D)=\Omega(\ell/D)$.
Altogether, these arguments imply that $\Over^X(n,D)=\Omega(\max\{\ell/D, D/\ell\})$. 
While $\ell$ can be tuned to optimize the overrun with respect to a specific value of $D$, if we know only an upper bound $D_{max}$ on the value of $D$ then the overrun would be large with respect to either $D=1$ or $D=D_{max}$. Specifically, for $D=1$ we have $\Over^X(n,1)=\Omega(\ell)$, while for $D=D_{max}$, we have $\Over^X(n,D_{max})=\Omega(D_{max}/\ell)$. Hence, for at least one value of $D$ among the two, we have $\Over^X(n,D)=\Omega(\sqrt{D_{max}})$. In particular, if $D_{max}=n^\delta$ for some $\delta>0$ then the overrun is polynomial in $n$. 
\subsection{General lower bounds}\label{sec:generallowerbounds}
We prove here a general proposition that holds for any search process $X$ on the torus whose speed is constant (i.e., it takes $O(\ell)$ units of time to do a ballistic step of length $\ell$). We may assume without loss of generality that the speed is normalized to 1. Note also that, since we aim at a lower bound, we can suppose, without loss of generality, that the scan time in-between steps is $0$.

 We next define a quantity, termed $T_d$, which will be used to lower bound the time needed to detect an extended target $B(S)$ at distance $d$ or more. Formally, we distinguish between two cases, according to the given process $X$. 
 \begin{itemize}
     \item 
 If $X$ is an intermittent random walk, 
 we let $T_d$ be the expected time needed before the end point of a step is at distance at least $d$ from the initial location.
 \item Otherwise, we simply define $T_d=d$. 
  \end{itemize}
 
\begin{claim}\label{claim:lb-time-to-go-to-distance-D}
Let $X$ be any search process on the torus. Consider any target $S$ of diameter $D<\sqrt{n}/6-1$.  
The expected time to detect $S$ is  $\Omega (n \frac{T_{D}}{D^2})$.
\end{claim}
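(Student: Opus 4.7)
The plan is to turn the picture in Figure~\ref{fig1}(b) into a calculation. By translation invariance of the torus and the uniformity of $X(0)$, I reduce to the equivalent setting where $X(0)=0$ is fixed and the target is $S+t$ for a uniformly random translate $t\in\T_n$. I then pack the torus with a square grid of $N=\Omega(n/D^2)$ disjoint discs $c_1,\dots,c_N$ of radius $D+1$, with pairwise boundary separation at least $D$; this is possible because $D<\sqrt{n}/6-1$. Each disc is large enough to contain any translate of the extended target $B(S)$ (of diameter $D+2$), so with constant probability $p=\Omega(1)$ over $t$ the extended target $B(S+t)$ is contained in some disc $c_{i^\star}$, and by the periodicity of the grid the conditional law of $i^\star$ given this event is uniform on $\{1,\dots,N\}$.

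Next, for each $i$ let $\sigma_i$ denote the first time the trajectory of $X$ enters $c_i$ (meaning first step-endpoint in $c_i$ for intermittent processes; first continuous position in $c_i$ otherwise). On the conditioning event, the detection time of $S+t$ is at least $\sigma_{i^\star}$. Because $i^\star$ is uniform on $\{1,\dots,N\}$, its rank in the order statistics $\sigma_{(1)}\leq\dots\leq\sigma_{(N)}$ is itself uniform, and hence $\sigma_{i^\star}\geq\sigma_{(N/2)}$ holds with conditional probability at least $1/2$. Combining,
\[
t_{detect}^X(n,S)\;\geq\;\tfrac{p}{2}\,\E[\sigma_{(N/2)}].
\]

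The crux is then to lower bound $\E[\sigma_{(N/2)}]$ by showing $\E[\sigma_{(k+1)}-\sigma_{(k)}]\geq T_D$ for every $k$. Between two successive distinct disc visits the process's position displaces by at least $D$ (since distinct discs are $D$-separated), so by the strong Markov property applied at $\sigma_{(k)}$, combined with the definition of $T_D$, the expected time until the next visit is $\geq T_D$ in each of the three process categories: endpoint-displacement time for intermittent random walks, continuous-position-displacement time for non-intermittent random walks, and the trivial bound $T_D=D$ for any other unit-speed process (using only the speed constraint). Telescoping yields $\E[\sigma_{(N/2)}]\geq(N/2-1)\,T_D=\Omega(nT_D/D^2)$, which gives the time bound. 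The step lower bound follows from the same packing: each step can first-visit at most one disc (endpoints are single points and the discs are disjoint), so at least $N/2=\Omega(n/D^2)$ steps must elapse before $\sigma_{(N/2)}$.

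The main technical obstacle is the strong-Markov step in the case of a random walk with unbounded step lengths (in particular a L\'evy walk), where the step that carries the process into $c_{(k)}$ may overshoot the disc by an arbitrary amount. One must verify that, starting from the post-entry configuration at $\sigma_{(k)}$, the expected further time to displace by distance $D$ is still at least $T_D$; the cleanest fix is to anchor each $\sigma_{(k)}$ to the end of its containing step, with the resulting $O(1)$ slippage absorbed into the $\Omega(\cdot)$ constant.
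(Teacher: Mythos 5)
Your proof follows essentially the same route as the paper's: pack the torus with $\Theta(n/D^2)$ well-separated discs of radius $D+1$, use the uniform initial location to make the target-containing disc uniform among them, and bound the gap between consecutive newly visited discs by $T_D$, yielding both the step and time lower bounds. The paper aligns its grid so that $B(S)$ is always contained in exactly one disc (avoiding your constant-factor loss from the random translation), and it glosses over the same strong-Markov/entry-configuration subtlety that you flag, but the decomposition, key quantities, and conclusion are the same.
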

\begin{proof}
Consider a target $S$ of diameter $D$ and of an arbitrary shape. Instead of considering that $S$ is fixed and that the initial location $X(0)$ is chosen u.a.r, we may assume without loss of generality that $X(0)$ is fixed, say at the origin, and that the center of mass $u^\star$ of $S$ is chosen 
uniformly at random in the torus. 

Let us first construct a grid with  $s\times s$ nodes, where $s=\lfloor \sqrt{n}/(3D+2)\rfloor$. Note that since $D<\sqrt{n}/6-1$, we have $s\geq 2$. To make the grid symmetric, we let the distance between two neighboring nodes be precisely $\sqrt{n}/s$.
We next align the grid so that $u^\star$ is a node of the grid, and construct a disc of radius $D+1$ around each node. Note that the number of discs is $M=s^2=\Omega(n/D^2)$, and that the distance between any two discs is at least $D$. See Figure 1(b) in the main text. 
Furthermore, note that the disc $U^\star$ corresponding to $u^\star$ fully contains the extended target $B(S)$. Let us therefore lower bound the time until visiting $U^\star$ for the first time. This will serve as the desired lower bound for detecting~$S$.

Assume that the information about the collection of discs is given to the searcher. We may assume this, since it can only decrease the best detection time.
Because the location of $S$ in chosen u.a.r in the torus, from the perspective of the searcher, each of the discs has an equal probability to be $U^\star$. 
It follows that with probability~$1/2$, at least half of the discs are visited, before the searcher visits $U^\star$. Since the discs are separated by distance of at least $D$, we immediately get that the expected time until visiting $U^\star$ is $\Omega(MD)=\Omega(n/D)$, which is the desired claim when $X$ is not an intermittent random walk (and hence $T_D=D)$. 

Let us next consider the case that $X$ is an intermittent random walk. The arguments are similar, yet slightly more subtle. 
We aim to lower bound the time until visiting $U^\star$ for the first time, where by visiting a disc, we mean that the end of a ballistic step of $X$ is in that disc. For this purpose, we may assume that the process terminates when it visits 
$U^\star$. Let $U_1,U_2,\ldots$ denote the newly visited discs, in order of visitation, with all the $U_i$ distinct. Let $A_i$ be the event that $U^\star\notin \{U_1,\ldots, U_i\}$. Note that $\Pr(A_i)=1-\frac{i}{M}$.
Let $t_i$ denote the time from visiting $U_i$ (for the first time) until visiting $U_{i+1}$ (for the first time), in the event that $A_i$ occurs. If the event $A_i$ does not occur, we say that $t_i=0$. 
 The time before visiting $U^\star$ can therefore be written as $\sum_{i=1}^{M-1} t_i$. Furthermore, we have $\E(t_i)=\E(t_i\mid A_i)\Pr(A_i)$. Hence, the expected time before visiting $U^\star$ is:
\begin{linenomath*}\[
\sum_{i=1}^{M-1} \E(t_i\mid A_i) \Pr(A_i).
\]\end{linenomath*}
Now recall that $X$ is an intermittent Markovian process, and that $A_i$ corresponds to an event that is relevant up to (and including) the detection of $U_i$. Hence,  
$\E(t_i\mid A_i)$ is lower bounded by the minimal expected time that the intermittent random walk $X$, starting at some point $u\in U_i$, visits another disc, where the minimization is taken w.r.t $u\in U_i$. Since discs are separated by distance of at least $D$, 
the process starting at any such $u$ needs to visit a disc at distance at least $D$.
It therefore follows that $\E(t_i\mid A_i)\geq T_D$. Altogether, the expected time to detect $S$ is at least: \begin{linenomath*}\[
\sum_{i=1}^{M-1} T_D \Pr(A_i)=
\sum_{i=1}^{M-1} T_D (1-i/M)=
\Omega(T_D M)= \Omega\left(n\frac{T_D}{D^2}\right),
\]\end{linenomath*}
as desired.
\end{proof}
\begin{corollary}\label{cor:obs:lower}
For every $1\leq D\leq \sqrt{n}/2$, the best possible detection time is $\Theta(n/D)$, when we allow the strategy to have continuous detection, to be unrestricted in terms of its internal computational power and navigation abilities, and to be fully tuned to the diameter. In other words, $\opt(n,D)=\Theta(n/D)$.
\end{corollary}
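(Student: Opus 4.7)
The plan is to derive both halves of the corollary directly from Claim \ref{claim:lb-time-to-go-to-distance-D}, together with a matching sweeping upper bound for the $\opt(n,D)=O(n/D)$ direction.

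For the upper bound $\opt(n,D)=O(n/D)$, I would exhibit an explicit deterministic strategy that is tuned to the target shape $S$. Since $S$ is connected of diameter $D$, it contains two points at distance exactly $D$, and after rotating coordinates we may assume their $x$-coordinates differ by at least $D/\sqrt{2}$. Connectedness then forces the projection of $S$ onto the $x$-axis to contain an interval of length $\ge D/\sqrt{2}$. Starting from the initial position, sweep the torus along vertical scan-lines spaced $s:=\min\{D/(2\sqrt 2),\,2\}$ apart, joined in a snake pattern. Since $s<D/\sqrt 2$, at least one scan-line has $x$-coordinate inside that projected interval and therefore passes through a point of $S$; moreover, whenever $s\le 2$ the scan-lines already $1$-cover the whole torus. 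In either regime $S$ is detected, and the total length traversed is $\sqrt{n}\cdot(\sqrt{n}/s)+O(\sqrt{n})=O(n/D)$.

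For the lower bound $\opt(n,D)=\Omega(n/D)$ I would split on the magnitude of $D$. When $D<\sqrt{n}/6-1$, Claim \ref{claim:lb-time-to-go-to-distance-D} applies to \emph{any} strategy, intermittent or not, and since speed is unit we always have $T_D\ge D$; substituting gives expected detection time $\Omega(n\cdot D/D^2)=\Omega(n/D)$. In the remaining regime $D\in[\sqrt{n}/6-1,\sqrt{n}/2]$ the target bound is merely $\Omega(\sqrt n)$, and I would establish it by exhibiting a single worst-case target: take $S$ to be a straight segment of length $D$. Its extended set $B(S)$ is a stadium of area $O(D)=O(\sqrt n)$, and the $r$-neighborhood of $B(S)$ has area $O(Dr+r^2)$, which stays below $n/2$ for some $r=\Omega(\sqrt{n})$ (uniformly in $D$ in this regime). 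Thus with constant probability the uniformly random initial position is at distance $\Omega(\sqrt n)$ from $B(S)$, and by the unit-speed assumption any strategy needs expected time $\Omega(\sqrt n)=\Omega(n/D)$ to detect it.

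For the scale-sensitivity bound, I apply Claim \ref{claim:lb-time-to-go-to-distance-D} to the given process $X$ itself. For every $D\in[1,\sqrt n/6-1]$ the claim yields $t^X_{detect}(n,D)=\Omega(n\,T_D/D^2)$; multiplying by $D/n$ and taking the supremum over $D\in[1,\sqrt n/6]$ (the missing endpoint contributing only a constant factor) gives $\phi(n)=\Omega\bigl(\sup_{D\in[1,\sqrt n/6]}T_D/D\bigr)$. The main thing to get right is the upper-bound construction: specifically, the rotation-of-coordinates trick that converts a diameter bound into a useful $x$-projection, and ensuring that the sweep still costs only $O(n/D)$ when $D$ is so small that the natural spacing $D/(2\sqrt 2)$ drops below the detection radius (handled by clamping $s$ at $2$).
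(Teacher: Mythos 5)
Your route matches the paper's almost step for step: the $\Omega(n/D)$ lower bound on $\opt$ is obtained from Claim~\ref{claim:lb-time-to-go-to-distance-D} with $T_D\geq D$ when $D<\sqrt{n}/6-1$ and from a ``far initial position'' argument in the remaining range; the upper bound is a line-sweeping strategy; and the scale-sensitivity bound is a direct substitution into Claim~\ref{claim:lb-time-to-go-to-distance-D}. The only genuine difference is that the paper sweeps the full $D/2\times D/2$ grid (both horizontal and vertical lines), which gives a single strategy that works for every connected set of diameter $D$ without knowing its orientation, whereas you sweep in one direction after aligning an axis with the long direction of $S$, which is allowed since $\opt$ permits tuning to the target. (Do read ``rotating coordinates'' as a $90^\circ$ axis swap -- an arbitrary rotation does not respect the torus, but the $D/\sqrt{2}$ you derive is exactly the axis-swap bound $\max\{|u_1-v_1|,|u_2-v_2|\}\geq D/\sqrt 2$, so no harm.)

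There is, however, a substantive error in the upper bound that must be fixed: your spacing $s:=\min\{D/(2\sqrt 2),\,2\}$ should be $\max$, not $\min$ (or, since $D\geq 1$, you can simply take $s=D/(2\sqrt2)$ with no clamping at all). With the $\min$, whenever $D\geq 4\sqrt 2$ you get $s=2$, so the sweep costs $\Theta(n/s)=\Theta(n)$; but for $D=\Theta(\sqrt n)$ the target bound $O(n/D)=O(\sqrt n)$ is much smaller, so the claimed inequality fails exactly in the regime where it matters most. Your own explanation (``ensuring that the sweep still costs only $O(n/D)$ when $D$ is so small that the natural spacing drops below the detection radius'') is describing a lower clamp on $s$, i.e.~$\max$; moreover no such clamp is even needed for the cost when $D$ is small, since $n/(D/(2\sqrt 2))=O(n/D)$ is automatically the right order -- the clamp is only for convenience. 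Once $\min$ is changed to $\max$ the argument is correct.
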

\begin{proof}
The fact that $\opt(n,D)=\Omega(n/D)$ for every $D<\sqrt{n}/6-1$ follows immediately from Claim~\ref{claim:lb-time-to-go-to-distance-D} and the fact that $T_D\geq D$. For $\sqrt{n}/6-1<D\leq \sqrt{n}/2$ the bound 
$\Omega(n/D)=\Omega(\sqrt{n})$  follows simply because with constant probability, the target is at distance $\Omega(\sqrt{n})$ from the initial location of the searcher. 

In order to see why $\opt(n,D)=O(n/D)$, let us tile the torus with horizontal and vertical lines partitioning the torus into squares of size $D/2\times D/2$ each. 
In the case that $\sqrt{n}$ is not a multiple of $D/2$, we might have few of these squares smaller than $D/2\times D/2$. It is clear that this can be constructed while maintaining that the number of horizontal and vertical lines is $O(\sqrt{n}/D)$. For any connected target $S$ of diameter $D$, the set $B(S)$ must intersect at least one of these lines. Now consider a deterministic strategy that repeatedly walks over this tiling exhaustively, without doing much repetition in each exhaustive search. E.g., by first walking on the horizontal lines exhaustively (with occasional steps to move between horizontal lines) and then 
walking on the vertical lines exhaustively.
It is easy to see that such a strategy exists and requires at most $O(\sqrt{n}/D\cdot \sqrt{n})=O(n/D)$ time to pass over all the lines, and hence to detect the target. This establishes the required upper bound.
\end{proof}

Claim \ref{claim:lb-time-to-go-to-distance-D}, applied with $D=1$, also yields the following corollary, by remarking that for intermittent random walk processes, $T_D$, namely,
the expected time until the end point of a step is at a distance of at least $D$ is at least the expected time for one step $\tau$, i.e., $T_D\geq \tau$.

\begin{corollary}\label{cor:scale-sensitivity-expected-length}
Consider an intermittent random walk strategy $X$ on the torus $\T_n$. The detection time of any  target of diameter $D$ is $\Omega(n\tau/D^2)$. 
\end{corollary}

\begin{claim}\label{claim:max-length}
Consider a random walk process $X$ on the torus $\T_n$ and let $\sigma'$ denote the standard deviation of the length of the projected steps onto either coordinate. 
\begin{itemize}
    \item The expected maximal distance of $X$ to its origin after $m$ steps, i.e. $\max_{s\leq m} \norm{X(s)-X(0)}$, is $O(\sqrt{m}\sigma')$. 
    \item Let $m_d$ be the number of steps needed to go to distance at least $d<\sqrt{n}/2$, in other words $m_d$ is the first step $m$ for which $\norm{X(m)-X(0)}\geq d$. We have $\E(m_d)=\Omega(d^2/\sigma'^2)$. 
    \item If the process is intermittent and $\tau$ denotes the average length of a jump, then the expected time before reaching distance $d<\sqrt{n}/2$ is $T_d=\Omega(d^2/\sigma'^2 \tau)$.
\end{itemize}
In particular, if the process is intermittent and $L$ is the maximal length in the support of the step-length distribution, then the expected time needed to go to a distance $\Omega(\sqrt{n})$ is $\Omega( \frac{n}{L})$.
\end{claim}
We will use Claim \ref{claim:max-length} in the next section to get an upper bound on the time needed for a L\'evy walk to reach some distance. The proof of Claim \ref{claim:max-length} is based on Kolmogorov's inequality.
\begin{proof}
 Let $Z$ be the process on $\R^2$, with $Z(0)=X(0)$ and evolving with the same steps as $X$. Since the distance between $Z(m)$ and $Z(0)$, in $\R^2$, is always at least that of $X(m)$ and $X(0)$, in $\T_n$, the number of steps needed to go to distance $d$ in $\T_n$ is at least as high as in $\R^2$. Hence, we may analyze the process $Z$ instead of $X$.
 
 Define $d^Z_{max}(m)$ as the maximal distance (from the initial point) that the process $Z$ reached from step~$0$ up to step $m$, i.e., \begin{linenomath*}\[d^Z_{max}(m)=\max_{s\leq m} \norm{Z(0)- Z (s)}.\] \end{linenomath*}
	Now write $Z=(Z_1,Z_2)$, let $p'$ be the p.d.f. of the projected step-lengths (i.e. the p.d.f. of the step-lengths of $Z_i$), and let $\tau'$ and $\sigma'$ be respectively its mean and standard deviation. 
	Next, let $d^Z_{i,max}(m)$ be the maximal distance reached by the projection on coordinate $i=1,2$. Since steps are independent, the standard deviation of $Z_i(s)$, for $s\leq m$, is $\sqrt{s}\sigma'\leq \sqrt{m}\sigma'$.
	
	By Kolmogorov's inequality, we have for any $\lambda>0$, $\Pr(d^Z_{i,max}(m) \geq \lambda \sqrt{m} \sigma' )\leq \frac{1}{\lambda^2}$. Furthermore, since $d^Z_{max}(m)\leq \sqrt{2}\max\{d^Z_{1,max}(m),d^Z_{2,max}(m)\}$, we have by a union bound argument, for any $\lambda>0$, 
	\begin{linenomath*}\begin{equation}
    \Pr(d^Z_{max}(m)\geq \lambda\sqrt{m}\sigma') \leq  \Pr\left (d^Z_{1,max}(m)\geq \frac{\lambda}{\sqrt{2}}\sqrt{m}\sigma' \right ) + \Pr\left( d^Z_{2,max}(m)\geq \frac{\lambda}{\sqrt{2}}\sqrt{m}\sigma'\right) \leq \frac{4}{\lambda^2}.\end{equation}\end{linenomath*}
	Hence, 
	\begin{linenomath*}\begin{align} \E(d^Z_{max}(m))&=\int_{s=0}^\infty \Pr\left(d^Z_{max}(m)\geq s\right)ds \leq \sum_{\lambda=0}^\infty \int_{\lambda'=0}^{\sqrt{m}\sigma'} \Pr\left(d^Z_{max}(m)\geq \lambda \sqrt{m}\sigma' + \lambda'\right) d\lambda' \nonumber \\
	&\leq \sqrt{m}\sigma' \left(\sum_{\lambda\geq 0} \Pr(d^Z_{max}(m)\geq \lambda \sqrt{m}\sigma')\right)=O\left (\sqrt{m}\sigma'\right), \label{eq:max-distance-general-upper-bound}\end{align}\end{linenomath*}
	which proves the first item of Claim \ref{claim:max-length}. Next, write the $m_d$ of the statement as $m^X_d$, to distinguish it from the similarly defined $m^Z_d$, which is the first step for which $\norm{Z(m)-Z(0)}\geq d$. As remarked above, we have $m^X_d\geq m^Z_d$. Note that for $m\geq m^Z_d$,  we have $d^Z_{max}(m)\geq d^Z_{max}(m_d)\geq d$. Therefore, by Markov's inequality,
\begin{linenomath*}\begin{equation} \label{eq:markov-inq-distances-steps}
\E(d^Z_{max}(2\E(m^Z_d)))\geq\E(d^Z_{max}(2\E(m^Z_d))\mid m^Z_d<2\E(m^Z_d))\cdot \Pr(m^Z_d<2\E(m^Z_d))
\geq d \cdot \frac{1}{2}. \end{equation}\end{linenomath*}
Now using Eq.~\eqref{eq:max-distance-general-upper-bound} with $m=2\E(m^Z_d)$, we have $\E(d^Z_{max}(2\E(m^Z_d)))=O(\sqrt{\E(m^Z_d)}\sigma')$ and hence, by Eq.~\eqref{eq:markov-inq-distances-steps},
\begin{linenomath*}\[ \E(m^X_d)\geq \E(m^Z_d)=\Omega\left(\frac{d^2}{\sigma'^2}\right), \]\end{linenomath*}
which proves the second item of Claim \ref{claim:max-length}. 

The last item is a lower bound on $T_d=\E(T(m^X_d))$, the expected time that $X$ needs to reach distance $d$. To obtain it, we observe that $m^X_d$ is the hitting step of the set of nodes at distance $d$ or more in the torus. Hence, by Claim \ref{claim:timeVSmoves}, we have $T_d=\E(m^X_d)\cdot \tau\geq \E(m^Z_d)\cdot \tau=\Omega(\frac{d^2}{\sigma'^2}\tau)$, which was exactly as needed.

Finally, observe that
\begin{linenomath*} \begin{equation}\label{eq:variance1}
     \sigma'^2 = \int_{0}^{L} p'(\ell)\ell^2d\ell \leq \int_{0}^{L} p'(\ell)\ell \cdot L d\ell = L\tau'\leq L\tau,
     \end{equation}\end{linenomath*}
     where the last inequality is justified by the fact that the projection reduces distances. This completes the proof of Claim \ref{claim:max-length}.
\end{proof}

\subsection{Lower bounds for L\'evy walks}\label{sec:lowerlevy}
The goal of this section is to prove lower bounds on the overrun of L\'evy walks other than Cauchy. For $1<\mu< 2$, we show that the corresponding intermittent L\'evy walks are bad at finding small targets. For $2<\mu\leq 3$, we show that the corresponding  L\'evy walks are bad at finding large targets. The latter result holds also with respect to the continuous detection model.

\subsubsection{Intermittent L\'evy walks with \texorpdfstring{$1<\mu\leq 2$}{small exponent}}
Let $X^\mu$ be the intermittent L\'evy walk on the torus $\T_n$, for some $1<\mu< 2$. 
 We start by analyzing the detection times of small targets.
 
 \begin{theorem}\label{ThmLevyLowerSmallMu}
Let $\mu \in (1,2)$ and $D\in [1,\sqrt{n}/2]$. Write $\mu=2-\epsilon$. The detection time of the L\'evy walk $X^\mu$ with respect to a target $S$ of diameter $D$ is
\begin{equation}
    t_{detect}^{X^\mu}(S)=\Omega(n^{1+\epsilon/2}/D^2), \end{equation}
and the overrun  w.r.t. $D$ is:
\begin{linenomath*}\begin{equation}
    \Over^{X^{\mu}}(n,D)= \Omega(n^{\epsilon/2}/D).
    \end{equation}\end{linenomath*}
\end{theorem}
    \begin{proof}
    By Corollary \ref{cor:scale-sensitivity-expected-length}, the detection time of a target $S$ with diameter $D$ is $\Omega(n\tau/D^2)$ where $\tau$ is the expected step length. 
    Using that $\lmax=\Theta(\sqrt{n})$, Claim \ref{claim:exp+var} implies that this expected step length is, for $\mu=2-\epsilon$ with $\epsilon\in (0,1)$:
\begin{linenomath*}\[\tau = \Theta(n^{1-\frac{\mu}{2}})=\Theta(n^{\eps/2}).\]\end{linenomath*}
Hence, the detection time $X^\mu$ for a target of diameter $D$ is $\Omega(n^{1+\eps/2}/D^2)$.
Dividing this by the unconditional optimal time $\Theta(n/D)$, we get the desired lower bound on the overrun.
\end{proof}

\subsubsection{L\'evy walks with \texorpdfstring{$2<\mu\leq 3$}{large exponent}}
Theorem \ref{ThmLevyLowerSmallMu} implies that the overrun of the intermittent L\'evy walk $X^\mu$ for $\mu\in(1,2)$ is very large with respect to small targets, i.e, when $D\ll n^{\epsilon/2}$. We next aim to prove the case $\mu\in [2,3]$:
\begin{theorem}\label{ThmLevyLowerLargeMu}
Let $\mu\in(2,3]$ and $D\in [2,\sqrt{n}/6-1]$. Write $\mu=2+\epsilon$ where $0< \epsilon\leq 1$. 
The following holds with respect to the L\'evy process $X^\mu$ whether it is intermittent or not.
The detection time of $X^\mu$ with respect a target $S$ of diameter $D$ is
\begin{linenomath*}\[t_{detect}^{X^\mu}(S)=\begin{cases}
 \Omega({n}{D^{\eps-1}}) \text{ if } \mu=2+\epsilon, \mbox{~where~} 0<\epsilon<1, \\
 \Omega(\frac{n}{\log D})  \text{ if } \mu=3. 
\end{cases}\]\end{linenomath*}
Hence, the overrun of $X^\mu$ with respect to $D$ is:
\begin{linenomath*}\[\Over^{X^{\mu}}(n,D)=\begin{cases}
 \Omega(D^{\eps}) \text{ if } \mu=2+\epsilon, \mbox{~where~} 0<\epsilon<1, \\
 \Omega(\frac{D}{\log D})  \text{ if } \mu=3.
\end{cases}\]\end{linenomath*}
\end{theorem}
Since the proof is simpler, let us first prove Theorem \ref{ThmLevyLowerLargeMu} for the intermittent setting, i.e., targets can only be detected in-between steps. 

\paragraph{Proof of Theorem \ref{ThmLevyLowerLargeMu} for the intermittent setting.}
Towards proving the theorem, we first establish the following. 
\begin{claim}\label{claim:distance_large_mu}
 Let $X^{\mu}$ be an intermittent L\'evy walk process on the torus $\T_n$, for $\mu\in [2,3]$, with $\ell_{max}=\sqrt{n}/2$. The expected time required to reach a distance of 
 $d\geq 1$ from the starting point is:
\begin{linenomath*} \[T_d=
 \begin{cases}
  \Omega(d\log d) \text{ if } \mu=2 \\
\Omega(d^{\mu-1}) \text{ if } \mu\in(2,3) \\
  \Omega(\frac{d^2}{\log d}) \text{ if } \mu=3\end{cases}.
 \]\end{linenomath*}
\end{claim}
\begin{proof}
We may suppose that $d\in[1,\sqrt{n}/4]$. Denote by $m_d$ the random number of \textit{steps} before the process reaches a distance of at least~$d$. Let us define $m_0=\lceil d^{\mu-1} \rceil$, and say that a step is {\em small} if it has length at most $d$. Define the event $\mathcal{A}$ that all the steps $1,2,\dots,m_0$ are small. Note that since $d\leq \ell_{max}/2$, the probability for any given step not to be small is $q=\int_{d}^{\lmax} \frac{a}{\ell^{\mu}}d\ell\geq \frac{c}{d^{\mu-1}}$ for some constant $c\in (0,1)$. Hence, the probability for a step to be small is $1-q$, and since the steps are independent, we have:
\begin{linenomath*}\[\Pr(\mathcal{A})=(1-q)^{m_0}=\exp(m_0 \log(1-q))\geq \exp(d^{\mu-1}\log(1-cd^{1-\mu})).\]\end{linenomath*} We have: \begin{linenomath*}\[\exp(d^{\mu-1}\log(1-cd^{1-\mu}))=\exp(d^{\mu-1}(-cd^{1-\mu}+o(d^{1-\mu}))=\exp(-c+o(1)),\]\end{linenomath*} which is a positive constant. Since this is a continuous, strictly positive, function of $d\in[1,\infty)$, we have $\Pr(\mathcal{A})\geq c'$ for some constant $c'>0$ independent of $d$.
 
 Next, note that \begin{linenomath*}\[\E(T(m_d))\geq \Pr(\mathcal{A}) \cdot\E(m_d\mid \mathcal{A})=c'\cdot \E(T(m_d)\mid\mathcal{A}).\]\end{linenomath*}
 Hence, for the purposes of obtaining a lower bound, it is sufficient to examine the process when conditioned on $\mathcal{A}$. This is a L\'evy process of parameter $\mu$, with cut-off $\lmax=d$. The expected length $\tau$ of a jump is given by Claim~\ref{claim:exp+var}: \begin{linenomath*}\begin{equation}
\tau=\Theta(1)
\end{equation}\end{linenomath*} and the variance $\sigma'^2$ of the step-length of a jump projected onto one of the axes is given by Theorem~\ref{thm:projection}:
 \begin{linenomath*}\[
\sigma'^2= \begin{cases}
\Theta({d^{3-\mu}}) \text{ if } \mu\in (1,3) \\
  \Theta(\log d) \text{ if } \mu=3\end{cases}.
\]\end{linenomath*}
 To conclude, we use Claim \ref{claim:max-length}:
\begin{linenomath*}\[ T_d=\Omega \left( \frac{d^2}{\sigma'^2}\cdot \tau\right)=\begin{cases}
\Omega(d^{\mu-1}) \text{ if } \mu\in(2,3) \\
  \Omega(\frac{d^2}{\log d}) \text{ if } \mu=3\end{cases}.\]\end{linenomath*}
  This concludes the proof of Claim \ref{claim:distance_large_mu}.
  \end{proof}
  
  Combining Claim \ref{claim:distance_large_mu} with the fact
  that the expected time to detect a target of diameter $D$ is  $\Omega (n \frac{T_{D}}{D^2})$, as established by Claim \ref{claim:lb-time-to-go-to-distance-D}, and comparing to the unconditional optimal detection time $\Theta(n/D)$ for targets of diameter $D$, Theorem \ref{ThmLevyLowerLargeMu} is proved in the intermittent case. Next, we prove the theorem when the process is able to detect the target while moving.
  
  
  \paragraph{Proof of Theorem \ref{ThmLevyLowerLargeMu} for the continuous detection model.} Recall, from the proof of Claim~\ref{claim:lb-time-to-go-to-distance-D}, that we can build a grid of $M=\Theta(n/D^2)$ discs of diameter $D$, one of which contains the target, and separated by distance $D$. Furthermore, for every strategy, whether intermittent or not, with probability $\frac{1}{2}$, at least half of the discs are visited before finding the target. Hence, the expected time to find the target is at least half of the expected time to visit half of the discs.
In the remaining of the proof we aim to lower bound the expected time to visit half of the discs.
  
  Let $\mu>2$ and write $\mu=2+\epsilon$. Define a step to be {\em large} if it has length $D$ or more. Divide the execution into  a sequence of consecutive {\em phases}, so that each phase is a succession of small steps, and a final large step (possibly, there are no small steps in the phase if two large steps are consecutive). In short, in what follows we prove that a phase visits $O(1)$ discs on average when $2<\mu<3$, or $O(\log D)$ for $\mu=3$ (Lemma~\ref{lem:constant_disc}), and lasts, on average, $\Omega(D^{\mu-1})$ time (Lemma \ref{lem:time-phase}). We then conclude that, after $R=\tilde{\Theta}(M)$  phases, with constant probability, no more than $M/2$ discs are visited and the time spent is \begin{linenomath*}\[\tilde{\Omega}(MD^{\mu-1})=\tilde{\Omega}(nD^{\mu-3})=\tilde{\Omega}(nD^{\epsilon-1}).\]\end{linenomath*}
  A straightforward computation then allows to establish the desired bound on the overrun of the L\'evy search in the continuous detection model. 
  
  We next proceed to explain the proof in details. 
 Let $N_{discs}$ be the number of discs visited during a phase.
\begin{lemma}\label{lem:constant_disc}  
$\E(N_{discs})=\begin{cases}
O(1) \text{ if } 2<\mu<3\\
O(\log D) \text{ if } \mu=3 \end{cases}$.
\end{lemma}
\noindent{\em Proof of Lemma \ref{lem:constant_disc}.}
Given a phase, by linearity of expectation, $\E(N_{discs})$ equals the expected number of discs visited by the small steps of the phase plus the expected number of discs visited by the large step. The latter quantity is easy to bound. Indeed, since discs are separated by a distance of $D$, the number of discs visited in a step of length $L$ is $O(1+L/D)$. Moreover, it is easy to verify that, as $\mu>2$, the expected length of a large step is $\Theta(D)$. Hence the expected number of discs visited during the large step of a phase is $O(1)$.  

In the remaining of the proof of Lemma \ref{lem:constant_disc}, we aim to upper bound the expected number of discs visited by the small steps of the phase.



Let $D_{small}$ denote the number of discs discovered during the small steps. 
Towards establishing an upper bound on $\E(D_{small})$, let $\alpha$ be the probability for one step to be large. This equals $a\int_{\ell=D}^{\lmax} \ell^{-\mu}d\ell=\frac{a}{\mu-1}(D^{1-\mu}-\lmax^{1-\mu})$, and so, as $D<\lmax/2=\sqrt{n}/4$, we have:
\begin{linenomath*}\[\alpha=\Theta(D^{1-\mu}).\]\end{linenomath*}  Let $N_{small}$ be the total number of small steps in one phase. Since a phase ends after performing a long step for the first time, we have, for every integer $m\geq 0$, $\Pr(N_{small}=m)= \alpha(1-\alpha)^m$.
We thus have:
\begin{linenomath*}\begin{align} 
\E(D_{small})= \sum_{m\geq 0} \alpha (1-\alpha)^m \cdot \E(D_{small}\mid N_{small}=m).
\label{eq:Dsmall}\end{align}\end{linenomath*}
\begin{claim}\label{claim:Dsmall}
For any integer $m$, $\E(D_{small}\mid N_{small}=m)=O(1+m\sigma''^2/D^2)$, where $\sigma''$ is the standard deviation of the length of a small step, when projected on one of the coordinates.
\end{claim}
Note that the direction of each step is chosen uniformly at random, hence $\sigma''$ does not depend on which coordinate is chosen.

\noindent{\em Proof of Claim \ref{claim:Dsmall}.}
Let $W_1$ be the number of steps before a distance of $2D$ from the initial location is first reached. For $r\geq 1$, define recursively both $S_r=\sum_{i=1}^{r}W_i$, and $W_{r+1}$ to be the number of steps before we first have $\norm{X(S_r+W_{r+1})-X(S_r)}\geq 2D$. Note that the $(W_i)_i$ are i.i.d and have the same law as $m_{2D}$. Hence, by Claim \ref{claim:max-length},
we have \begin{linenomath*}\begin{equation}
\E(W_i)=\Omega(D^2/\sigma''^2).
\end{equation}\end{linenomath*}

For a given $m\geq 1$, let $r(m)$ be the first $r\geq 1$ for which $S_r> m$ (if this never happens then $r(m)=0$). Because in-between steps $W_i$ and $W_{i+1}$ only a distance $O(D)$ is travelled, there can only be $O(1)$ discs visited during this time  interval. Hence, up to step $m$, at most a number $O(1+r(m))$ discs are visited. We are thus looking for an upper bound on $\E(r(m))$.

Observe that $r(m)$ is a stopping time for the $(W_i)_{i\geq 1}$. Furthermore, $r(m)\leq m$ since $W_i\geq 1$ for all~$i$. Since the $W_i$ are i.i.d., and $\E(W_1)$ is finite also, we can apply  Wald's equation  
to obtain $\E(r(m))\E(W_1)= \E(S_{r(m)})$, and hence:
\begin{linenomath*}\begin{equation}
    \E(r(m))= \frac{\E(S_{r(m)})}{\E(W_1)}.
    \end{equation}\end{linenomath*}
Moreover, we have $\E(S_{r(m)})=\E(S_{r(m)-1})+\E(W_{r(m)})$. By definition of $r(m)$, we have $\E(S_{r(m)-1})\leq m$.
Next, we wish to bound $\E(W_{r(m)})$.
Note that $W_{r(m)}$ is at most the first $r>m$ for which $\norm{X(r)-X(m)}\geq 4D$. Indeed,  by definition of $r(m)$ we have $\norm{X(m)-X(S_{r(m)-1})}\leq 2D$ and $\norm{X(S_{r(m)})-S_{r(m)-1}}\geq 2D$. Hence we have $\E(W_{r(m)})\leq \E(m_{4D})+m$. Furthermore, we claim that $\E(m_{4D})=O(\E(m_{2D}))$. Indeed, consider a circle of radius $4D$ from the initial location and a step $s$, for which the agent is within the circle.
Consider $E=\E(S_3)=3\E(m_{2D})$. Starting at step $s$, with constant probability, there exists three steps $s_1,s_2,s_3\in(s,s+2E)$ for which $\norm{X(s_i)-X(s_{i+1})}\geq 2D$. Furthermore, whenever this happens, a distance of at least $4D$ from the center of the circle will be reached if $X(s_1),X(s_2)$ and $X(s_3)$ are aligned approximately in the direction leading to the shortest exit from the circle, which happens with constant probability. Hence, after $2E$ steps from any step $s$ where the agent is within the circle, with constant ability, the walk escapes the circle.  Applying this argument repeatedly implies that, $\E(m_{4D})=O(E)=O(\E(m_{2D}))$.
Altogether, we deduce that \begin{linenomath*}\[\E(r(m))=O(1+m/\E(m_D))=O(1+m\sigma''^2/D^2).\]\end{linenomath*}
As remarked above, up to step $m$, there are at most $O(1+ r(m))$ visited discs. Hence, conditioning on $N_{small}=m$, there are only $O(1+m\sigma''^2/D^2)$ discs visited in the small steps phase, on expectation.
This completes the proof of Claim \ref{claim:Dsmall}. \qed

 Using Claim \ref{claim:Dsmall}, we return to Eq.~\eqref{eq:Dsmall}, to bound the  expected number of discs visited in a small phase:
\begin{linenomath*}\begin{equation*}\label{eq:exp-D-small} 
\E(D_{small})=\sum_{m\geq 0} \alpha (1-\alpha)^m \cdot O(1+m\sigma''^2 /D^2)=
O(1)+O(\sigma''^2 \alpha/D^2 \cdot \alpha^{-2}),
\end{equation*}\end{linenomath*}
     where we used that $\sum_{m\geq 0} (1-\alpha)^m =\alpha^{-1}$, and that $\sum_{m\geq 0} m(1-\alpha)^m =O(\alpha^{-2})$.
     Thus, 
     \begin{linenomath*}\begin{equation}\E(D_{small})=O( 1+\sigma''^2 \alpha^{-1}/D^2).
\end{equation}\end{linenomath*}
As $\sigma''^2$ is the variance of the projected L\'evy distribution with cut-off $\lmax=D$, it is given by Theorem~\ref{thm:projection} as: $O(D^{3-\mu})$ for $\mu<3$ and $O(\log D)$ for $\mu=3$. Together with the fact that $\alpha=\Theta(D^{1-\mu})$, we get that the expected number of discs visited by the small steps of a phase is $O(1)$ for $\mu\in(2,3)$ and $O(\log D)$ for $\mu=3$. Combining with the expected number of discs visited by the large step, which was shown to be $O(1)$, the proof of Lemma \ref{lem:constant_disc} is complete. \qed

Given a constant $\tilde{c}$, define the following quantity that will refer to the number of phases. 
\begin{linenomath*}\begin{equation}\label{eq:R}
     R=\begin{cases}\tilde{c}M \mbox{ if } \mu\in (2,3)\\ \tilde{c}M/\log D \mbox{ if }  \mu=3\end{cases}.
     \end{equation}\end{linenomath*}
     Given $R$, let $N^R_{discs}$ denote the total number of discs visited by the end of the $R$-th phase. 
 \begin{lemma}\label{lem:number-discs-R-phases} For any $\delta<1$, there exists a constant $\tilde{c}>0$ such that the probability to have visited at most $M/2$ discs after $R$ phases (as defined in Eq. \eqref{eq:R}) is
  \begin{linenomath*}\[\Pr(N^R_{discs}< M/2) > \delta .\]\end{linenomath*}
  \end{lemma}
  \noindent{\em Proof of Lemma \ref{lem:number-discs-R-phases}.}
   Note that steps are independent and, hence, phases are independent, implying that the number of discs visited during a phase does not depend on the phase number.  We have, by linearity of expectation, $\E(N^R_{discs})=R\cdot\E(N_{discs})$, and, by Markov's inequality, we have \begin{linenomath*}\[\Pr(N^R_{discs}\geq M/2) \leq 2R\cdot \E(N_{discs})/M.\]\end{linenomath*} By Lemma \ref{lem:constant_disc},  $\E(N_{discs})\leq c$ for $\mu\in (2,3)$ and $\E(N_{discs})\leq c\log D$ for $\mu=3$, for some constant $c>0$. Hence,  we find that $2R\cdot \E(N_{discs})/M$ is at most $2c\tilde{c}$, which can be made  to be less than  $1-\delta$ by choosing $\tilde{c}< (1-\delta)/(2c)$.\qed

\begin{lemma}\label{lem:time-phase}
Let $T^R$ be the time spent during  $R$ phases. 
There are two constants $c>0$ and $q>0$ for which \begin{linenomath*}\[\Pr(T^R \geq c D^{\mu-1} R)\geq q.\]\end{linenomath*}
\end{lemma}
\noindent{\em Proof of Lemma \ref{lem:time-phase}.}
Define a phase to be long if it lasts at least $T^\star=c_1D^{\mu-1}$ time for some constant $c_1$ to be fixed later.
Let $N^R_{long-phases}$ be the number of long phases, up to the $R$-th one. 
Note that \begin{linenomath*}\begin{equation}\label{eq:TR}
T^R\geq T^\star N^R_{long-phases}.
    \end{equation}\end{linenomath*}
Let $T$ be the time duration of the small steps in a phase.
Since phases are independent, we have:
\begin{linenomath*}\begin{align}\E(N^R_{long-phases})&=R \cdot \Pr(T \geq T^\star)\geq R\cdot \Pr(N_{\geq \frac 12} \geq 2T^\star),
\end{align} \end{linenomath*}
where $N_{\geq \frac 12}$ is the number of steps of length larger than $\frac 12$ among the small steps of a phase. Because $N_{small}$, the number of small steps in one phase, follows a geometric distribution of parameter $\alpha^{-1}$, we have $N_{small}=\Omega(\alpha^{-1})$ with constant probability. Furthermore, as a small step has length at least $\frac 12$ with constant probability, we have that \begin{linenomath*}\[N_{\geq \frac 12}=\Theta(N_{small})=\Omega(\alpha^{-1}),\]\end{linenomath*} with constant probability. Indeed,  $N_{\geq\frac 12}$ follows a binomial distribution, and we are using the median property of such distributions.

By choosing $c_1$ such that $T^\star=c_1D^{\mu-1}$ is small enough, since $\alpha^{-1}=\Theta(D^{\mu-1})$, we have $\Pr(N_{\geq \frac 12} \geq T^\star)\geq c'$ for some constant $0<c'<1$.
This implies that for some constant $0<c''<1$, \begin{linenomath*}\[\E(N^R_{long-phases})\geq c''R.\]\end{linenomath*} Hence, 
\begin{linenomath*}\[\E(N^R_{short-phases})\leq (1-c'')R,\]\end{linenomath*} where $N^R_{short-phases}$ is the number of short (i.e., non-long) phases. By Markov's inequality, for any $c_2>0$, 
we have
$\Pr(N^R_{short-phases}\geq c_2R)\leq \frac{1-c''}{c_2}$, which is a positive, strictly less than $1$, constant, by a suitable choice of $c_2$. For this choice, we have \begin{linenomath*}\[\Pr(N^R_{long-phases}\geq (1-c_2)R)=\Pr(N^R_{short-phases}< c_2R)\geq 1-\frac{1-c''}{c_2}=\Omega(1).\] \end{linenomath*}
Returning to Eq.~\eqref{eq:TR}, we get that with constant probability
\begin{linenomath*}\[
T^R=\Omega(T^\star R)=\Omega(RD^{\mu-1}),
\]\end{linenomath*}
which proves Lemma \ref{lem:time-phase}.
\qed

We conclude by using Lemmas \ref{lem:number-discs-R-phases} and \ref{lem:time-phase}. Specifically,  for the constants $c>0$ and $0<q<1$ of Lemma~\ref{lem:time-phase}, 
and the constant $\delta=1-q/2$ in Lemma \ref{lem:number-discs-R-phases}, for some choice of the constant $\tilde{c}>0$ in the definition of $R$, we obtain:
\begin{itemize}
\item $\Pr(T^R \geq cR D^{\mu-1})>q$, and
\item $\Pr(N^R_{discs}< M/2)>\delta$.
\end{itemize}
Using a union bound argument, this implies that with probability at least $q+\delta-1=q/2$,  we have both $N^R_{discs}< M/2$ and $T^R=\Omega(RD^{\mu-1})$. Hence, with constant probability, the searcher takes time $\Omega(RD^{\mu-1})$ to find the target. Therefore, the expected time needed to find the target is \[t_{detect}^{X^\mu}(S)=\Omega(RD^{\mu-1})=\begin{cases}
\Omega (M D^{\mu-1})= \Omega (n D^{\mu-3})\text{ if } 2<\mu<3,\\
\Omega(\frac{M D^{3-1}}{\log D})=\Omega( \frac{n}{\log D}) \text{ if } \mu=3, \\
\end{cases} \] 
where we used the definition of $R$ in Eq.~\eqref{eq:R} and the fact that $M=\Theta(n/D^2)$.
Dividing by the optimal time $\Theta(n/D)$, we get 
\[\Over^{X^\mu}(n,D)=\begin{cases}
 \Omega(D^{\eps}) \text{ if } \mu=2+\epsilon, \mbox{~where~} 0<\epsilon<1, \\
 \Omega(\frac{D}{\log D})  \text{ if } \mu=3,
\end{cases}\]
as desired. 
This completes the proof of Theorem \ref{ThmLevyLowerLargeMu} in the continuous detection model.
\qed


\section{Scale-sensitivity of the intermittent Cauchy Walk}\label{sec:mainupper}

We take $n>4$ for technical reasons, and let $\lmax=\sqrt{n}/2$. 
As stated in the previous section, the overrun of the intermittent Cauchy walk $X^{\cauchy}$ for a target of diameter $D$ on the torus is $\Omega(\log n)$. The goal of this section is to prove the following theorem which states that this lower bound is nearly matched. 
\begin{theorem}\label{thm:upper-cauchy}\label{theorem:large-targets}
 Consider the Cauchy walk process $X^{\cauchy}$ on the torus $\T_n$. The hitting time of $X^{\cauchy}$ with respect to a target $S$ of diameter $D$ is 
\begin{linenomath*}\[t_{detect}^{X^\cauchy}(S)=O\left( \frac{n\log^3 n}{D} \right).\] \end{linenomath*}
Consequently, the overrun of $X^{\cauchy}$ for a target of diameter $D$ is $O(\log^3 D)$.
\end{theorem}
Theorem \ref{thm:upper-cauchy} concerns the Cauchy walk on the two-dimensional torus. As the  one-dimensional Cauchy walk is fairly well understood, 
it is tempting to analyze the two-dimensional walk by projecting it on the two axes and using the  
properties of the one-dimensional walk on these projections. However, this approach needs to somehow handle the fact that these projections are not independent of each other. As we could not find an easy way to overcome this dependence issue, we prove Theorem \ref{thm:upper-cauchy} following a different line of arguments, that directly examine the two-dimensional process.

To prove Theorem \ref{thm:upper-cauchy}, we can assume without loss of generality that the process starts at the origin, i.e., that $X^\cauchy(0)=0$.

Claim \ref{claim:timeVSmoves} implies
that in order to find the detecting time $t_{detect}^{X^{\cauchy}}(S)$ of $S$, it is sufficient to identify the expected number of steps until detecting $S$, as \begin{linenomath*}\[t_{detect}^{X^{\cauchy}}(S)=\E(m_{detect}^{X^{\cauchy}}(S))\cdot \tau=\Theta(\E(m_{detect}^{X^{\cauchy}}(S))\cdot \log n).\] \end{linenomath*}
Now let $Z$ be the process on $\R^2$ that evolves with the same steps $V(s)$ as $X^{\cauchy}$, i.e. $Z(m)=\sum_{s=1}^m V(s)$. Note that the projection of $Z$ on the torus 
$[-\sqrt{n}/2,\sqrt{n}/2]^2\subset \R^2$
is $X^{\cauchy}$.

The next lemma establishes a connection between $\E(m_{detect}^{X^{\cauchy}}(S))$ and the process $Z$ on $\R^2$. Given a set $S$, recall that $B(S)$ is the set of points at distance at most $1$ from $S$, and that $Z(m)$ detects $S$ if and only if $Z(m)\in B(S)$.

\rLemma{LemHitFromPointwise}{Consider a random walk process $Z$ on $\R^2$ and its projection $X$ on the torus $\T_n$ and denote by $Z^{z_0}$ the process $Z$ starting at $Z(0)=z_0$. Let $S\subset \T_n$. For any $m_0$,
	\begin{linenomath*}\begin{align}\label{eq:trick-applied}
	\E(m_{detect}^X(S))=O\left(m_0\cdot \frac{\sup_{z_0\in B(S)}\sum_{m=0}^{m_0} \Pr(Z^{z_0}(m)\in B(S) }{\sum_{m=m_0}^{2m_0}  \Pr(Z(m)\in B(S))}\right).
	\end{align}\end{linenomath*}}
We provide a formal proof of Lemma \ref{LemHitFromPointwise} in Section~\ref{app:lemma}. The proof is based on the technique in Adler et al.~\cite{hunter}, 
relying on the identity $\Pr(N\geq 1 ) = \frac{\E ( N)}{\E(N\mid N \geq 1)}$, that holds for any  non-negative random variable $N$.

Lemma \ref{LemHitFromPointwise} allows to deduce Theorem \ref{thm:upper-cauchy} from pointwise bounds on the Cauchy process $Z$ on $\R^2$, defined by Eq.~\eqref{eq:cauchy-precise-distribution-appendix}. The next lemma provides a lower bound on the p.d.f~$p^{Z(m)}$, of the process at step $m$.
\rLemma{LemLB}{For any constant $\alpha>0$, there exists a constant $c>0$ such that for any integer $m\in [1,\alpha \lmax]$, and any $x\in \R^2$, with $\norm{ x}\leq m$, \begin{linenomath*}\[p^{Z(m)}(x) \geq \frac{c}{m^2}.\] \end{linenomath*}}

From Lemma \ref{LemLB}, we immediately deduce  that the probability that $Z(m)$ detects a point $x\in \R^2$ is $\Omega(\int_{y\in B(x)} cm^{-2}dy)=\Omega(cm^{-2})$, where $B(x)=B(\{x\})$. This lower bound is complemented by the following upper bound.

\rLemma{LemUB}{
For any constant $\alpha>0$, there exists a constant $c'>0$ such that, for any integer $m\in [2,\alpha \lmax]$ and any $x\in \R^2$, we have
\begin{linenomath*}\[\Pr(Z(m)\in B(x))\leq \frac{c'\log^2 m}{m^2}.
\] \end{linenomath*}
}

Lemmas \ref{LemLB} and \ref{LemUB}
are formally proved in Section \ref{app:lemmas}. Let us give here a sketch of the proofs. 
Using the monotonicity property, the lower bound stated in Lemma \ref{LemLB}
follows once we prove that with at least some constant probability, the process at step $m$
belongs to the ring $\{x\mid \norm{x}\in [m,cm]\}$
for some constant $c>1$. This is because the area of this ring is roughly $m^2$, and each point in it is further from 0 than $x$, and hence, by monotonicity, less likely to be visited at step $m$. In order to establish the lower bound on the probability to be in the ring at step $m$, we first prove that with some constant probability, at some step before $m$, the walk goes to a distance at least $2m$.

Next, conditioning on that event, we prove that with a constant probability, the walk does not get much further away, i.e., it stays at a distance of at least $m$. To prove the latter claim, we use Chebyshev's inequality. It implies, for a one-dimensional process, that the distance traveled in $m$ steps is governed by $\sqrt{m}$ times the standard deviation of the step-length process. Here the standard deviation is too large (roughly $\sqrt{n}$), however, we can reduce it by conditioning on the event that none of the $m$ step-lengths are significantly larger than $m$, which occurs with a constant probability. Finally, we prove that by taking a sufficiently large constant $c$, it can be guaranteed that with a large constant probability,  the walk at step $m$ is at most at distance $cm$. Making sure that all these constant probability events happen simultaneously, we then establish the desired constant lower bound on the probability to be in the aforementioned ring at step $m$. 

For the proof of the upper bound in Lemma \ref{LemUB}, we first show that  because of the monotonicity property, it is sufficient to prove that the probability to detect $0$ at step $m$ is small, i.e., that 
\begin{linenomath*}\[\Pr(Z(m)\in B(0))= O\left(\frac{\log^2 m}{m^2}\right).\] \end{linenomath*}
Intuitively, to establish this, we first argue that with high probability in $m$, at some step before step $m$, the process has gone to a distance $d=\Omega(\frac{m}{\log m})$. By Corollary~\ref{cor:monotonicity}, the probability density function at any point in $B(0)$ would then be at most $O(\frac{1}{d^2})$, which is the desired bound.

\begin{proof}[Proof of Theorem \ref{theorem:large-targets}, assuming the aforementioned Lemmas]
Given the connected set $S$ of diameter $D\geq 1$, we first construct a subset $S'$, containing $\Theta(D)$ isolated points of $S$ that stretch over distance of roughly $D$, as follows. Take two points $u=(u_1,u_2)$ and $v=(v_1,v_2)$ in $S$ that are at distance $D$ from each other, so that $\max\{\lvert u_1-v_1\rvert,\lvert u_2-v_2\rvert\}\geq D/2$. Let us assume, without loss of generality, that $v_1-u_1\geq D/2$. 
Since $S$ is connected, for every $z\in [u_1,v_1]$, there exists $\phi(z)$ such that $(z,\phi(z))\in S$. 
Let $d=\lceil v_1-u_1 \rceil =\Theta(D)$. 
For integer $i\in\{0,1,\ldots,\lfloor d\rfloor\}$, define \begin{linenomath*}\[s(i)=(u_1+i,\phi(u_1+i)),\] \end{linenomath*} and let $S'=\{s(i)\mid  i=0,1,\ldots,\lfloor d\rfloor\}$. Note that $|S'|=\Theta(D)$.
Since $S'\subseteq S$, an upper bound on the detecting time of $S'$ is an upper bound on the detecting time of $S$. It is therefore sufficient to restrict attention to $S'$ and upper bound its detecting time. For that purpose we need to bound the time until visiting a point in $B(S')$, the set of points of distance at most $1$ from $S'$. Note that the area of $B(S')$ is $\lvert B(S')\rvert=\Omega(D)$. 
We also remark, that although $B(S')$ may not be connected, it may help the reader to imagine  $B(S')$ as a horizontal cylinder of length $\Theta(D)$ and radius $1$, i.e., to consider that $\phi(u_1+i)$ does not depend on $i$. Indeed, we will not require any condition on the $y$-coordinates of the $s(i)$'s.

In order to upper bound $\E(m_{detect}^{X^{\cauchy}}(B(S')))$
we shall apply Lemma \ref{LemHitFromPointwise} with $m_0=\sqrt{n}$. Note that $2m_0\leq \alpha \lmax$ for $\alpha=4$. We shall furthermore lower  bound  the denominator in the r.h.s of Eq.~\eqref{eq:trick-applied} and upper  bound  the numerator. Both these terms concern the Cauchy process $Z$ with cut off $\lmax$ on $\R^2$. 

Let us begin with the lower bound. With this setting of $m_0$, any $x\in B(S') \subseteq B(\T_n)\subseteq [-\sqrt{n}/2-1,\sqrt{n}/2+1]^2$  
trivially satisfies $\norm{ x} \leq m$, for any $m\geq m_0+1$, and we can apply 
 Lemma~\ref{LemLB} to get a lower bound on the denominator in the r.h.s of Eq.~\eqref{eq:trick-applied}:
\begin{linenomath*}\[ \sum_{m=m_0+1}^{2m_0}\Pr(Z(m)\in B(S') ) = \sum_{m=m_0+1}^{2m_0} \int_{x\in B(S')}p^Z_m(x)dx \geq \sum_{m=m_0+1}^{2m_0} \frac{c}{m^2}\lvert B(S')\rvert =\Omega\left( \frac{D}{\sqrt{n}}\right). \] \end{linenomath*}
Next, we provide an upper bound to the numerator of the r.h.s of Eq.~\eqref{eq:trick-applied} which is the number of returns to $S'$ conditioning on the fact that $Z(0)=z$, for some $z\in B(S')$. Let us denote this process by $Z^{z}$ (note that $Z=Z^0$). Then, 
\begin{linenomath*}\begin{align}
\sum_{m=0}^{m_0}\Pr(Z^{z}(m)\in B(S')) &\leq 2+\sum_{m=2}^{m_0} \Pr(Z^{z}(m)\in B(S')). \label{eq:returns-except-3}\end{align}\end{linenomath*}
Clearly, the probability density function $p^{Z^{z}(m)}$ of $Z^{z}(m)$ is obtained by a translation from $p^{Z(m)}$. Thus, by Corollary~\ref{cor:monotonicity}, we have for any $y\in \R^2$:
\begin{linenomath*}\[p^{Z^{z}(m)}(y) \leq \frac{1}{\norm{ y-z}^2}.
\] \end{linenomath*}
In particular, for 
$y$ such that $\norm{y-z}\geq 2$,
\begin{linenomath*}\begin{equation}\label{eq:yfar}
\Pr(Z^z(m)\in B(y))\leq \frac{1}{(\norm{ y-z}-1)^2},\end{equation}\end{linenomath*}
since every $w\in B(y)$ satisfies $\norm{w-z}\geq \norm{y-z}-1\geq 0$.

Next, as $z\in B(S')$, consider an index $i_z\in \{0,\dots,d-1\}$ for which $z\in B(s(i_z))$. Let $r_m=\frac{m}{\sqrt{c}\log m}$ with $c$ being the constant $c'$ mentioned in Lemma \ref{LemUB}. To exploit Eq.~\eqref{eq:yfar}, we define \begin{linenomath*}\[I=\{i\in \{0,\ldots,d-1\} \mid \lvert s(i)_1-s(i_z)_1\rvert= \lvert i-i_z\rvert \leq r_m+2\},\] \end{linenomath*}
 and $I^c=\{0,\ldots,d-1\}\setminus I$. We proceed with the following decomposition:
\begin{linenomath*}\begin{align}
\Pr(Z^{z}(m)\in B(S'))\leq \sum_{i\in I} \Pr\left(Z^{z}(m)\in B(s\small(i\small))\right) + \sum_{i\in I^c}\Pr\left(Z^{z}(m)\in B(s\small(i\small))\right). \label{eq:returns-decompose-according-to-ub}
\end{align}\end{linenomath*}
By construction, $\lvert I\rvert \leq 2(r_m+2)+1$. 
Hence, using Lemma~\ref{LemUB}, the first sum in the r.h.s of Eq.~\eqref{eq:returns-decompose-according-to-ub}
is at most:
\begin{linenomath*}\begin{align*}
\sum_{i\in I} \Pr(Z^{z}(m)\in B(s(i)))  \leq \frac{\lvert I\rvert }{r_m^2}=O\left(\frac{1}{r_m}\right).
\end{align*}\end{linenomath*}
Next, we aim to upper bound the sum on $I^c$.
By the triangle inequality, for any $i\in I^c$, we have $\norm{ s(i)-z}\geq
\norm{ s(i)-s(i_z)}-1\geq \lvert i-i_z\rvert -1 >1$. Hence, by Eq.~\eqref{eq:yfar},
 we get:
\begin{linenomath*}\begin{align*}
    \sum_{i\in I^c}\Pr(Z^{z}(m)\in B(s(i)))&\leq
    \sum_{i\in I^c}
 \frac{1}{(\norm{ s(i)-z}-1)^2}
 \\& \leq \sum_{i\in I^c}
 \frac{1}{(\lvert i-i_z\rvert-2)^2}
 \\& \leq
 \sum_{k\in \Z, \lvert k\rvert \geq \lceil r_m\rceil }\frac{1}{k^2}
    = O\left(\frac{1}{r_m}\right),
    \label{eq:returns-far-points}
\end{align*}\end{linenomath*}
where we used in the last line that $i\in I^c\subset \{i_z+k\mid k\in \Z \mbox{~and~} \lvert k\rvert> r_m+2 \}$. Thus, we get by Eq.~\eqref{eq:returns-decompose-according-to-ub}:
\begin{linenomath*}\[
\Pr(Z^{z}(m)\in B(S'))= O\left(\frac{1}{r_m}\right).
\] \end{linenomath*}
Plugging this in Eq.~\eqref{eq:returns-except-3}, together with the definition $r_m=\frac{m}{\sqrt{c}\log m}$, and the fact that $m_0=O(\sqrt{n})$, we get:
\begin{linenomath*}\begin{align*}
    \sum_{m=0}^{m_0}\Pr(Z^z(m)\in B(S')) =2+O\left(\sum_{m=2}^{m_0}\frac{\log m}{m}\right)=O(\log^2 n),
\end{align*}\end{linenomath*}
which stands for any $z\in B(S')$. Altogether, the fraction in Eq.~\eqref{eq:trick-applied} satisfies:
\begin{linenomath*}\begin{align*}
	\frac{\sup_{z\in B(S')}\sum_{m=0}^{m_0} \Pr(Z^z(m)\in B(S'))}{\sum_{m=m_0}^{2m_0}  \Pr(Z(m)\in B(S'))} = O \left ( \frac{\sqrt{n}}{D}\cdot {\log^2  n}\right ).
\end{align*}\end{linenomath*} 
Together with the fact that $m_0=O(\sqrt{n})$,  Lemma \ref{LemHitFromPointwise} implies that $\E(m_{detect}^X(S))= O(\frac{n}{D}\log^2 n)$. Finally, using Claim \ref{claim:timeVSmoves} and the fact that $\tau=\Theta(\log n)$, we have
\begin{linenomath*}\[t_{detect}^{X}(S)=O\left (\frac{n\log^3 n}{D}\right ), \] \end{linenomath*}
and since this is true for any connected set $S\subseteq \T_n$ of diameter $D$, we obtain $t_{detect}^{X}(n,D)=O\left (\frac{n\log^3 n}{D}\right )$, as desired.
\end{proof}

\subsection{Proof of Lemma \ref{LemHitFromPointwise}}\label{app:lemma}

The goal of this section is to prove of Lemma \ref{LemHitFromPointwise}. Recall, we
consider a random walk process $Z$ on $\R^2$ and its projection $X$ on the torus $\T_n$. Let $S\subset \T_n$. Our goal is to show that for any $m_0$,
	\begin{linenomath*}\begin{align}
	\E(m_{detect}^X(n,D))=O\left(m_0\frac{\sup_{z_0\in B(S)}\sum_{m=0}^{m_0} \Pr(Z^{z_0}(m)\in B(S))}{\sum_{m=m_0}^{2m_0}  \Pr(Z(m)\in B(S))}\right). \end{align}
	\end{linenomath*}
\begin{proof}

We begin with the following claim that shows that if the probability to detect $S$ by step $m$ is at least $p$ for any starting point, then the expected detecting step is at most $m/p$. 
The claim will then be used to prove the lemma by showing that the inverse of the supremum in Eq.~\eqref{eq:trick-applied} is a lower bound for $\Pr(m^X_{detect}(S)\leq 2m_0)$. 
\begin{claim}	\label{lem:hit-time-from-point-probability}Fix an integer $m>0$ and a real number $q>0$ and a set $S\subseteq \T_n$. Denote by $X^x$ the process $X$ starting at $X(0)=x$. If, for any $x\in \T_n$, we have $\Pr(m^{X^x}_{detect}(S)\leq m) \geq q$ then $\E(m^X_{detect}(S))\leq mq^{-1}$.
\end{claim}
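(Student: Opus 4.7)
The approach is a standard Markov/geometric-tail argument that exploits the i.i.d.\ step structure of the random walk $X$. The idea is to split time into consecutive blocks of $m$ steps and show that, independently from block to block, each block has probability at least $q$ of hitting $S$; hence the number of blocks required is stochastically dominated by a geometric random variable of parameter $q$.

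Concretely, for $k\geq 1$ define block $k$ to be the set of steps $(k-1)m+1,\ldots,km$, and let $K$ be the first index $k$ during which the walk $X$ detects $S$. Then by construction
\[
m^X_{detect}(S)\ \leq\ mK.
\]
The key step is the bound $\Pr(K>k)\leq (1-q)^k$ for every $k\geq 0$. To establish this, condition on the history $X(0),X(1),\ldots,X((k-1)m)$, on the event $\{K>k-1\}$ that $S$ has not been detected during blocks $1,\ldots,k-1$, and on the value $X((k-1)m)=x$. Because the step increments $V(s)$ are i.i.d., the conditional distribution of the next $m$ positions $X((k-1)m+1),\ldots,X(km)$ is exactly that of $X^x(1),\ldots,X^x(m)$. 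By the hypothesis of the claim, uniformly in $x$, the probability that $X^x$ detects $S$ in its first $m$ steps is at least $q$. Therefore
\[
\Pr\!\bigl(K>k\ \big|\ K>k-1,\ X((k-1)m)=x\bigr)\ \leq\ 1-q,
\]
and integrating out $x$ and iterating over $k$ gives $\Pr(K>k)\leq (1-q)^k$.

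Finally, using $\E(K)=\sum_{k\geq 0}\Pr(K>k)$ together with the geometric sum yields
\[
\E(K)\ \leq\ \sum_{k\geq 0}(1-q)^k\ =\ \tfrac{1}{q},
\]
so $\E(m^X_{detect}(S))\leq m\,\E(K)\leq m q^{-1}$, as required.

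The argument is essentially routine; the only point that deserves care is the conditional independence step, but this is immediate from the fact that $X$ is a random walk with i.i.d.\ increments, so after any fixed number of steps its future is distributed as a fresh copy of the process started at the current location. No further analytic or probabilistic estimates are needed.
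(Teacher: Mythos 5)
Your proof is correct and follows essentially the same approach as the paper's: decompose time into consecutive blocks of $m$ steps, use the Markov property to argue each block succeeds with probability at least $q$ independently of the past, and bound the number of blocks by a geometric random variable with mean $1/q$. The paper phrases this as a sequence of Bernoulli "trials" and omits the explicit tail bound $\Pr(K>k)\leq(1-q)^k$, but the underlying argument is identical.
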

\begin{proof}[Proof of Claim \ref{lem:hit-time-from-point-probability}]The proof of the claim is simple. Given a set $S$,
define a Bernoulli variable $\chi$ as follows. 
Consider $m$ steps of the process and define $\chi$ to be ``success'' if and only if the process hits $S$ within these $m$ steps. Note that $\chi$ has probability at least $q$ to be ``success'' regardless of where the process starts, by hypothesis.  Hence, the expected number of trials until $\chi$ succeeds is at most $1/q$. This translates to  $\E(m^X_{detect}(S))\leq mq^{-1}$, and establishes Claim \ref{lem:hit-time-from-point-probability}.
\end{proof}

To conclude the proof of Lemma \ref{LemHitFromPointwise}, relying on Claim \ref{lem:hit-time-from-point-probability}, it is sufficient to prove that, for any $S\subset \T_n$,
\begin{linenomath*}\begin{equation}\label{eq:proba-X-hits-x}
\Pr(m^X_{detect}(S)\leq 2m_0) \geq \frac{\sum_{m=m_0}^{2m_0}  \Pr(Z(m)\in B(S))}{\sup_{z_0\in B(S)}\sum_{m=0}^{m_0} \Pr(Z^{z_0}(m)\in B(S) )} .\end{equation}\end{linenomath*}

 For this, we rely on the following identity (see also Adler et al.~\cite{hunter}). 
 If $N$ is a non-negative random variable then:
\begin{linenomath*}\begin{align}
\Pr(N\geq 1 ) = \frac{\E ( N)}{\E(N\mid N \geq 1)}.
\label{eq:trick}
\end{align}\end{linenomath*} 
We employ this identity for the random variable $N_S(m_0,2m_0)$ which is the number of times $Z$ visits $B(S)$ between steps $m_0$ and $2m_0$ included. Note that this quantity is positive if and only if $B(S)$ is visited during this interval by $Z$. Moreover, since $S\subset \T_n$ and $X$ is the projection of $Z$ on the torus, then  $Z(m)\in B(S)$ implies that also $X(m)\in B(S)$. 
Therefore,
\begin{linenomath*}\begin{equation} \label{eq:probability-hit-x-N}\Pr ( m^X_{detect}(S) \leq 2m_0) \geq \Pr \left (N_S(m_0,2m_0) \geq 1 \right ).\end{equation}\end{linenomath*}
Note that $N_S(m_0,2m_0)=\sum_{m=m_0}^{2m_0}\mathbf{1}_{Z(m)\in B(S)} $, so that \begin{linenomath*}\begin{equation}\label{eq:E_x}\E ( N_S(m_0,2m_0) )=\sum_{m=m_0}^{2m_0} \Pr( Z(m)\in B(S)).\end{equation}\end{linenomath*} Note also that the denominator in Eq.~\eqref{eq:trick} applied to $N_S(m_0,2m_0)$ verifies
 \begin{linenomath*}\begin{align*} \E \left ( N_S(m_0,2m_0) \mid N_S(m_0,2m_0) \geq 1\right )&= \E \left ( N_S(m_0,2m_0) \mid  Z(m)\in B(S) \text{ for some } m\in [m_0,2m_0]\right ) \\
 & \leq \sup_{z_0\in B(S)}\E \left ( N_S(m_0,2m_0) \mid  Z(m_0)=z_0 \right ) 
 \\ &\leq \sup_{z_0\in B(S)}\E \left ( N_S(0,m_0) \mid  Z(0)=z_0 \right ), \end{align*}\end{linenomath*}
 where the first inequality comes from the fact that visiting $B(S)$ earlier (i.e., for $m=m_0$ instead of $m>m_0$) can only increase the number of returns to $B(S)$, and the second inequality is a consequence of the Markov property. Finally, write, as above,
 \begin{linenomath*}\begin{equation}\label{eq:E_0}
     \sup_{z_0\in B(S)}\E \left ( N_S(0,m_0) \mid  Z(0)=z_0 \right ) = \sup_{z_0 \in B(S)}\sum_{m=0}^{m_0} \Pr( Z^{z_0}(m)\in S). 
 \end{equation}\end{linenomath*}
Therefore, when applied to $N_S(m_0,2m_0)$, Eq.~\eqref{eq:trick}, combined with Eqs. \eqref{eq:probability-hit-x-N}, \eqref{eq:E_x} and \eqref{eq:E_0}, implies that
\begin{linenomath*}\begin{equation*}
{\Pr}_0(m^X_{detect}(S) \leq 2m_0) \geq \frac{\sum_{m=m_0}^{2m_0}  \Pr( Z(m)\in B(S))}{\sup_{z_0\in B(S)}\sum_{m=0}^{m_0} {\Pr}( Z^{z_0}(m)\in B(S))}.
\end{equation*} \end{linenomath*}
This establishes Eq.~\eqref{eq:proba-X-hits-x}, and thus completes the proof of Lemma~\ref{LemHitFromPointwise}. \end{proof}

\subsection{Proofs of Lemmas \ref{LemLB} and \ref{LemUB}}\label{app:lemmas}
In this section we aim to prove the following lower and upper bounds, stated in Lemmas \ref{LemLB} and \ref{LemUB}, respectively. The proof of Lemma \ref{LemLB} is given in 
Section \ref{sec:lowerbound}, and the proof of Lemma \ref{LemUB} is given in Section \ref{sec:upperbound}. Before presenting these proofs, let us first 
 first establish lower and upper bounds on the distance traveled by the walk at step $m$.
\subsubsection{Superdiffusive properties of the Cauchy walk on \texorpdfstring{$\R^2$}{the plane}}
We first remark that the probability to choose a length in a given interval is easily computed from Eq.~\eqref{eq:cauchy-precise-distribution-appendix}. 
\begin{observation}\label{obs:tail-law}
 The probability to do a step of length at most $\ell> 0$ is $a\ell$ if $\ell \leq 1$ and $a(2-\frac{1}{\ell})$ if $\ell >1$. For integers $\lmax\geq\ell_2 \geq \ell_1\geq 1$, the probability to choose a length in $[\ell_1,\ell_2]$ is $a( \frac{1}{\ell_1}-\frac{1}{\ell_2})$.
 \end{observation}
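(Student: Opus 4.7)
The plan is to derive both formulas by direct integration of the Cauchy density given in Eq.~\eqref{eq:cauchy-precise-distribution} with $\mu=2$, namely $p(\ell)=a$ for $\ell\in[0,1]$ and $p(\ell)=a\ell^{-2}$ for $\ell\in(1,\lmax)$. Since the density is piecewise elementary, no probabilistic idea is needed: the observation is essentially a bookkeeping exercise, and the only care required is to split the integration at $\ell=1$, where the definition of $p$ changes.

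First, for the cumulative distribution, I would consider two cases. If $\ell\le 1$, then the probability of a step of length at most $\ell$ is $\int_0^\ell a\,ds = a\ell$. If $\ell>1$, I split the integral as
\[
\int_0^1 a\,ds + \int_1^\ell a s^{-2}\,ds = a + a\!\left[-s^{-1}\right]_1^\ell = a + a\!\left(1-\tfrac{1}{\ell}\right) = a\!\left(2-\tfrac{1}{\ell}\right),
\]
matching the claimed formula.

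Next, for the second formula, for integers $1\le\ell_1\le\ell_2\le\lmax$ both endpoints lie in the power-law regime, so
\[
\Pr\bigl(\norm{V}\in[\ell_1,\ell_2]\bigr) = \int_{\ell_1}^{\ell_2} a s^{-2}\,ds = a\!\left(\tfrac{1}{\ell_1}-\tfrac{1}{\ell_2}\right),
\]
which is the stated expression. Alternatively, one can obtain this by subtracting the two cumulative probabilities computed in the first part.

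There is no real obstacle here: the proof is two elementary antiderivatives of $a$ and $as^{-2}$, and the only mild subtlety is making sure the split at $\ell=1$ is handled consistently, which is automatic since the assumption $\ell_1\ge 1$ in the second statement keeps both integrals in the power-law regime.
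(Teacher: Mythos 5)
Your proof is correct and takes the only natural route — direct integration of the piecewise density from Eq.~\eqref{eq:cauchy-precise-distribution}, splitting at $\ell=1$. The paper itself states this as an observation with no proof supplied (it is introduced by the remark that the quantity "is easily computed from Eq.~\eqref{eq:cauchy-precise-distribution}"), so your two antiderivative computations simply fill in that elementary calculation.
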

 
The next claim quantifies the probability that the Cauchy process goes to a distance of at least $d$ after $m$ steps.
In particular, it shows that in step $m$, the process is at a distance of $\Omega(m)$ with constant probability, and that it is at a distance of $\Omega(m/\log m)$ with high probability in $m$.

\begin{claim}\label{claim:X-far}For any integer $m\geq 2$ and any real $d\in [1,\frac{\lmax}{3}]$  we have, \begin{linenomath*}\[\Pr\left(\exists s\leq m \mbox{~s.t.~} \norm{ Z(s) }\geq d
\right)\geq 1-e^{-cm/d},\] \end{linenomath*}
for some constant $c>0$.
 In particular this lower bound is at least 
 \begin{itemize}
    \item $1-O(m^{-2})$ if $d=c'\frac{m}{\log m}$ with $c'>0$ a small enough constant,
    \item $\Omega(1)$ if $d=c'm$ for any constant $c'>0$ with $c'm\leq \lmax/3$.
\end{itemize}
\end{claim}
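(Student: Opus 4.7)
The plan is to reduce the claim to a simple single-step argument: with high probability, at least one of the first $m$ steps is long (length $\geq 2d$), and a single long step forces the walk to visit a point at distance $\geq d$ from the origin either before or after taking it.

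First, using Observation~\ref{obs:tail-law}, the probability that one step has length at least $2d$ equals $a\left(\tfrac{1}{2d}-\tfrac{1}{\lmax}\right)$. Since $d \leq \lmax/3$, we have $\tfrac{1}{\lmax}\leq \tfrac{1}{3d}$, hence this probability is at least $a\left(\tfrac{1}{2d}-\tfrac{1}{3d}\right)=\tfrac{a}{6d}$. Call this lower bound $p_d$, so $p_d = \Omega(1/d)$.

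Second, I would observe that if the $s$-th step has length $\geq 2d$, then by the triangle inequality
\[\|Z(s)\|+\|Z(s-1)\|\;\geq\;\|Z(s)-Z(s-1)\|\;\geq\;2d,\]
so at least one of $\|Z(s-1)\|$, $\|Z(s)\|$ is $\geq d$. In particular, some time $s'\leq m$ satisfies $\|Z(s')\|\geq d$. Therefore
\[\Pr\bigl(\exists s\leq m:\|Z(s)\|\geq d\bigr)\;\geq\;\Pr\bigl(\exists s\leq m:\|V(s)\|\geq 2d\bigr).\]
Since the step-lengths $\|V(s)\|$ are i.i.d., the right-hand side equals $1-(1-p_d)^m \geq 1-e^{-p_d m}\geq 1-e^{-am/(6d)}$, which gives the main bound with $c=a/6$.

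The two consequences follow by substitution. For $d=c'm/\log m$, we get $e^{-cm/d}=m^{-c/c'}$, which is $O(m^{-2})$ as soon as $c'\leq c/2$. For $d=c'm$ with $c'm\leq \lmax/3$, we get $e^{-cm/d}=e^{-c/c'}$, a constant strictly less than $1$, so the probability is bounded below by a positive constant. There is no real obstacle here: the only delicate point is the factor $2$ (or $3$) inserted in the threshold so that the triangle-inequality step is valid and the tail estimate remains $\Omega(1/d)$ uniformly over the permitted range of $d$.
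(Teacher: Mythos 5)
Your proof is correct and follows essentially the same route as the paper: lower-bound the probability that some one of the first $m$ steps has length $\geq 2d$ via Observation~\ref{obs:tail-law} and independence, then argue via the triangle inequality (the paper phrases it as a two-case dichotomy on $\|Z(s-1)\|$, but the content is identical) that such a long step forces the walk past distance $d$. The substitutions for the two ``in particular'' cases are also handled exactly as in the paper.
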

\begin{proof}
By Observation \ref{obs:tail-law}, the probability that a given step has a length at least $2d$ is $a(\frac{1}{2d}-\frac{1}{\lmax})\geq \frac{a}{6d}$.
Since the steps are independent, the probability of the event $\mathcal{A}$ that at least one of the steps $1,\dots,m$ has a length at least $2d$ is
\begin{linenomath*}\[\Pr(\mathcal{A})\geq 1-\left(1-\frac{a}{6d}\right)^{m}. \] \end{linenomath*}
Writing $\left(1-a/6d\right)^{m}=e^{m\log(1-\frac{a}{6d})}\leq e^{-cm/d}$, for some constant $c>0$, we get 
 \begin{linenomath*}\[\Pr(\mathcal{A})\geq 1-e^{-cm/d}. \] \end{linenomath*}
 To conclude, it suffices to show that  $\mathcal{A}$ implies that there exists a step $s\leq m$ for which $\norm{ Z(s)}\geq d$. 
Indeed, suppose that $\mathcal{A}$ occurs and let $s\leq m$ be the first step of length $2d$ or more.
Then,
\begin{itemize}
    \item Either $\norm{ Z(s-1)}\geq d$, in which case we are done.
    \item Or $\norm{ Z(s-1)} < d$. In this case, as $Z(s)=Z(s-1)+V(s)$, we have $\norm{ Z(s)} \geq \norm{ V(s)} - \norm{ Z(s-1)} > 2d-d=d$.
\end{itemize}
This concludes the proof of Claim \ref{claim:X-far}.
\end{proof}

Claim \ref{claim:X-far} asserts that, with some probability, the walk goes far from $0$. Conversely, the next claim says that with some constant probability, the walk does not get too far.

\begin{claim}\label{claim:X-m-less-than-cm}
\begin{itemize}
    \item For any constant $c>0$, there exists a constant $\delta>0$ such that, for any two integers $1\leq s\leq m$, we have $\Pr(\norm{ Z(s)} \leq cm )\geq \delta$.
    \item For any constant $0<\delta<1$, there exists a (large enough) constant $c>0$ such that, for any two integers $1\leq s\leq m$, we have $\Pr(\norm{ Z(s)} \leq cm )\geq \delta$.
\end{itemize}
\end{claim}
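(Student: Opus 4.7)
The plan is to prove the second item by a truncation-plus-Chebyshev argument and then deduce the first item using the radial monotonicity of $p^{Z(s)}$ (Corollary~\ref{cor:monotonicity}). For the second item, given $\delta\in(0,1)$, I fix a large constant $C>0$ and let $\mathcal{B}_C$ be the event that every step length $\norm{V(1)},\dots,\norm{V(s)}$ is at most $Cm$. By Observation~\ref{obs:tail-law}, the probability that a single step exceeds $Cm$ is at most $a/(Cm)$, so $\Pr(\mathcal{B}_C)\geq (1-a/(Cm))^m\geq e^{-\Theta(1/C)}$, which is at least $\sqrt{\delta}$ once $C$ is large enough. Conditioned on $\mathcal{B}_C$, each step length is supported on $[0,Cm]$; using $p(\ell)=a\ell^{-2}$ on $[1,Cm]$, the conditional second moment of a step length equals $\int_0^{Cm}\ell^2 p(\ell)\,d\ell/\Pr(\norm{V(1)}\leq Cm)=O(Cm)$, analogous to Claim~\ref{claim:exp+var} but with $Cm$ replacing $\lmax$; the same order bound transfers to each projected coordinate. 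Therefore $\Var(Z_1(s)\mid\mathcal{B}_C)\leq s\cdot O(Cm)\leq O(Cm^2)$ for $s\leq m$, while the conditional mean vanishes by the rotational symmetry of the step. Chebyshev on each coordinate, combined with $\norm{Z(s)}\leq\sqrt{2}\max(|Z_1(s)|,|Z_2(s)|)$, then gives $\Pr(\norm{Z(s)}>cm\mid\mathcal{B}_C)=O(C/c^2)$, which is at most $1-\sqrt{\delta}$ once $c$ is of order $\sqrt{C/(1-\sqrt{\delta})}$. Combining,
\[\Pr(\norm{Z(s)}\leq cm)\geq \Pr(\mathcal{B}_C)\bigl(1-\Pr(\norm{Z(s)}>cm\mid \mathcal{B}_C)\bigr)\geq \sqrt{\delta}\cdot\sqrt{\delta}=\delta.\]

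For the first item, I apply the second with $\delta=1/2$ to obtain a constant $c_0$ such that $\Pr(\norm{Z(s)}\leq c_0 m)\geq 1/2$ uniformly over $1\leq s\leq m$. If $c\geq c_0$ the bound is immediate; for $c<c_0$ I use Corollary~\ref{cor:monotonicity}, which tells us that $p^{Z(s)}$ is radial and non-increasing. Writing $\Pr(\norm{Z(s)}\leq r)=\int_0^r g(u)\cdot 2\pi u\,du$ with $g$ non-increasing and substituting $v=u^2$ rewrites this as $\pi\int_0^{r^2}\tilde g(v)\,dv$ with $\tilde g$ non-increasing. Since the average of a non-increasing function over $[0,a]$ is itself non-increasing in $a$, the ratio $\Pr(\norm{Z(s)}\leq r)/r^2$ is non-increasing in $r$, so
\[\Pr(\norm{Z(s)}\leq cm)\geq (c/c_0)^2\cdot \Pr(\norm{Z(s)}\leq c_0 m)\geq c^2/(2c_0^2),\]
which is a positive constant depending only on $c$.

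The main obstacle is that the unconditional variance of a single Cauchy step is $\Theta(\lmax)=\Theta(\sqrt{n})$, much too large for a direct Chebyshev bound on $Z(s)$ once $s$ approaches $m$. The truncation event $\mathcal{B}_C$, which holds with constant probability for any constant $C$, effectively replaces $\lmax$ by $Cm$ in the step-length variance computation, bringing $\Var(Z_1(s)\mid \mathcal{B}_C)$ down to $O(Cm^2)$ and making Chebyshev applicable. A secondary subtlety is that Chebyshev alone only controls the upper tail; translating a concentration of $\norm{Z(s)}$ in a ball of radius $c_0 m$ into concentration in a \emph{small} ball of radius $cm$ requires the radial-monotonicity trick above, since without it the density of $Z(s)$ might in principle pile up near the boundary of the larger ball.
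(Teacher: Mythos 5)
Your proof of the second item follows essentially the same truncation-plus-Chebyshev route as the paper: condition on the event $\mathcal{B}_C$ that all $m$ step-lengths stay below $Cm$, note this has probability at least $\sqrt{\delta}$ for $C$ a large constant, bound the conditional per-step second moment by $O(Cm)$ so that $\Var(Z_1(s)\mid\mathcal{B}_C)=O(Cm^2)$, and finish with Chebyshev plus a union bound over the two coordinates. Where you genuinely diverge from the paper is the first item. The paper proves it by re-running the same argument with a \emph{small} truncation level $c''$ (chosen so that $8c''/c^2<1$), which forces a case split between $c''m\leq 1$ and $c''m\geq 1$ and a somewhat delicate balancing of the two factors $\Pr(\mathcal{A})$ and $\Pr(\norm{Z^{\mathcal{A}}(s)}\leq cm)$. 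You instead deduce the first item from the second via the observation that $r\mapsto\Pr(\norm{Z(s)}\leq r)/r^2$ is non-increasing, a direct consequence of the radial monotonicity in Corollary~\ref{cor:monotonicity} (the average of a non-increasing radial density over a disc can only decrease as the radius grows). This gives $\Pr(\norm{Z(s)}\leq cm)\geq(c/c_0)^2\Pr(\norm{Z(s)}\leq c_0m)$, which is clean and correct. The two proofs are of comparable length, but yours avoids the small-$c''$ regime (and its $c''m\leq 1$ sub-case) entirely, makes explicit that the first item is really a monotonicity fact once the second item is known, and is arguably the more transparent route. The paper's version, conversely, keeps the two items structurally symmetric, handling both with one unified Chebyshev computation.
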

\begin{proof}Fix an integer $m\geq 1$ and let $c''$ be a constant, to be chosen later. Let $\mathcal{A}$ denote the event that each of the first $m$ steps has length at most $\ell=c'' m$. We have, for any integer $s\leq m$, and any constant $c>0$,
\begin{linenomath*}\begin{equation}
    \label{eq:norm-X-cond-small-steps}\Pr(\norm{ Z(s)} \leq cm ) \geq \Pr(\mathcal{A}) \cdot\Pr(\norm{ Z(s)} \leq cm  \mid\mathcal{A}).
\end{equation}\end{linenomath*}  
We shall study separately each term in the r.h.s~of Eq.~\eqref{eq:norm-X-cond-small-steps}, and establish the following:
\begin{itemize}
    \item For the first item of Claim \ref{claim:X-m-less-than-cm}, we shall take $c''>0$ so that both factors are constants (hence their multiplication is at least some constant $\delta)$,
    \item For the second item of Claim \ref{claim:X-m-less-than-cm}, where the bound $\delta$ is given, we will show that both terms can be made at least $\sqrt{\delta}$ by choosing $c$ and $c''$ appropriately.
\end{itemize} 
 Proceeding with the first term in the r.h.s~of Eq.~\eqref{eq:norm-X-cond-small-steps}, by Observation \ref{obs:tail-law}, we have:
\begin{linenomath*}\[\Pr(\mathcal{A})=\begin{cases}(ac''m)^m \text{ if } c''m\leq 1 \\ 
(2a)^m(1-\frac{1}{2c''m})^m \text{ if } c''m\in [1,\lmax]\\
1 \text{ if } c''m\geq \lmax \end{cases}.\] \end{linenomath*}
For $1\leq m\leq \frac{1}{c''}$, we have $(ac''m)^m \geq (ac''m)^{\frac{1}{c''}}$ as $ac''m\leq c''m \leq 1$, and $ (ac''m)^{\frac{1}{c''}}\geq (ac'')^{\frac{1}{c''}}$ as $m\geq 1$. For the second item,  note that the function $(1-\frac{\alpha}{x})^x=e^{x\log(1-\frac{\alpha}{x})}$ is increasing in $x\geq \alpha$ and thus, for $x\geq 2\alpha$, we have $(1-\frac{\alpha}{x})^x\geq 2^{-2\alpha}$. Applying this with $\alpha=\frac{1}{2c''}$, we have, $(1-\frac{1}{2c''m})^m\geq 2^{-\frac{1}{c''}}$, for $m\geq \frac{1}{c''}$. Overall, using $2a\geq 1$, we get 
\begin{linenomath*}\[\Pr(\mathcal{A})\geq \begin{cases}(\frac{c''}{2})^{\frac{1}{c''}} \text{ if } c''m\leq 1 \\ 
2^{-\frac{1}{c''}} \text{ if } c''m\in [1,\lmax]\\
1 \text{ if } c''m\geq \lmax\end{cases}.\] \end{linenomath*} 
Hence,
\begin{itemize}
    \item $\Pr(\mathcal{A})=\Omega(1)$ for any given $c''>0$.
    \item Furthermore, with respect to the second item of Claim \ref{claim:X-m-less-than-cm} where $0<\delta<1$ is given, we can choose $c''$ large enough (in particular, we take $c''\geq 1$ so that $c''m\geq 1$), to ensure that $\Pr(\mathcal{A})\geq 2^{-\frac{1}{c''}} \geq \sqrt{\delta}$.
\end{itemize} 
We are now ready to lower bound the second factor in  Eq.~\eqref{eq:norm-X-cond-small-steps}, namely, $\Pr(\norm{ Z(s)} \leq cm  \mid\mathcal{A})$. We begin with a notation: If $X$ is a random variable, let us write $X^{\mathcal{A}}$ for the random variable $X$ conditioned on the occurrence of $\mathcal{A}$. Our first goal is to prove that 
\begin{linenomath*}\begin{equation}\label{eq:Z-Chebyshev}
    \Pr(\norm{ Z^{\mathcal{A}}(s)} \leq cm  ) \geq 1-\frac{8s\E(\norm{V^{\mathcal{B}}}^2)}{c^2m^2},
\end{equation}\end{linenomath*}
where $V^{\mathcal{B}}=(V_1^{\mathcal{B}},V_2^{\mathcal{B}})$ is one step-vector of the walk on $\R^2$, conditioned on the event $\mathcal{B}$ that it is at most $c''m$.
Eq.~\eqref{eq:Z-Chebyshev} will be established by applying Chebyshev's inequality on each of the projections on the axes and using a union bound argument. Specifically, decomposing the walk $Z$ on the two axes, by writing $Z=(Z_1,Z_2)$, we first use a union bound to obtain:
\begin{linenomath*}\begin{align*} \Pr(\norm{ Z^{\mathcal{A}}(s)} > cm  ) &\leq \Pr(\exists i=1,2 \mbox{~s.t.~} \lvert Z_i^{\mathcal{A}}(s)\rvert  > cm/2)\\
&\leq \Pr(\lvert Z_1^{\mathcal{A}}(s)\rvert  > cm/2)+\Pr(\lvert Z_2^{\mathcal{A}}(s)\rvert  > cm/2)\\
&\leq 2\Pr( \lvert Z_1^{\mathcal{A}}(s)\rvert  > cm/2),
\end{align*}\end{linenomath*}
where we used the symmetry to deduce that $Z_1$ and $Z_2$ share the same distribution.
Hence, 
\begin{linenomath*}\[ \Pr(\norm{ Z^{\mathcal{A}}(s)} \leq cm ) \geq 1-2\Pr(\lvert Z_1^{\mathcal{A}}(s)\rvert  > cm/2). \] \end{linenomath*}
Next, we aim to lower bound the r.h.s.
 Relying on the fact that the expectation of $Z_1^{\mathcal{A}}(s)$ is 0 for any $s$, 
by Chebyshev's inequality, we have: 
\begin{linenomath*}\[ \Pr(\lvert Z_1^{\mathcal{A}}(s) \rvert > cm/2 )\leq \frac{4\Var(Z_1^{\mathcal{A}}(s) )}{c^2m^2}.
\] \end{linenomath*}
Since $Z_1^{\mathcal{A}}(s)$ is the sum of $s$ independent steps that follow the same law as $V_1^{\mathcal{B}}$, we have: 
\begin{linenomath*}\[ \Var(Z_1^{\mathcal{A}}(s))= s\Var(V_1^{\mathcal{B}}).\] \end{linenomath*}
As the expectation of $V_1^{\mathcal{B}}$ is zero, we have $\Var(V_1^{\mathcal{B}})=\E((V_1^{\mathcal{B}})^2)$. Furthermore, since $\lvert V_1^{\mathcal{B}}\rvert \leq \norm{V^{\mathcal{B}}}$, we obtain:
\begin{linenomath*}\[ \Var(Z_1^{\mathcal{A}}(s))\leq s\E(\norm{V^{\mathcal{B}}}^2),\] \end{linenomath*}
which concludes the proof of Eq.~\eqref{eq:Z-Chebyshev}.
Next, let us estimate $\E(\norm{V^{\mathcal{B}}}^2)$. If, on the one hand, $c''m\leq 1$, then, when conditioning on $\mathcal{A}$, the length of a step is chosen uniformly at random in $[0,c''m]$. Thus, its second moment is \begin{linenomath*}\begin{equation}\label{eq:norm-V-m-small}
    \E(\norm{V^{\mathcal{B}}}^2)=\int_{0}^{c''m}\ell^2 \frac{d\ell}{c''m}=\frac{(c''m)^2}{3}.
\end{equation}\end{linenomath*}
On the other~hand, if $c''m\geq 1$, then $V^{\mathcal{B}}$ is a Cauchy walk with cut off $\ell_{max}=c''m$. Hence, its second moment is
\begin{linenomath*}\begin{align}\nonumber
   \E(\norm{V^{\mathcal{B}}}^2)&= a'\int_0^1 \ell^2d\ell +a'\int_1^{c''m} \ell^2\ell^{-2}d\ell \\
     &\leq a'\int_0^{c''m} 1d\ell=a'c''m\leq c''m.\label{eq:norm-V-m-big}
\end{align}\end{linenomath*}
Overall, by Eqs.~\eqref{eq:Z-Chebyshev}, \eqref{eq:norm-V-m-small} and \eqref{eq:norm-V-m-big} we find that, for $s\leq m$, 
\begin{linenomath*}\begin{align*}
    \Pr(\norm{ Z^{\mathcal{A}}(s)} \leq cm  )\geq \begin{cases}1-\frac{8sc''^2}{3c^2} \text{ if } c''m\leq 1\\
    1-\frac{8sc''}{c^2m} \text{ if } c''m\geq 1
    \end{cases}\\
    \geq \begin{cases}1-\frac{8c''}{3c^2} \text{ if } c''m\leq 1\\
    1-\frac{8c''}{c^2} \text{ if } c''m\geq 1
    \end{cases}.
\end{align*}\end{linenomath*}
We then conclude the proof of Claim \ref{claim:X-m-less-than-cm} by observing the following.
\begin{itemize}
    \item For the first item of Claim \ref{claim:X-m-less-than-cm}, we have proved that $\Pr(\mathcal{A})=\Omega(1)$ for any constant $c''>0$. Hence, we may now choose $c''$ small enough so that $\Pr(\norm{ Z^{\mathcal{A}}(s)} \leq cm  )=\Omega(1)$.
    \item For the second item of Claim \ref{claim:X-m-less-than-cm}, we have already chosen $c''$ to be large (in order to have $\Pr(\mathcal{A})\geq \sqrt{\delta}$, but we are free to choose $c$ large enough so that $\Pr(\norm{ Z^{\mathcal{A}}(s)} \leq cm  )\geq \sqrt{\delta}$.
\end{itemize}
\end{proof}

\subsubsection{Proof of Lemma \ref{LemLB} (lower bound)}\label{sec:lowerbound}

In this section we prove the following:
\LemLB
\begin{proof}
First note that for $m=1$, the lemma holds by the definition of the L\'evy process. Let us therefore consider an integer $m\geq 2$.

By the monotonicity property (Corollary \ref{cor:monotonicity}), it is enough to prove that there is some constant $c'>1$ such that,
\begin{linenomath*}\begin{equation}\label{eq:far}
\Pr(m\leq \norm{Z(m)} \leq c'm) =\Omega(1).
\end{equation}\end{linenomath*}
Indeed, if this holds, then, since the area of the ring $\{ y\in \R^2 \text{ s.t. } m\leq \norm{y} \leq c'm \}$ is $\Theta(m^2)$, then we would have that for at least one point $u$ in this ring,  $p^{Z(m)}(u)=\Omega(m^{-2})$. Then, 
 by monotonicity, 
for $x\in \R^2$ such that $\norm{x}\leq m$, we would have $p^{Z(m)}(x)\geq p^{Z(m)}(u)=\Omega(m^{-2})$ which is the desired lower bound.

We thus proceed to prove Eq.~\eqref{eq:far}. For this, let us define,  for a given $m\in [2,\alpha\lmax]$, the event \begin{linenomath*}\[\mathcal{A}_{far} =\exists s\leq m \mbox{ s.t.} \norm{Z(s)}\geq 2m .\] \end{linenomath*}
We next prove the following claim.
\begin{claim}\label{claim:zm-more-than-2m}
 $\Pr(\mathcal{A}_{far})=\Omega(1)$, where the constant in lower bound does not depend on $m$.
\end{claim}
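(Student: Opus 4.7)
The plan is to deduce the claim from Claim~\ref{claim:X-far}. Setting $d = 2m$ there, the hypothesis $d \in [1, \ell_{\max}/3]$ is equivalent to $m \leq \ell_{\max}/6$, and for such $m$ Claim~\ref{claim:X-far} yields
\[
\Pr(\mathcal{A}_{far}) \;\geq\; 1 - e^{-c m/(2m)} \;=\; 1 - e^{-c/2},
\]
a positive absolute constant independent of $m$. This already proves the claim whenever $m$ lies in the bulk range $[2,\, \ell_{\max}/6]$, and in particular covers the case $\alpha \leq 1/6$ of Lemma~\ref{LemLB}.

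For the residual regime $m \in (\ell_{\max}/6,\, \alpha\ell_{\max}]$, no single step can reach distance $2m$ once $2m > \ell_{\max}$, so the single-step argument underlying Claim~\ref{claim:X-far} breaks down. I would instead partition the first $m$ steps into $\Theta(m/\ell_{\max})$ consecutive blocks of length $\lfloor \ell_{\max}/6 \rfloor$, and apply the bulk-case argument inside each block to get, with constant probability, a net block-displacement of magnitude $\Theta(\ell_{\max})$. Then, projecting $Z(m)$ onto a single coordinate axis, so that the increments become one-dimensional i.i.d.\ random variables of variance $\Theta(\ell_{\max})$ by Theorem~\ref{thm:projection} and Claim~\ref{claim:exp+var}, I would apply a Paley--Zygmund anti-concentration bound on $\norm{Z(m)}^2$ together with Kolmogorov's maximal inequality to pass from the endpoint $Z(m)$ to $\sup_{s \leq m}\norm{Z(s)}$, thereby obtaining $\Pr(\mathcal{A}_{far}) = \Omega(1)$.

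The main obstacle I expect lies in this second regime: the random direction of each block-level displacement allows cancellation, and Paley--Zygmund only yields a useful anti-concentration estimate when the target threshold is comparable to the standard deviation, i.e.\ when $2m \lesssim \sqrt{m\ell_{\max}}$, or $m \lesssim \ell_{\max}/4$. Handling arbitrarily large $\alpha$ therefore appears to require either a finer maximal inequality for heavy-tailed random walks, or choosing the threshold separating the bulk and residual regimes more carefully so that the constant in the $\Omega(1)$ lower bound is allowed to depend on $\alpha$; I would spend most of the effort on that point.
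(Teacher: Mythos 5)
Your bulk-regime step (applying Claim~\ref{claim:X-far} with $d=2m$ to handle $m\leq\lmax/6$) is exactly what the paper does and is correct. The problem is the residual regime $m\in(\lmax/6,\alpha\lmax]$, where you correctly identify the cancellation obstruction but do not actually resolve it: as you observe, Paley--Zygmund on $\norm{Z(m)}^2$ only gives anti-concentration when the target distance $2m$ is of order the standard deviation $\sqrt{m\lmax}$, which fails once $m\gg\lmax$, i.e.~for any fixed $\alpha>1$. Neither of your two proposed fixes closes this gap --- a sharper maximal inequality still controls only the \emph{magnitude} of fluctuations and cannot by itself force the walk to make progress in a fixed direction, and ``allowing the constant to depend on $\alpha$'' is not a mechanism but a restatement of the goal (and is in any case already permitted, since the claim only requires independence from $m$, not from $\alpha$).

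The paper sidesteps cancellation by never summing block displacements with random signs. Using your same block structure (blocks of $\Theta(\lmax)$ steps, each of which reaches distance $\Omega(\lmax)$ with constant probability by Claim~\ref{claim:X-far}), it exploits the \emph{isotropy} of the step directions: conditioned on a block reaching distance $\lmax/3$, with constant probability the $x$-coordinate specifically has increased by at least $\lmax/4$. One then iterates this $\lceil 9\alpha\rceil$ times via the Markov property, accumulating a monotone increase of the $x$-coordinate of $\Theta(\alpha\lmax)\geq 2m$ after at most $m$ steps, with probability $c_\alpha^{\lceil 9\alpha\rceil}=\Omega(1)$ (the constant depends on $\alpha$, which is fine). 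This turns a random-sign sum into a monotone sum and avoids anti-concentration entirely; it is the missing idea in your outline. Without something of this form, your argument is incomplete for $\alpha>1/6$, which is the regime the theorem actually needs (the proof of Theorem~\ref{theorem:large-targets} invokes Lemma~\ref{LemLB} with $\alpha=4$).
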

\begin{proof}[Proof of Claim \ref{claim:zm-more-than-2m}]
By Claim \ref{claim:X-far}, we immediately get that the claim holds for any $m\in [2, \lmax/6]$.
We next show that the claim holds also for $m\in [\lmax/6,\alpha \lmax]$.  Intuitively, we prove this using a constant number of iterations.
Each iteration consists of at most $m'= \alpha' \lmax$ steps, with $\alpha'$ a small constant, during which we are guaranteed to go a distance of $\lmax/3$ with constant probability. Because the direction is chosen uniformly at random, at the cost of reducing this probability by a constant factor, we can further impose that 
the $x$-coordinate increases by a factor of, say, $\lmax/5$. As these iterations are independent, and since $\alpha$ is a constant, we can guarantee that  up to step $m=\alpha \lmax$,  the process goes away to a distance of at least $2\alpha \lmax$ with constant probability.

Formally, first notice that we can take $\alpha>1$ without loss of generality. Note now that since  $m\in [\lmax/6,\alpha \lmax]$, the second item in Claim~\ref{claim:X-far} implies that: 
\begin{linenomath*}\[ \Pr\left(\exists s\leq \frac{m}{10\alpha}\text{ s.t. } \norm{Z(s)}\geq \frac{\lmax}{3} \right) \geq c'_\alpha ,  \] \end{linenomath*} 
for some constant $c'_\alpha>0$. As a consequence, since the direction of $Z(s)$ is distributed uniformly at random, we have:
\begin{linenomath*}\begin{equation}\label{eq:make-it-repeat}\Pr\left(\exists s\leq\frac{m}{10\alpha}, Z_1(s) \geq \frac{\lmax}{4}\right) \geq c_\alpha,
\end{equation}\end{linenomath*} for some constant $c_\alpha>0$. When this occurs, let $s_1\leq \frac{m}{10\alpha}$ be such that $Z_1(s_1) \geq \frac{\lmax}{4}$. By the Markov property, starting from step $s_1$, we can then apply again \eqref{eq:make-it-repeat} to show that with probability $c_\alpha$, there is a $s_2\leq s_1+\frac{m}{10\alpha}\leq 2\frac{m}{10\alpha}$
such that $Z_1(s_2)\geq Z_1(s_1)+\frac{\lmax}{4}\geq 2\frac{\lmax}{4}$. Overall, this happens with probability $c_\alpha^2$. Repeating this $\lceil 9\alpha\rceil$ times, we finally get:
\begin{linenomath*}\[ \Pr\left(\exists s\leq\lceil 9\alpha\rceil \frac{m}{10\alpha}, Z_1(s) \geq \lceil 9\alpha\rceil\frac{\lmax}{4}\right) \geq c_\alpha^{\lceil 9\alpha\rceil},\] \end{linenomath*}
which is a positive constant. Because $\alpha>1$, this implies $\Pr(\exists s\leq m, Z_1(s) \geq 2\alpha\lmax )=\Omega(1)$. As $2\alpha\lmax\geq 2m$ and $\norm{Z}(s)\geq \lvert Z_1(s)\rvert$, this, in turn, implies $\Pr(A_{far})=\Omega(1)$, completing the proof of Claim~\ref{claim:zm-more-than-2m}.\end{proof}

Next, conditioning on $\mathcal{A}_{far}$, we write:
\begin{linenomath*}\begin{align}\label{eq:firstm}
\Pr(\norm{Z(m)} \geq m \mid \mathcal{A}_{far})&\geq \min_{s\leq m} \Pr(\norm{Z(m)} \geq m \mid \norm{Z(s)}\geq 2m )\\
     &\geq \min_{s\leq m} \Pr(\norm{Z(m-s)} \leq  m), \label{eq:lastm}
\end{align}\end{linenomath*}
where we used the Markov property, and the spatial homogeneity of the process, in the latter inequality. In words, in the r.h.s.~of Inequality~\eqref{eq:firstm}, we examine the probability to be at a high distance (i.e., $m$), knowing that the process was even further (at some point $x$ at distance at least~$2m$). In Inequality~\eqref{eq:lastm} we bound this by the probability of staying within distance $m$.

 By the first item of Claim \ref{claim:X-m-less-than-cm},
the r.h.s~of Inequality~\eqref{eq:lastm} is at least some positive constant (again, independent of $m$). Overall, for any $m\geq 2$, we have:
\begin{linenomath*}\begin{align*}
     \Pr(\norm{Z(m)} \geq m ) \geq \Pr(\norm{Z(m)} \geq   m \mid \mathcal{A}_{far} ) \cdot 
     \Pr(\mathcal{A}_{far}) \geq \gamma,
\end{align*}\end{linenomath*}
for some constant $\gamma>0$ (independent of $m$). Next, using the second item of Claim \ref{claim:X-m-less-than-cm}, with $\delta=1-\frac{\gamma}{2}$, we get that there exists a large enough constant $c'>0$ (again, independent of $m$), such that: 
\begin{linenomath*}\begin{equation}\label{eq:alpha'}
    \Pr(\norm{ Z(m)} \leq c' m) \geq \delta.
    \end{equation}\end{linenomath*}
Hence, using a union bound argument, we have:
\begin{linenomath*}\begin{align*} 
\Pr(m\leq \norm{Z(m)} \leq c'm)
&\geq \Pr(\norm{Z(m)} \geq  m) + \Pr(\norm{ Z(m)} \leq c'm) -1 \\ 
&\geq \gamma+\delta-1=\frac{\gamma}{2}>0.
\end{align*}\end{linenomath*}
This establishes Eq.~\eqref{eq:far} and thus concludes the proof of Lemma \ref{LemLB}.
\end{proof}

\subsubsection{Proof of Lemma \ref{LemUB} (upper bound)}\label{sec:upperbound}

This section is dedicated to the proof of Lemma \ref{LemUB}:
\LemUB
\begin{proof}
Let $\alpha>0$ and $m\in [2,\alpha \lmax]$. Due to the 
 monotonicity property stated in Corollary \ref{cor:monotonicity}, it is sufficient to prove this result for $x=0$. Indeed, for any $x\in \R^2$,  the sets $B(0)\setminus B(x)$ and $B(x)\setminus B(0)$ have the same area $A$, and
 
\begin{linenomath*}\begin{align*} \Pr\left (Z(m)\in B(x)\setminus B(0)\right )&\leq A \max_{y\in B(x)\setminus B(0)}  \{ p^{\norm{Z(m)}}(y)\}   \\& \leq A\min_{y\in B(0)\setminus B(x)}\rvert \{ p^{\norm{Z(m)}}(y)\} \\& \leq \Pr\left (Z(m)\in B(0)\setminus B(x)\right ),
\end{align*}\end{linenomath*}
 where the second inequality is due to the monotonicity property and the fact that any point in  $B(x)\setminus B(0)$ is at distance more than 1 from the origin, and hence, further from 0 than any point in $B(0)\setminus B(x)$.
 This shows that $\Pr(Z(m)\in B(x))\leq \Pr(Z(m)\in B(0))$, hence it is 
 sufficient to prove the required upper bound for $x=0$.

Intuitively, to establish this, we say that with high probability, there is some step $s\leq m$ for which $Z(s)$ is ``distant'' (at least $cm/\log m$). Conditioning on this, the probability to be located in $B(0)$ at step $m$ is found out to be small, due to the monotonicity of the process (Corollary \ref{cor:monotonicity}). Formally, consider a (small) positive constant $c$, and let $\mathcal{A}$ be the event that there is some $s\leq m$ for which $\norm{Z(s)}\geq cm/\log m$.

Consider $B(0)$ the ball of radius $1$ with center $0$. Write
\begin{linenomath*}\begin{align} \Pr(Z(m)\in B(0))&= \Pr(Z(m)\in B(0) \cap \mathcal{A})+\Pr(Z(m)\in B(0) \cap \neg \mathcal{A}) \nonumber \\
&\leq \Pr(Z(m)\in B(0)\mid \mathcal{A})+\Pr(\neg \mathcal{A}), \label{eq:Z-in-M-cond-distant}
\end{align}\end{linenomath*}
By the first item of Claim \ref{claim:X-far}, taking $c$ to be sufficiently small, we have 
\begin{linenomath*}\[\Pr(\neg \mathcal{A})=O(m^{-2}).\] \end{linenomath*}
In order to express the remaining term of Eq.~\eqref{eq:Z-in-M-cond-distant}, we will denote in the following equation $Z^x$ the Cauchy process on $\R^2$ with cut off $\ell_{max}$ starting with $Z(0)=x$. Since our process was defined to start at $0$, we have $Z=Z^0$. Remark that the law of $Z^x$ is obtained by a translation of that of $Z^0$. With this notation in mind, we have, using the Markov property for the second inequality:
\begin{linenomath*}\begin{align*}
  \Pr(Z^0(m)\in B(0)\mid \mathcal{A})  &\leq \max_{s\leq m}\Pr(Z^0(m)\in B(0)\mid \norm{Z^0(s)}\geq cm/\log m) \\
  &\leq \max_{s\leq m}\sup_{\norm{x}\geq cm/\log m} \Pr(Z^x(m-s)\in B(0))  \\
 &= \max_{s\leq m}\sup_{\norm{x}\geq cm/\log m} \Pr(Z^x(s)\in B(0)) \\
 &= \max_{s\leq m}\sup_{\norm{x}\geq cm/\log m} \Pr(Z^0(s)\in B(-x) )\\
 &= \max_{s_\leq m}\sup_{\norm{x}\geq cm/\log m}  \Pr(Z(s)\in B(x))\end{align*}\end{linenomath*}
Use now Corollary~\ref{cor:monotonicity} that gives $p^{Z(m)}(x)\leq \frac{1}{\pi \norm{x}^2}$. Hence, for any $x\in \R^2$ with $\norm{x}>1$, we have \begin{linenomath*}\[\Pr(Z(m)\in B(x))=\int_{B(x)} p^{Z(m)}(y) dy \leq \int_{B(x)} \frac{1}{\pi (\norm{x}-1)^2}dy= \frac{1}{(\norm{x}-1)^2}.\] \end{linenomath*}
Let $m(c)$ be the largest integer $m>0$ such that $cm/\log m\leq 2$. For $m>m(c)$, we have
\begin{linenomath*}\begin{align*}
    \Pr\left (Z(s)\in B(x)\right ) & \leq \max_{s\leq m} \frac{1}{(cm\log m-1)^2}=\frac{1}{(cm\log m-1)^2} 
\end{align*}\end{linenomath*}
Overall, we find that, for $m>m(c)$
\begin{linenomath*}\[  \Pr\left (Z(m)\in B(0)\right ) \leq \frac{1}{(cm/\log m-1)^2} + \frac{c'}{m^2}, \] \end{linenomath*}
which we can bound by $\frac{c_2\log^2 m}{m^2}$ for some constant $c_2>0$. Since $m(c)$ is a constant, there is some other constant $c_3>0$ for which, for any $m\in [2,m(c)]$, we have $\Pr(Z(m)\in B(0))\leq \frac{c_3\log^2 m}{m^2}$. We then obtain, for any $m\geq 2$,
\begin{linenomath*}\[  \Pr(Z(m)\in B(0)) \leq  \frac{\max\{c_2,c_3\} \log^2 m}{m^2}, \] \end{linenomath*}
which concludes the proof of Lemma \ref{LemUB}.
\end{proof}








\end{document}